\documentclass[3p]{elsarticle}

\usepackage[english]{babel}
\usepackage[latin1]{inputenc} 
\usepackage[T1]{fontenc}
\usepackage{epsfig}
\usepackage{graphicx}
\usepackage{amssymb}
\usepackage{amsfonts}
\usepackage{amsmath}
\usepackage{verbatim}
\usepackage{url}
\usepackage[usenames]{color}
\usepackage{gastex}
\usepackage{pstricks}
\usepackage{floatflt}
\usepackage{stmaryrd}

\newcommand{\POSentiers}{\mathbb{N} \setminus \{0\}}
\newcommand{\fo}[2]{{\rm FO}^{#1}(#2)}
\newcommand{\fon}[2]{{\rm FO}_{#2}^{#1}}
\newcommand{\fLTL}[2]{{\rm LTL}^{#1}_{#2}}
\newcommand{\fMC}[2]{{\rm MC(LTL)}^{#1}_{#2}}
\newcommand{\foMC}[2]{{\rm MC(FO)}^{#1}_{#2}}
\newcommand{\fotwoMC}[2]{{\rm MC(FO)}^{#1}_{2}}
\newcommand{\fpureMC}[2]{{\rm PureMC(LTL)}^{#1}_{#2}}
\newcommand{\fopureMC}[2]{{\rm PureMC(FO)}^{#1}_{#2}}

\newcommand{\defeq}{\overset{\textsf{def}}{\Leftrightarrow}}

\newcommand{\set}[1]{\{ #1 \}}
\newcommand{\bigset}[1]{\big\{ #1 \big\}}
\newcommand{\interval}[2]{\{ #1, \ldots,#2 \}}

\newcommand{\pair}[2]{\langle #1,#2 \rangle}
\newcommand{\triple}[3]{\langle #1,#2,#3 \rangle}
\newcommand{\quadruple}[4]{\langle #1,#2,#3,#4 \rangle}
\newcommand{\Nat}{\mathbb{N}}

\newcommand{\domain}[1]{{\rm dom}(#1)}

\newcommand{\length}[1]{|#1|}

\newcommand{\avarprop}{{\rm p}}



\newcommand{\aformula}{\phi}
\newcommand{\aformulabis}{\psi}
\newcommand{\aformulater}{\varphi}


\newcommand{\mynext}{\mathtt{X}}

\newcommand{\until}{\mathtt{U}}

\newcommand{\sometimes}{\mathtt{F}}

\newcommand{\always}{\mathtt{G}}

\newcommand{\ptime}{\textsc{PTime}}

\newcommand{\pspace}{\textsc{PSpace}}

\newcommand{\egdef}{\stackrel{\mbox{\begin{tiny}def\end{tiny}}}{=}}

\newcommand{\locs}{Q}
\newcommand{\aloc}{q}
\newcommand{\alocbis}{p}

\newcommand{\aconf}{c}

\newcommand{\aautomaton}{\mathcal{A}}
\newcommand{\aautomatonbis}{\mathcal{B}}

\newcommand{\step}[1]{\xrightarrow{\!\!#1\!\!}}

\newcommand{\arun}{\rho} 

\newcommand{\infin}{\omega}
\newcommand{\fin}{*}
\newcommand{\aalphabet}{\Sigma}
\newcommand{\finwords}{\aalphabet^*}
\newcommand{\infwords}{\aalphabet^\omega}
\newcommand{\finandinfwords}{\aalphabet^\infty}
\newcommand{\aletter}{a}

\newcommand{\adataword}{\sigma}
\newcommand{\aregval}{v}
\newcommand{\avarval}{u}

\newcommand{\amap}{f}
\newcommand{\atranslation}{{\rm T}}

\newcommand{\QBF}{\textsf{QBF}}

\newcommand{\avariable}{\mathtt{x}}
\newcommand{\avariablebis}{\mathtt{y}}
\newcommand{\avariableter}{\mathtt{z}}

\newcommand{\atransition}{t}

%
%
\newcommand{\aregister}{r}

\newcommand{\inc}{\mathtt{inc}}
\newcommand{\dec}{\mathtt{dec}}
\newcommand{\ifzero}{\mathtt{ifzero}}

\newif\iffossacs
\fossacsfalse

\newif\ifptime
\ptimefalse


{\itshape}{\rmfamily}
\newtheorem{corollary}{Corollary}{\bfseries}{\itshape}
{\bfseries}{\itshape}
\newtheorem{example}{Example}
{\itshape}{\rmfamily}
\newtheorem{theorem}{Theorem}{\bfseries}{\itshape}
\newtheorem{lemma}{Lemma}{\bfseries}{\itshape}
{\itshape}{\rmfamily}
{\itshape}{\rmfamily}
\newtheorem{proposition}{Proposition}{\bfseries}{\itshape}
{\itshape}{\rmfamily}
{\itshape}{\rmfamily}
{\itshape}{\rmfamily}
\newenvironment{proof}{\par\textit{Proof.}}

\makeatletter
\let\c@definition\c@theorem
\let\c@lemma\c@theorem
\let\c@corollary\c@theorem
\let\c@proposition\c@theorem
\makeatother

\begin{document}
\begin{frontmatter}
\title{Model checking  memoryful linear-time logics over one-counter 
automata\tnoteref{averiss}}
\tnotetext[averiss]{Supported by the Agence Nationale de la Recherche, grant ANR-06-SETIN-001.}

\author[LSV]{St\'ephane Demri}
\address[LSV]{LSV, ENS Cachan, CNRS, INRIA Saclay IdF, France
}
\author[Warwick]{Ranko Lazi{\'c}}
\address[Warwick]{Department of Computer Science, University of Warwick, UK
}
\author[EDF]{Arnaud Sangnier}
\address[EDF]{LSV, ENS Cachan, CNRS \& EDF R\&D, France
}


\begin{abstract}

We study  complexity of the model-checking problems for LTL with registers (also known as freeze LTL and written $\fLTL{\downarrow}{}$)
and for first-order logic with data equality tests (written $\fo{}{\sim,<,+1}$)
over one-counter automata. We consider several
classes of one-counter
automata (mainly deterministic vs. nondeterministic)
and
several logical  fragments (restriction on the number of registers or variables and on the use of
propositional variables for control states). The logics have the ability to store a counter value and
to test it later against the current counter value.
We show that model checking  $\fLTL{\downarrow}{}$ and $\fo{}{\sim,<,+1}$ over deterministic one-counter automata 
 is  \pspace-complete with infinite and finite accepting runs.  
By constrast, we prove that model checking   $\fLTL{\downarrow}{}$  in which the until operator 
$\until$ is restricted to the eventually $\sometimes$
over nondeterministic one-counter automata  is $\Sigma_1^1$-complete 
[resp.\ $\Sigma_1^0$-complete] in the infinitary [resp.\ finitary] case
even if only one register is used and with no propositional variable.
As a corollary of our proof, this also holds for $\fo{}{\sim,<,+1}$ restricted to two variables (written 
$\fon{}{2}(\sim,<,+1)$). 
This makes a difference with the facts that  several verification problems for one-counter
      automata are known to be decidable with relatively 
      low complexity, and that finitary satisfiability for 
      $\fLTL{\downarrow}{}$ and $\fon{}{2}(\sim,<,+1)$ are decidable. 
Our results pave the way
for  model-checking  memoryful (linear-time) logics
over other classes of operational models, such as reversal-bounded counter machines.
\end{abstract}

\begin{keyword}
one-counter automaton, temporal logic, first-order logic, computational complexity
\end{keyword}
\end{frontmatter}

\section{Introduction}

\textit{Logics for data words.} Data words are sequences in which each position is labelled by a letter  
from a finite alphabet and by another letter from an infinite alphabet (the datum). 
This fundamental and simple model arises in systems that are potentially unbounded in some way.
Typical examples  are
runs of counter systems~\cite{Minsky67}, 
timed words accepted by timed automata~\cite{Alur&Dill94}
and runs of systems with unboundedly many parallel components (data are component indices)~\cite{Bjorklund&Bojanczyk07b}.
The extension to trees makes also sense to model XML documents with values, 
see e.g.~\cite{Bojanczyketal09,Bjorklund&Bojanczyk07,Jurdzinski&Lazic07}. 
In order to really speak about data, known logical formalisms for 
data words/trees contain a mechanism that stores
a value and tests it later against other values, 
see e.g.~\cite{Bojanczyketal06a,Demri&Lazic09}.
This is a powerful feature shared by other 
memoryful temporal
logics~\cite{Laroussinie&Markey&Schnoebelen02,Kupferman&Vardi06}.
However, the satisfiability problem for these logics becomes easily undecidable even when
stored data can be tested only for equality.
For instance,
first-order logic for data words restricted to three individual variables is 
undecidable~\cite{Bojanczyketal06a} and LTL with registers (also known as freeze LTL) 
restricted
to a single register is  undecidable over infinite
data words~\cite{Demri&Lazic09}. By contrast, decidable fragments of the satisfiability problems
have been found in~\cite{David04,Bojanczyketal06a,Demri&Lazic&Nowak06,Demri&Lazic09,Lazic06} either by imposing
syntactic restrictions (bound the number of registers, constrain the
polarity of temporal formulae, etc.) or 
by considering subclasses of data words (finiteness for example). 
Similar phenomena occur with metric temporal logics and 
timed words~\cite{Ouaknine&Worrell06a,Ouaknine&Worrell07}.
A key point for all these logical formalisms is the ability to store a value from an infinite
alphabet, which is a feature also present in models of register 
automata, see e.g.~\cite{Bouyer&Petit&Therien03,Neven&Schwentick&Vianu04,Segoufin06,Bjorklund&Schwentick07}.
However, the storing mechanism has a long tradition (apart from its ubiquity in
programming languages) since
 it appeared for instance in real-time
logics~\cite{Alur&Henzinger89} (the data are time values) and in so-called hybrid 
logics (the data are node addresses), see an early undecidability
result with reference pointers in~\cite{Goranko96}. Meaningful restrictions for hybrid logics
can also lead to decidable fragments, see e.g.~\cite{Schwentick&Weber07}.
%
%

\textit{Our motivations.} In this paper, our main motivation is to analyze 
the effects of adding a binding mechanism with registers 
to specify runs of operational models such as pushdown systems and counter automata.
The registers are simple means to compare data values at different points of the execution.
Indeed, runs can be
naturally viewed as data words: for example, the finite alphabet is the set of control states and  the infinite alphabet
is the set of  data values (natural numbers, stacks, etc.). 
To do so, we enrich 
an ubiquitous logical formalism for model-checking
techniques, namely  linear-time temporal logic LTL, with registers. 
Even though this was the initial motivation
to introduce LTL with registers in~\cite{Demri&Lazic&Nowak06}, most decision problems considered 
in~\cite{Demri&Lazic&Nowak06,Lazic06,Demri&Lazic09} are essentially oriented towards satisfiability. 
In this paper, we focus on the following type of model-checking problem: given a set of
runs generated by an operational model, more precisely by a one-counter automaton, and a formula
from LTL with registers, is there a run satisfying the given formula? In our context, it will become
clear that the extension with two counters is
undecidable. It is not difficult to show
that this model-checking problem differs from those considered in~\cite{Lazic06,Demri&Lazic&Nowak06}
and from those in~\cite{Franceschet&deRijke&Schlingloff03,Franceschet&DeRijke06,tenCate&Franceschet05}
dealing with so-called hybrid logics.
However, since  two consecutive counter values in a run are ruled by the
set of transitions, constraints on data that are helpful to get fine-tuned undecidability proofs for 
satisfiability problems in~\cite{Demri&Lazic&Nowak06,Demri&Lazic09} may not be allowed on runs.
This is precisely what we want to understand in this work. 
As a second main motivation, we would like to compare the results on LTL with registers
with those for first-order logic with data equality tests.
Indeed, LTL (with past-time operators) and first-order logic are equivalently expressive
by Kamp's theorem, but such a correspondence in presence of data values is not known. 
Our investigation about the complexity of model-checking one-counter automata with memoryful logics
include then first-order logic. 

\textit{Our contribution.} We study  complexity issues
related to the model-checking problem for LTL with registers
over one-counter automata that are  simple operational models, but
our undecidability results can be obviously
      lifted to pushdown systems when registers store the stack value.
Moreover, in order to determine borderlines for decidability, we 
also present results
for deterministic one-counter models that are less powerful 
but remain interesting when they are viewed as a mean to specify an infinite path 
on which model checking is performed, see analogous issues in~\cite{Markey&Schnoebelen03}.

We consider several
classes of one-counter
automata 
\iffossacs
(deterministic and nondeterministic)
\else
(deterministic, weakly deterministic and nondeterministic)
\fi 
and
several fragments by restricting the use of registers or the use of letters
from the finite alphabet. Moreover, we distinguish finite accepting runs
from infinite ones as data words. Unlike  results 
from~\cite{Ouaknine&Worrell06a,Ouaknine&Worrell07,Demri&Lazic09,Lazic06},
the decidability status of the model checking does not depend on the fact that we consider
finite data words instead of infinite ones. 
In this paper, we establish the following results.
\begin{itemize}
\itemsep 0 cm
\item Model checking LTL with registers [resp. \ first-order logic with data equality test]
       over deterministic one-counter automata  is \pspace-complete
      (see Sect.~\ref{section-pspace}). 
      \pspace-hardness is established 
      by reducing QBF and it also holds when no letters from the finite alphabet are used
      in formulae.
      \ifptime 
      When the number of registers [resp. variables] is bounded,
      the problem can be solved in polynomial time. 
      \fi 
      In order to get these complexity upper bounds,
      we  translate our problems into model-checking first-order logic without data equality test
      over ultimately periodic words that can be solved in polynomial space thanks 
      to~\cite{Markey&Schnoebelen03}. 
\item Model checking LTL with registers over nondeterministic one-counter automata restricted
      to a unique register and without alphabet is $\Sigma_1^1$-complete in the infinitary case
      by reducing the recurrence problem for Minsky machines (see Sect.~\ref{section-undecidability}).
      In  the finitary case, the problem is shown $\Sigma_1^0$-complete by reducing
      the halting problem for Minsky machines. 
      These results are quite surprising since several verification problems for one-counter
      automata are  decidable with relatively 
      low
      \iffossacs 
      complexity~\cite{Jancaretal04,Serre06}. 
       \else
      complexity~\cite{Jancaretal04,Serre06,Demri&Gascon07}. 
       \fi 
      Moreover, finitary satisfiability for LTL with one register  is decidable~\cite{Demri&Lazic09}
       even though with non-primitive recursive complexity. 
      These results can be also obtained for first-order logic with data equality test restricted to two variables
      by analysing the structure of formulae used in the undecidability proofs and by using~\cite{Demri&Lazic09}.
\end{itemize}
Figure~\ref{figure-table} contains a summary of the main results we obtained; notations are fully explained in 
Section~\ref{section-preliminaries}. For instance, 
$\fMC{\infin}{1}[\mynext,\sometimes]$ refers to the existential model-checking problem
on infinite accepting runs from one-counter automata with freeze LTL restricted to the temporal operators
``next'' and ``sometimes'', and to a unique register. Similarly, 
$\fotwoMC{\infin}{}[\sim,<]$ refers to 
the existential model-checking problem
on finite accepting runs from one-counter automata with first-order logic on data words restricted to two individual
variables. 

\begin{figure}
{\small 
 \begin{center}
\begin{tabular}{|c|c|c|}
\hline
\pspace-completeness  & $\Sigma_1^0$-completeness & $\Sigma_1^1$-completeness \\
 for det. 1CA & for 1CA & for 1CA \\
\hline
 $\fMC{\infin}{}$, $\fMC{\fin}{}$  &  $\fMC{\fin}{1}[\mynext,\sometimes]$ & 
      $\fMC{\infin}{1}[\mynext,\sometimes]$ \\
$\fMC{\infin}{}[\sometimes]$, $\fMC{\fin}{}[\mynext,\sometimes]$ & $\fpureMC{\fin}{1}[\mynext,\sometimes]$ & $\fpureMC{\infin}{1}[\mynext,\sometimes]$ \\ 
\hline
$\foMC{\infin}{}$, $\foMC{\fin}{}$ & $\fotwoMC{\fin}{}[\sim,<]$ & $\fotwoMC{\infin}{}[\sim,<]$ \\
$\foMC{\infin}{}[\sim,<]$ & & \\
\hline
\end{tabular}
\end{center}
} 
\caption{Summary of main results}
\label{figure-table}
\end{figure}

\iffossacs

Because of lack of space, omitted proofs can be found in \cite{long-version}.
\else
\textit{Plan of the paper.} In Sect.~\ref{section-preliminaries}, 
we introduce the model-checking problem for LTL with registers over
one-counter automata as well as the corresponding problem for
first-order logic with data equality test. 
In Sect.~\ref{section-decidability}, we consider decidability and complexity issues
for model checking deterministic one-counter automata. 
In Sect.~\ref{section-undecidability}, several model-checking problems over nondeterministic one-counter automata
are shown undecidable. \\

\noindent
This paper is an extended version of~\cite{Demri&Lazic&Sangnier08a}
that also improves significantly the results about the \pspace \ upper bounds and the undecidability results,
in particular by considering first-order language over data words.

\fi

 \section{Preliminaries}
\label{section-preliminaries}

%
\subsection{One-counter automaton}
%

Let us recall standard definitions and notations about our operational models.
\iffossacs
A one-counter automaton is a tuple $ \aautomaton = \triple{\locs,\aloc_I}{\delta}{F}$ where
$\locs$ is a finite set of states,
$\aloc_{I} \in \locs$ is the initial state,
$F \subseteq \locs$ is the set of accepting states and
$\delta \subseteq \locs \times L \times \locs$ is the transition relation
over the instruction set $L =  \set{\mathtt{inc, dec, ifzero}}$.
\else
A one-counter automaton is a tuple $ \aautomaton = \triple{\locs,\aloc_I}{\delta}{F}$ where:
\begin{itemize}
\itemsep 0 cm 
\item $\locs$ is a finite set of states,
\item $\aloc_{I} \in \locs$ is the initial state,
\item $F \subseteq \locs$ is the set of accepting states,
\item $\delta \subseteq \locs \times L \times \locs$ is the transition relation
over the instruction set $L =  \set{\mathtt{inc, dec, ifzero}}$.
\end{itemize} 
\fi
%
%
A counter valuation $\aregval$ is  an element of $\Nat$ 
and a configuration of $\aautomaton$ is a pair in $\locs \times \Nat$. The initial configuration is the pair 
$\pair{q_{I}}{0}$. 
As usual, a one-counter automaton $\aautomaton$ induces a (possibly infinite) transition
system $\pair{\locs \times \Nat}{\step{}}$ such that $\pair{\aloc}{n} \step{} \pair{\aloc'}{n'}$
iff one of the conditions below holds true:
\iffossacs
(1) $\triple{\aloc}{\mathtt{inc}}{\aloc'} \in \delta$ and $n' = n+1$,
(2) $\triple{\aloc}{\mathtt{dec}}{\aloc'} \in \delta$ and $n' = n-1$ (and $n' \in \Nat$),
(3) $\triple{\aloc}{\mathtt{ifzero}}{\aloc'} \in \delta$ and $n = n' = 0$. 
\else
\begin{enumerate}
\itemsep 0 cm
\item $\triple{\aloc}{\mathtt{inc}}{\aloc'} \in \delta$ and $n' = n+1$,
\item $\triple{\aloc}{\mathtt{dec}}{\aloc'} \in \delta$ and $n' = n-1$ (and $n' \in \Nat$),
\item $\triple{\aloc}{\mathtt{ifzero}}{\aloc'} \in \delta$ and $n = n' = 0$. 
\end{enumerate}
\fi
A finite [resp.\ infinite] \emph{run} $\arun$ 
is a finite [resp.\ infinite] sequence
$\arun = \pair{\aloc_0}{n_0} \step{} \pair{\aloc_1}{n_1} \step{} \cdots$
where $\pair{\aloc_0}{n_0}$ is the initial configuration. 
A finite run $\arun=\pair{\aloc_0}{n_0} \step{} \pair{\aloc_1}{n_1} \step{} \cdots \step{} \pair{\aloc_f}{n_f}$ is \emph{accepting} iff $\aloc_f$ is an accepting state. 
An infinite run $\arun$ is accepting iff it contains an accepting state
infinitely often (B\"uchi acceptance condition).
\iffossacs \else All these notations can be naturally adapted to multicounter automata. \fi

A one-counter automaton $\aautomaton$ is \emph{deterministic} whenever it 
corresponds to a deterministic one-counter Minsky machine: for every state $\aloc$, 
\iffossacs
either $\aautomaton$ has a unique transition from $\aloc$ incrementing the counter, 
or $\aautomaton$ has exactly two transitions from $\aloc$, one with
instruction $\mathtt{ifzero}$ and the other with instruction
$\mathtt{dec}$,
or $\aautomaton$ has no transition from $\aloc$ (not present in original deterministic
Minsky machines). 
\else
\begin{itemize}
\itemsep 0 cm
\item either $\aautomaton$ has a unique transition from $\aloc$ incrementing the counter, 
\item or $\aautomaton$ has exactly two transitions from $\aloc$, one with
      instruction $\mathtt{ifzero}$ and the other with instruction
      $\mathtt{dec}$,
\item or $\aautomaton$ has no transition from $\aloc$ (not present in original deterministic
Minsky machines~\cite{Minsky67}). 
\end{itemize}
\fi
In the transition system induced by any deterministic one-counter automaton, 
each configuration has at most one successor.
One-counter automata in full generality are understood as 
\emph{nondeterministic} one-counter automata.


\subsection{LTL over data words}


Formulae of the logic $\fLTL{\downarrow,\aalphabet}{}$~\cite{Demri&Lazic09}
where $\aalphabet$ is a finite alphabet are defined as follows:
$$
\begin{array}{lcl}
\aformula & ::= &\aletter~ \mid~ \uparrow_r \ \mid \ \neg \aformula~ \mid~ 
\aformula \wedge \aformula~ \mid~ \aformula \until \aformula~ \mid~ \mynext  \aformula~ \mid~ 
\downarrow_r \aformula
\end{array}
$$
where $\aletter \in \aalphabet$ and $r$ ranges over  $\POSentiers$.
We write $\fLTL{\downarrow}{}$ to denote LTL with registers for some unspecified finite alphabet.
An occurrence of $\uparrow_r $ within the scope of some freeze quantifier $\downarrow_r$ is bound by it; 
otherwise it is free. A sentence is a formula with no free occurrence of any $\uparrow_r $. 
Given  a natural number
$n > 0$,
we write  $\fLTL{\downarrow,\aalphabet}{n}$ to denote the restriction
of  $\fLTL{\downarrow,\aalphabet}{}$ to registers in $\set{1,\ldots,n}$. 
Models of $\fLTL{\downarrow,\aalphabet}{}$  are \emph{data words}. 
A data word $\adataword$ 
over a finite alphabet $\aalphabet$ is a non-empty word in $\finwords$ or $\infwords$,
 together with an equivalence relation $\sim^{\adataword}$  on word indices. 
We write $\length{\adataword}$ for the 
length of the data word, $\adataword(i)$ for its letters where $0 \leq i < \length{\adataword}$.
\iffossacs
\else
Let $\finwords(\sim)$ 
[resp.\ $\infwords(\sim)$] denote the sets of all such finite [resp.\ infinite] data words. We denote by $\finandinfwords(\sim)$ the set $\finwords(\sim) \cup \infwords(\sim)$ of finite and infinite data words.
\fi
%

A \emph{register valuation} $\aregval$ for a data word 
$\adataword$ is a finite partial map from $\POSentiers$ to the indices of $\adataword$. 
Whenever $\aregval(r)$ is undefined, the formula $\uparrow_r$ is interpreted as false. 
Let $\adataword$ be a data word in $\finandinfwords(\sim)$  and $0 \leq i < \length{\adataword}$, the satisfaction relation $\models$ is defined as follows (Boolean clauses are omitted). 
$$
\begin{array}{rcl}
\adataword,i \models_{\aregval}~ \aletter & ~\defeq~ & \adataword(i)= \aletter \\
\adataword,i \models_{\aregval}~ \uparrow_r  & ~\defeq~ & r \in \domain{\aregval} 
\mbox{ and } \aregval(r) \sim^{\adataword} i \\
\adataword,i \models_{\aregval}~ \mynext \aformula & ~\defeq~ & i+1 < \length{\adataword}  \mbox{ and } 
\adataword,i+1 \models_{\aregval}~ \aformula \\
\adataword,i \models_{\aregval}~ \aformula_1 \until \aformula_2 & ~\defeq~ & 
\mbox{ for some }  i \leq j < \length{\adataword}, ~ \adataword,j \models_{\aregval}~ \aformula_2 \\
& & \mbox{ and for all } i \leq j' <j, \mbox{ we have } ~ \adataword,j' \models_{\aregval}~ \aformula_1\\
\adataword,i \models_{\aregval}~ \downarrow_r \aformula & ~\defeq~ & \adataword,i 
\models_{\aregval[r \mapsto i]}~ \aformula \\
\end{array}
$$
$\aregval[r \mapsto i]$ denotes the register valuation equal to $\aregval$ 
except that the register $r$ is mapped to the position $i$. 
In the sequel, we omit the subscript ``$\aregval$'' in 
$\models_{\aregval}$ when sentences are involved. 
We use the standard abbreviations for the temporal operators 
($\always$, $\sometimes$, $\always^{+}$, $\sometimes^{+}$, 
\ldots) and for the Boolean operators and constants ($\vee$, $\Rightarrow$, 
$\top$, $\perp$, \ldots).
The finitary [resp.\ infinitary] 
satisfiability problem for LTL with registers, noted $\fin$-SAT-LTL$^{\downarrow}$ [resp.\ 
 $\infin$-SAT-LTL$^{\downarrow}$],
is defined as
follows: 
\iffossacs
given  a finite alphabet $\aalphabet$ and a formula $\aformula$ in 
              $\fLTL{\downarrow,\aalphabet}{}$, is there an infinite [resp.\ a finite]
                 data word $\adataword$ such that $\adataword, 0 \models \aformula$?
\else
\begin{description}
\itemsep 0 cm
\item[Input:] A finite alphabet $\aalphabet$ and a formula $\aformula$ in 
              $\fLTL{\downarrow,\aalphabet}{}$;
\item[Question:] Is there a finite [resp.\ an infinite]
                 data word $\adataword$ such that $\adataword, 0 \models \aformula$?
\end{description}
\fi

\begin{theorem} \cite[Theorem 5.2]{Demri&Lazic09} \label{th:undecid-ltl}
$\fin$-SAT-LTL$^{\downarrow}$ restricted to  one register 
is decidable with non-primitive recursive complexity and $\infin$-SAT-LTL$^{\downarrow}$ restricted to  one register is $\Pi^0_1$-complete.
\end{theorem}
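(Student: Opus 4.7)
The strategy is to reduce satisfiability of $\fLTL{\downarrow}{1}$ to an emptiness problem for a well-behaved class of automata on data words, and then to locate that emptiness problem at the right complexity level. First I would translate any sentence $\aformula$ of $\fLTL{\downarrow}{1}$ into an equivalent one-register alternating data word automaton $\aautomaton_{\aformula}$, with states corresponding to (negation-normal) subformulae of $\aformula$ and transitions mimicking the temporal semantics: a freeze $\downarrow_1 \aformulabis$ becomes a command that stores the current data value, and an occurrence of $\uparrow_1$ triggers an equality test against the stored value. Acceptance is reachability of a final macrostate for finite models and a B\"uchi condition for infinite ones. Satisfiability of $\aformula$ then coincides with non-emptiness of $\aautomaton_{\aformula}$.

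Second, for the finitary decidability bound I would view $\aautomaton_{\aformula}$ as a \emph{well-structured transition system}. A configuration is a finite set of (state, register-content) pairs, where the register content is either undefined or an abstract token standing for an equivalence class of data values already seen. I would equip these configurations with a sub-multiset quasi-ordering quotiented by renaming of data-tokens, and verify that this is a well-quasi-ordering (WQO) using Dickson's/Higman's lemma, and that transitions are monotone (indeed, downward-compatible in a lossy sense: discarding obligations can only help). The generic backward-reachability algorithm for WSTS then decides emptiness, and non-primitive recursive complexity is obtained by a reduction to and from lossy counter machines, exploiting the match between forgetting stale register constraints and lossy decrements.

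Third, for the infinitary part I would establish $\Pi_1^0$-hardness by a reduction from the \emph{non-halting} problem for deterministic Minsky machines $M$. I would build a sentence $\aformula_M$ using a single register asserting that the data word codes an infinite faithful run of $M$: control states are encoded in a finite alphabet, and register freezes enforce the consistency of counter values between a position marking a ``$+1$'' and its matching ``$-1$''. Then $M$ does not halt iff $\aformula_M$ has an infinite data-word model, so infinitary satisfiability inherits $\Pi_1^0$-hardness from non-halting. For the matching $\Pi_1^0$ upper bound I would show that \emph{unsatisfiability} over infinite words is recursively enumerable by combining the finitary WSTS procedure with a K\"onig-style argument: a formula is infinitarily unsatisfiable iff the tree of finite prefixes of accepting partial runs of $\aautomaton_{\aformula}$, pruned using the decidable finitary check, is finite, a property that is effectively semi-decidable.

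The chief obstacle is the WSTS step: designing the abstraction of configurations so that (i) the induced quasi-ordering is genuinely a WQO on an infinite-data domain, (ii) the transition relation is monotone with respect to it, and (iii) predecessor configurations can be effectively computed. The delicate point is that register contents range over an infinite set, so the ordering must quotient by data-value permutations while still being stable under freezing and equality testing; extracting the sharp non-primitive recursive bound then requires a careful correspondence with lossy counter machines, which is where the Ackermannian blow-up originates.
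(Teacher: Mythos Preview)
This theorem is not proved in the paper under review: it is quoted from~\cite{Demri&Lazic09} as a background result, with no argument supplied. There is therefore no proof here against which to compare your proposal.

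That said, your outline is broadly in the spirit of the cited source (translation into one-register alternating automata on data words, and analysis of their nonemptiness via well-structured transition systems and faulty counter machines). One step, however, is a genuine gap. For $\Pi^0_1$-hardness you claim a one-register sentence $\aformula_M$ whose infinite models are exactly the \emph{faithful} runs of a Minsky machine $M$. Such an encoding cannot exist with a single register and purely future operators: if it did, applying the same construction to a nondeterministic $M$ together with a B\"uchi constraint $\always\sometimes\,\mathit{accept}$ would reduce the $\Sigma^1_1$-complete recurrence problem for Minsky machines to $\infin$-SAT-LTL$^{\downarrow}$ with one register, contradicting the $\Pi^0_1$ upper bound you are simultaneously arguing. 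What one register \emph{does} capture is a one-sided matching (every increment has a later decrement with the same datum), which corresponds to counter automata with incrementing errors; in~\cite{Demri&Lazic09} hardness is obtained through that faulty class, whose infinitary problem is already $\Pi^0_1$-hard. Your $\Pi^0_1$ upper-bound sketch also needs care: the tree of partial runs branches infinitely over an infinite data domain, so K\"onig's lemma does not apply directly; one must first quotient by data-value permutations (exactly as in your WQO step) to recover effective finite branching.
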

%
 
Given a one-counter automaton $\aautomaton = \triple{\locs,\aloc_I}{\delta}{F}$,
finite [resp.\ infinite] accepting runs of $\aautomaton$ can be viewed as finite [resp.\ infinite]
data words over the alphabet $\locs$. Indeed,  given a run $\arun$, the equivalence relation
$\sim^{\arun}$ is defined as follows: $i \sim^{\arun} j$ iff the counter value at the $i$th position
of $\arun$ is equal to the counter value at the $j$th position of $\arun$.
In order to ease the presentation, in the sequel we sometimes store counter values in registers, which 
is an equivalent way to proceed by slightly adapting the semantics for $\uparrow_{\aregister}$ and 
 $\downarrow_{\aregister}$, and the values stored in registers (data). 

The finitary [resp.\ infinitary] (existential) model-checking
problem over one-counter automata for LTL with registers, noted $\fMC{\fin}{}$ [resp.\
$\fMC{\infin}{}$], 
is defined as
follows:
\iffossacs 
given a one-counter automaton $\aautomaton = \triple{\locs,\aloc_I}{\delta}{F}$
              and  a sentence $\aformula$ in 
              $\fLTL{\downarrow,Q}{}$, 
is there a finite [resp.\ infinite] accepting run $\arun$ of $\aautomaton$
                 such that $\arun, 0 \models \aformula$?
                 If the answer is ``yes'', we write $\aautomaton \models^{< \omega} \aformula$
                 [resp.\ $\aautomaton \models^{\omega} \aformula$]. 
\else
\begin{description}
\itemsep 0 cm
\item[Input:] A one-counter automaton $\aautomaton = \triple{\locs,\aloc_I}{\delta}{F}$
              and a  sentence $\aformula$ in 
              $\fLTL{\downarrow,Q}{}$;
\item[Question:] Is there a finite [resp.\ infinite] accepting run $\arun$ of $\aautomaton$
                 such that $\arun, 0 \models \aformula$?
                 If the answer is ``yes'', we write $\aautomaton \models^{\fin} \aformula$
                 [resp.\ $\aautomaton \models^{\infin} \aformula$]. 
\end{description}
\fi
In this existential version of model checking, this problem can be
 viewed as  a variant of satisfiability in which satisfaction of a formula
can be only witnessed within a specific class of data words, namely the
accepting runs of the automata. Results
for the universal version of model checking will follow easily from those for the existential version.

We write $\fMC{\alpha}{n}$ to denote the restriction
of $\fMC{\alpha}{}$ to formulae with at most $n$ registers.
Very often, it makes sense that only counter values are known but not the current state
of a configuration, which can be understood as an internal information about the system.
We write  $\fpureMC{\alpha}{n}$ to denote the restriction
of  $\fMC{\alpha}{n}$  (its ``pure data'' version)
to formulae with atomic formulae  only of the form $\uparrow_r$. Given a set $\mathcal{O}$  of temporal operators, we write $\fMC{\alpha}{n}[\mathcal{O}]$ [resp. $\fpureMC{\alpha}{n}[\mathcal{O}]$] to denote
the restriction of $\fMC{\alpha}{n}$ [resp. $\fpureMC{\alpha}{n}$] to formulae using only temporal operators in $\mathcal{O}$.

\begin{example} Here are some properties that can be stated in
 $\fLTL{\downarrow,Q}{2}$ along a run.
\begin{itemize}
\itemsep 0 cm
\item ``There is a suffix such that all the counter values are different'':
      $$\sometimes \always (\downarrow_1 \always^{+} \neg \uparrow_1).$$
\item ``Whenever state $\aloc$ is reached with current counter value $n$ and next current counter value $m$,
        if there is a next occurrence of $\aloc$, the two consecutive counter values are also $n$ and $m$'':
      $$\always(\aloc \Rightarrow \downarrow_1 \mynext \downarrow_2 \mynext \always 
       (\aloc \Rightarrow \uparrow_1 \wedge~
       \mynext \uparrow_2)).$$
\end{itemize} 
\end{example}


\iffossacs
\else
Observe also that  we have chosen as alphabet the set of states of the automata.
Alternatively, it would have been possible to add  finite alphabets to automata, to label each transition
by a letter and then consider as data words generated from automata the recognized
words augmented with the counter values. This choice does not change our main results  but it improves the readability of some technical details.
\fi

\subsection{First-order logic over data words}

Let us introduce the second logical formalism considered in the paper.
Formulae of $\fo{\aalphabet}{\sim,<,+1}$~\cite{Bojanczyketal06a} 
where $\aalphabet$ is a finite alphabet are defined as follows:
$$
\aformula ::= \aletter(\avariable) \ \mid \
              \avariable \sim \avariablebis \ \mid \
              \avariable < \avariablebis \ \mid \
              \avariable = \avariablebis + 1 \ \mid \
              \neg \aformula \ \mid \
              \aformula \wedge \aformula \ \mid \
              \exists \ \avariable \ \aformula
$$
where $\aletter \in \aalphabet$ and $\avariable$ ranges over a countably infinite set of variables.
We write $\fo{}{\sim,<,+1}$ to denote $\fo{\aalphabet}{\sim,<,+1}$
for some unspecified finite alphabet and $\fo{}{<,+1}$ to denote the restriction of
$\fo{}{\sim,<,+1}$ without atomic formulae of the form $\avariable \sim \avariablebis$. 
Given  a natural number
$n > 0$,
we write  $\fon{\aalphabet}{n}(\sim,<,+1)$ to denote the restriction
of   $\fo{\aalphabet}{\sim,<,+1}$ to variables in $\set{\avariable_1,\ldots,\avariable_n}$. 
A variable valuation $\avarval$ for a data word $\adataword$ is a finite partial map
from the set of variables
to the indices of $\adataword$. Let $\adataword$ be a data word in $\finandinfwords(\sim)$, the satisfaction relation $\models$ is defined as follows (Boolean clauses
are again omitted):
$$
\begin{array}{rcl}
\adataword \models_{\avarval}~ \aletter(\avariable) & ~\defeq~ & 
\avarval(\avariable) \ \mbox{is  defined  and} \ 
\adataword(\avarval(\avariable))= \aletter \\
\adataword \models_{\avarval}~ \avariable \sim \avariablebis   & ~\defeq~ & 
\avarval(\avariable) \ \mbox{and} \  \avarval(\avariablebis) \ \mbox{are defined and} \ 
 \avarval(\avariable) \sim^{\adataword}  \avarval(\avariablebis) \\
\adataword \models_{\avarval}~ \avariable < \avariablebis   & ~\defeq~ & 
\avarval(\avariable) \ \mbox{and} \  \avarval(\avariablebis) \ \mbox{are defined and} \ 
 \avarval(\avariable) < \avarval(\avariablebis) \\
\adataword \models_{\avarval}~ \avariable = \avariablebis + 1    & ~\defeq~ & 
\avarval(\avariable) \ \mbox{and} \  \avarval(\avariablebis) \ \mbox{are defined and} \ 
 \avarval(\avariable) = \avarval(\avariablebis) + 1\\
\adataword \models_{\avarval}~ \exists \ \avariable \ \aformula   & ~\defeq~ & 
\mbox{there is} \ i \in \Nat \ \mbox{such that} \  0 \leq i < \length{\adataword} \ \mbox{and} \  
\adataword \models_{\avarval[\avariable \mapsto i]}~\aformula\\
\end{array}
$$
$\avarval[\avariable \mapsto i]$ denotes the variable valuation equal to $\avarval$ 
except that the variable $\avariable$ is mapped to the position $i$. 
In the sequel, we omit the subscript ``$\avarval$'' in 
$\models_{\avarval}$ when sentences are involved.

The finitary [resp.\ infinitary] (existential) model-checking
problem over one-counter automata for the logic $\fo{\aalphabet}{\sim,<,+1}$, noted $\foMC{\fin}{}$ [resp.\
$\foMC{\infin}{}$]   
is defined as
follows:
\begin{description}
\itemsep 0 cm
\item[Input:] A one-counter automaton $\aautomaton$
              and  a sentence $\aformula$ in 
              $\fo{Q}{\sim,<,+1}$;
\item[Question:] Is there a finite [resp.\ infinite] accepting run $\arun$ of $\aautomaton$
                 such that $\arun \models \aformula$?
                 If the answer is ``yes'', we write $\aautomaton \models^{\fin} \aformula$
                 [resp.\ $\aautomaton \models^{\infin} \aformula$]. 
\end{description}

We write $\foMC{\alpha}{n}$ to denote the restriction
of $\foMC{\alpha}{}$ to formulae with at most $n$ variables.
We write  $\fopureMC{\alpha}{n}$ to denote the restriction
of  $\foMC{\alpha}{n}$  (its ``pure data'' version)
to formulae with no atomic formulae  of the form $\aletter(\avariable)$.

Extending the standard translation from LTL into first-order logic, we can easily establish the result
below.

\begin{lemma} \label{lemma-ltl-fo}
Given a sentence $\aformula$ in $\fLTL{\downarrow,\aalphabet}{n}$, there is
a first-order formula $\aformula'$ in $\fo{\aalphabet}{\sim,<,+1}$ that can be computed in
linear time in $\length{\aformula}$ such that
\begin{enumerate}
\itemsep 0 cm
\item $\aformula'$ has at most $max(3,n+1)$ variables,
\item $\aformula'$ has a unique free variable, say $\avariablebis_0$,
\item for all data words $\adataword$, register valuations $\aregval$ and $i \geq 0$,
we have $\adataword, i \models_{\aregval} \aformula$ iff
$\adataword \models_{\avarval} \aformula'$, where 
for $r \in \set{1,\ldots,n}$, $\aregval(r) = \avarval(\avariable_r)$
and $\avarval(\avariablebis_0) = i$. 
\end{enumerate}
\end{lemma}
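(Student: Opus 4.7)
The plan is to construct $\aformula'$ as $T_{\avariablebis_0}(\aformula)$, where $T_{\avariablebis}(\cdot)$ is a translation defined by induction on the structure of its argument and parameterised by a first-order variable $\avariablebis$ playing the role of the ``current position''; the register index $r$ is identified with the first-order variable $\avariable_r$. The clauses mirror the $\fLTL{\downarrow,\aalphabet}{}$ semantics, with $T_{\avariablebis}(\aletter) := \aletter(\avariablebis)$, $T_{\avariablebis}(\uparrow_r) := \avariable_r \sim \avariablebis$, Booleans commuting with $T$, and
\[
T_{\avariablebis}(\mynext \aformula) := \exists \avariablebis' \, (\avariablebis' = \avariablebis + 1 \wedge T_{\avariablebis'}(\aformula)),
\]
\[
T_{\avariablebis}(\aformula_1 \until \aformula_2) := \exists \avariablebis' \, \bigl( \avariablebis \leq \avariablebis' \wedge T_{\avariablebis'}(\aformula_2) \wedge \forall \avariablebis'' \, ((\avariablebis \leq \avariablebis'' \wedge \avariablebis'' < \avariablebis') \Rightarrow T_{\avariablebis''}(\aformula_1)) \bigr),
\]
\[
T_{\avariablebis}(\downarrow_r \aformula) := \exists \avariable_r \, (\avariable_r = \avariablebis \wedge T_{\avariablebis}(\aformula)).
\]

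Clauses~2 and~3 of the lemma then follow with little effort. Clause~2 holds by construction: $\aformula$ being a sentence forces every $\uparrow_r$ to lie in the scope of some $\downarrow_r$, so each $\avariable_r$ is quantified away and only $\avariablebis_0$ remains free. Clause~3 is a routine induction on $\aformula$, where each clause above is matched against the corresponding clause in the definitions of $\models$ for $\fLTL{\downarrow,\aalphabet}{}$ and for $\fo{\aalphabet}{\sim,<,+1}$. The construction is evidently linear-time in $\length{\aformula}$.

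The one delicate point is clause~1, the bound $\max(3, n+1)$ on the number of distinct variable names in $\aformula'$. Naively the above translation uses $n+3$ names: the $n$ register names, the distinguished $\avariablebis_0$, and two scratch names $\avariablebis', \avariablebis''$ introduced at each $\mynext$ or $\until$. To meet the stated bound I would $\alpha$-rename bound variables so that scratch names are recycled across disjoint quantifier scopes: the variable freshly quantified in a $\mynext$ or $\until$ clause is not free in the body of the inner recursive call and so can safely adopt a name already used outside. For $n \geq 2$ this lets two of the three position-variable roles be merged with register names, giving $n+1$ names overall; for $n \leq 2$ we retain $\avariablebis_0, \avariablebis', \avariablebis''$ as three distinct names, which already meets the bound $3$. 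This $\alpha$-renaming bookkeeping is the step I expect to occupy most of the care in the full proof.
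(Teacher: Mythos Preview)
Your translation is essentially identical to the paper's: the same inductive clauses, the same use of $\avariable_r$ for register $r$, the same treatment of $\downarrow_r$ via $\exists\,\avariable_r\,(\avariable_r=\avariablebis\wedge\cdots)$. The paper makes the recycling of position variables explicit by cycling through three fixed auxiliary names $\avariablebis_0,\avariablebis_1,\avariablebis_2$ modulo~$3$ (the call at $\avariablebis_i$ spawns recursive calls at $\avariablebis_{(i+1)\bmod 3}$ and $\avariablebis_{(i+2)\bmod 3}$), which is precisely the $\alpha$-renaming bookkeeping you anticipate.

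The gap is in your argument for clause~1. You propose, for $n\geq 2$, to merge two of the three position-variable names with register names, but this is not sound: inside the scope of some $\downarrow_r$, the name $\avariable_r$ must remain bound to the freeze position so that $\uparrow_r$ translates correctly; if a nested $\mynext$ or $\until$ requantifies $\avariable_r$ as a position variable, that binding is shadowed and lost. Concretely, in
\[
\downarrow_1\,\mynext\,\downarrow_2\,\mynext\cdots\downarrow_n\,\mynext\,(\aformula_1\,\until\,\aformula_2)
\]
with every $\uparrow_r$ occurring in $\aformula_1$, the $\until$ clause needs three position names while all $n$ register names are still live and must be kept distinct from them. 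The same objection applies to your $n\leq 2$ case as soon as $n\geq 1$. To be fair, the paper's proof is no more careful on this point: its construction as written uses $n+3$ names (the $n$ register names plus $\avariablebis_0,\avariablebis_1,\avariablebis_2$), and for the stated bound $\max(3,n+1)$ it says only that ``we can recycle the variables'', citing Gabbay without further detail. Since the later uses of the lemma in the paper require only a polynomial bound on the number of variables, this looseness does not affect the downstream results.
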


\begin{proof}
We build a translation function $T$ which takes as arguments a formula in $\fLTL{\downarrow,\aalphabet}{n}$ and a variable, and which 
returns the wanted formula in $\fo{\aalphabet}{\sim,<,+1}$. Intuitively the variable, which is given as argument, is used to represent 
the current position in the data word. Then, we  use the variables $\avariable_1,\ldots,\avariable_r$ to characterize the registers. 
 We add to this set of variables three variables $\avariablebis_0,\avariablebis_1$ and $\avariablebis_2$. In the sequel, we  write
 $\avariablebis$ to represent indifferently $\avariablebis_0$ or $\avariablebis_1$ or $\avariablebis_2$. Furthermore the notation 
$\avariablebis_{i+1}$ stands for $\avariablebis_{(i+1)mod(3)}$ and $\avariablebis_{i+2}$ stands for $\avariablebis_{(i+2)mod(3)}$. 
The function $T$, which is homomorphic for the Boolean operators, is defined inductively as follows, for $i \in \set{0,1,2}$:
\begin{itemize}
\item $T(\aletter,\avariablebis)=\aletter(\avariablebis)$,
\item $T(\uparrow_r,\avariablebis)=\avariablebis \sim \avariable_r$,
\item $T(\mynext \aformula,\avariablebis_i)=\exists \ \avariablebis_{i+1} \ (\avariablebis_{i+1}=\avariablebis_i + 1 \wedge 
T(\aformula,\avariablebis_{i+1}))$,
\item $T(\aformula \until \aformulabis,\avariablebis_i)=\exists \ \avariablebis_{i+1} \ (\avariablebis_{i} \leq \avariablebis_{i+1}
 \wedge T(\aformulabis,\avariablebis_{i+1}) \wedge \forall \ \avariablebis_{i+2} \ (\avariablebis_{i} \leq \avariablebis_{i+2} < 
\avariablebis_{i+1} \Rightarrow T(\aformula,\avariablebis_{i+2}))$,
\item $T(\downarrow_r \aformula,\avariablebis)=\exists \ \avariable_r \ (\avariable_r=\avariablebis \wedge T(\aformula,\avariablebis))$. 
\end{itemize}
Then if $\aformula$ is a formula in $\fLTL{\downarrow,\aalphabet}{n}$ and $\avariablebis_0$ is the variable chosen to characterize 
the current position in the word, the formula $T(\aformula,\avariablebis_0)$ satisfies the three conditions given in the above lemma. 
In order to ensure the first condition, we use the fact that we can recycle the variables. More details about this technique can be 
found in \cite{gabay-expressive-81}.
\hfill $\Box$
\end{proof}

The decidability borderline for $\fo{}{\sim,<,+1}$ is between two and three variables.

\begin{theorem} \cite[Theorem 1, Propositions 19 \& 20]{Bojanczyketal06a} \label{th:undecid-fo}
Satisfiability for $\fo{}{\sim,<,+1}$ restricted to 3 variables is undecidable 
and satisfiability for $\fon{}{2}(\sim,<,+1)$ is decidable (for both finitary and infinitary cases).
\end{theorem}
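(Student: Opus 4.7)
The plan is to prove the two halves separately, with the undecidability proof by reduction from a known undecidable problem and the decidability proof by translation to an automaton model with decidable emptiness.

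For undecidability of $\fo{}{\sim,<,+1}$ with three variables, I would reduce from the halting problem for deterministic two-counter Minsky machines. A halting computation is encoded as a data word whose finite-alphabet letters record the instruction executed at each step, while the data values serve as ``pointers'' linking related positions. The standard trick is: for each of the two counters, introduce a fresh data value at every increment step, to be reused at the matching decrement step, with distinct letters marking which counter is being affected. With three variables one can express the bijectivity of this increment/decrement matching (a $\forall x\, \exists y$ statement with a third variable used inside to compare labels) and one can encode the zero-test of the Minsky machine by asserting the absence of any unmatched increment of the relevant counter in the past. Consistency of control flow is just a local property on consecutive positions, easily expressible with $+1$ and two variables.

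For decidability of $\fon{}{2}(\sim,<,+1)$, I would reduce satisfiability to emptiness of a decidable automaton model on data words, such as data automata or class memory automata. First, absorb the successor predicate $+1$ by enriching the finite alphabet with markers for ``first'' and ``last'' positions of each letter-block, so that $y = x+1$ becomes expressible in the purely order-based fragment $\fo{}{\sim,<}$. Second, put the formula into a Scott-like normal form in which only two-variable patterns remain, and then build an equivalent data automaton. Emptiness of such automata is known to be decidable (via a reduction to reachability in vector addition systems), which gives decidability for finitary satisfiability; the infinitary version is handled by a B\"uchi variant of the same construction.

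The main obstacle I expect is the zero-test in the undecidability reduction: expressing ``no unmatched increment of counter $c$ exists strictly before position $x$'' genuinely needs three variables, because one must quantify over a candidate position $y$ and then, inside, quantify over a potential matching decrement $z$. Any sloppier encoding collapses into the two-variable fragment and becomes decidable by the second half of the theorem, so the third variable must be used in an essential quantifier alternation. For the decidability half, the main technical point is ensuring that the translation to the automaton model faithfully captures the interaction of $<$ with $\sim$ while the resulting emptiness problem remains decidable --- this is precisely the combinatorial balance that fails at three variables.
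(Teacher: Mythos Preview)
The paper does not prove this theorem at all: it is quoted verbatim from \cite{Bojanczyketal06a} (their Theorem~1 and Propositions~19 \& 20) and used as a black box. There is therefore no ``paper's own proof'' to compare your proposal against.

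That said, your sketch is a fair outline of how the results in \cite{Bojanczyketal06a} are obtained. For the two-variable decidability, the route via data automata and reduction of emptiness to vector-addition-system reachability is exactly the argument there. For the three-variable undecidability, the original paper reduces from the Post Correspondence Problem rather than from Minsky machines, but the Minsky-machine encoding you describe (fresh data values at increments, matched at decrements, with a three-variable $\forall\exists\forall$-pattern for the zero test) is a standard alternative and would work. Your identification of the zero test as the place where the third variable is essential is on point.

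One caveat: your sketch is at the level of a plan, not a proof. The bijectivity of the increment/decrement matching and the faithfulness of the automaton translation each require nontrivial bookkeeping that you have only gestured at. If you were actually asked to supply a proof here, you would need to spell these out; but since the present paper treats the result as a citation, no proof is expected.
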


In Section~\ref{section-decidability} we will use Theorem~\ref{theorem-without-data} below in an 
essential way. 

\begin{theorem} \label{theorem-without-data} \cite[Proposition 4.2]{Markey&Schnoebelen03}
Given two finite words $s,t \in \aalphabet^{*}$ and
a sentence $\aformula$ in $\fo{\aalphabet}{<,+1}$, checking whether
$s \cdot t^{\omega} \models \aformula$ can be done in space $\mathcal{O}((\length{s} + \length{t}) \times
\length{\aformula}^2)$.
\end{theorem}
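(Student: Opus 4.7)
The plan is to give a PSPACE-style recursive model-checking algorithm for first-order logic, evaluating $\aformula$ on $s \cdot t^{\omega}$ by induction on its syntax. The crucial point is that although $s \cdot t^{\omega}$ is infinite, because the signature contains only $<$ and $+1$ (no data-equality predicate $\sim$), first-order logic cannot distinguish the infinite repetition from a sufficiently long finite repetition; hence existential quantifiers can be restricted to an effectively bounded initial segment, after which the remaining work is standard.

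First I would encode each position of $s \cdot t^{\omega}$ as a pair $\langle k,j \rangle$, where $k \in \Nat$ counts the number of complete copies of $t$ traversed after $s$ and $j \in \set{0,\ldots,|s|+|t|-1}$ is the local offset within the base block $s \cdot t$. Both $<$ and $+1$ are then computable on such pairs in logspace, and atomic letter predicates $\aletter(\avariable)$ are read directly from $s$ or $t$ via the offset. The key combinatorial lemma, proved by a standard Ehrenfeucht--Fra\"iss\'e argument for discrete linear orders with successor, is that a formula of quantifier depth $q$ cannot distinguish the positions $\langle k,j \rangle$ and $\langle k',j \rangle$ once $k,k'$ are at least $2^{q}$; more generally, any satisfied $\exists$ has a witness whose cycle counter is at most $2^{O(|\aformula|)}$.

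The evaluator then walks the syntax tree to recursion depth at most $|\aformula|$, storing at each frame the current value of one newly bound variable (a pair $\langle k,j\rangle$ fitting in $O(|\aformula|+\log(|s|+|t|))$ bits): Boolean cases recurse directly, atomic formulas are evaluated in logspace from the compact representation, and $\exists\ \avariable$ iterates $\avariable$ over all $\langle k,j\rangle$ with $k \leq 2^{O(|\aformula|)}$ and $j<|s|+|t|$. Summing over the at most $|\aformula|$ nested frames yields a total space of $O(|\aformula|^{2} + |\aformula|\log(|s|+|t|))$, which is comfortably within the claimed bound $\mathcal{O}((|s|+|t|)\cdot|\aformula|^{2})$.

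The main obstacle is the Ehrenfeucht--Fra\"iss\'e step: one must exhibit a winning strategy for Duplicator in the $q$-round game between $s \cdot t^{\omega}$ and its truncation $s \cdot t^{N}$ with $N = 2^{O(q)}$, essentially a \emph{copy-and-shift} response whenever Spoiler picks a position deep inside some copy of $t$. This requires a careful invariant relating the distances between chosen positions in the two structures and the number of remaining rounds, and has to accommodate the letter predicates so that shifted positions still carry the same alphabet symbol (which is automatic within the periodic part by construction). This locality-based reasoning is standard, but performing the case analysis cleanly for the combined signature $\{\aletter : \aletter \in \aalphabet\} \cup \{<,+1\}$ is the only genuinely delicate piece of the argument.
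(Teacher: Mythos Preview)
The paper does not prove this statement at all: Theorem~\ref{theorem-without-data} is quoted verbatim from \cite[Proposition~4.2]{Markey&Schnoebelen03} and used as a black box to obtain the \pspace\ upper bounds in Section~\ref{section-pspace}. So there is no ``paper's own proof'' to compare your proposal against.

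That said, your sketch is a reasonable reconstruction of how such a result is established. The architecture---compact $\langle k,j\rangle$ encoding of positions, an Ehrenfeucht--Fra\"iss\'e argument to bound witness depth in the periodic part, and a syntax-directed recursive evaluator---is the standard line for first-order model checking over ultimately periodic labelled linear orders. One point you should tighten: the bound ``cycle counter at most $2^{O(|\aformula|)}$'' for an existential witness is not absolute but \emph{relative to the positions already assigned to the free variables}. The correct invariant is that after $d$ nested quantifications all chosen cycle counters lie in an interval of length $2^{O(|\aformula|)}$ around the previously chosen ones (or, equivalently, are all bounded by $d\cdot 2^{O(|\aformula|)}$ starting from the empty assignment). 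This still gives $O(|\aformula|)$ bits per stored counter and hence your space analysis goes through unchanged; but as stated, your lemma would be false once some variable has already been placed deep in the periodic part.
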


\ifptime
Moreover, for every fixed number of variables $n \geq 0$, checking whether 
$s \cdot t^{\omega} \models \aformula$ where $\aformula$ has at most $n$ variables, can be done
in polynomial time. 
\fi

\iffossacs
\else
\subsection{Purification of the model-checking problem}
\label{section-purification}
%
\fi 

We now show how to get rid of propositional variables
by reducing the model-checking problem over one-counter automata to its pure version. 
This amounts to transform any  $\fMC{}{}$
  instance  into a $\fpureMC{}{}$ instance.

\begin{lemma}[{\small Purification for $\fLTL{\downarrow}{}$}]
\label{lemma-purification}
Given a one-counter automaton $\aautomaton$
              and  a sentence $\aformula$ in 
              $\fLTL{\downarrow,\locs}{n}$, one can compute in logarithmic space in
$\length{\aautomaton} + \length{\aformula}$ a one-counter automaton 
$\aautomaton_P$ and a formula $\aformula_P$ in $\fLTL{\downarrow,\emptyset}{max(n,1)}$   
such that  $\aautomaton \models^{\fin} \aformula$
                 [resp.\ $\aautomaton \models^{\infin} \aformula$] iff 
$\aautomaton_P \models^{\fin} \aformula_P$
                 [resp.\ $\aautomaton_P \models^{\infin} \aformula_P$].
Moreover, $\aautomaton$ is deterministic 
iff $\aautomaton_P$  is deterministic.
\end{lemma}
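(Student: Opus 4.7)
\textit{Overall strategy.} Since $\aformula_P$ cannot see states through an alphabet, the state information must be encoded in the counter trajectory itself. The plan is to replace each transition of $\aautomaton$ by a fixed-length \emph{block} of micro-transitions in $\aautomaton_P$ that first performs a state-specific counter excursion returning to the initial counter value, then carries out the original $\inc$/$\dec$/$\ifzero$ step. The translation then reduces $\mynext$ to $\mynext^M$ (where $M$ is the block length) and expresses each atomic letter $\aloc_i$ as a freeze-based inspection of the local counter profile.

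\textit{Construction and translation.} Enumerate $\locs = \set{\aloc_1,\ldots,\aloc_k}$ and fix a block length $M = O(k)$. For every transition $(\aloc_i,\mathit{op},\aloc') \in \delta$, $\aautomaton_P$ has a deterministic chain of $M-1$ fresh internal transitions tracing a counter profile that depends on $i$ and returns to the initial value, followed by the $\mathit{op}$-transition into the next block. Determinism is therefore preserved, $|\aautomaton_P| = O(M \cdot |\aautomaton|)$ is computable in logspace, and accepting states of $\aautomaton_P$ are placed at block-ends corresponding to accepting states of $\aautomaton$, which preserves both finite and B\"uchi acceptance and establishes a bijection between the accepting runs of the two automata: the $m$-th real configuration of $\arun$ appears as position $mM$ of $\arun_P$, with the same counter value. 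The translation $T$ is syntax-directed: $T(\aloc_i)$ is a closed sentence in $\fLTL{\downarrow,\emptyset}{1}$ that uses an auxiliary register $r_0$ to freeze the current counter value and, via a bounded unrolling of $\mynext$'s, verify that the profile for $\aloc_i$ is the one starting here; $T(\uparrow_r) = \uparrow_r$ and $T(\downarrow_r \aformulabis) = \downarrow_r T(\aformulabis)$ because counter values at real positions are identical to those of $\aautomaton$; $T(\mynext \aformulabis) = \mynext^M T(\aformulabis)$; $T(\aformulabis_1 \until \aformulabis_2) = (\chi \Rightarrow T(\aformulabis_1)) \until (\chi \wedge T(\aformulabis_2))$ with $\chi := \bigvee_{i=1}^{k} T(\aloc_i)$ intended to hold exactly at real positions; Booleans are homomorphic. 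Crucially, because $\downarrow$ is a local binding in the semantics of $\fLTL{\downarrow}{}$ and $T(\aloc_i)$ is a self-contained subformula, when $n \geq 1$ the register $r_0$ may be reused among the existing $n$ registers of $\aformula$; when $n = 0$ a single fresh register is introduced, yielding the $\max(n,1)$ bound claimed.

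\textit{Correctness and principal obstacle.} Correctness is established by a structural induction on $\aformula$: for every accepting run $\arun$ of $\aautomaton$ with block-expanded image $\arun_P$, every real position $t = mM$ of $\arun_P$, and every register valuation $\aregval$ not using $r_0$, one shows $\arun_P, t \models_{\aregval} T(\aformula)$ iff $\arun, m \models_{\aregval} \aformula$; combined with the bijection on accepting runs this yields the equivalence of model-checking instances. The real difficulty is calibrating the counter profile so that each detection formula $T(\aloc_i)$ holds \emph{exactly} at real positions whose underlying state is $\aloc_i$, thereby making $\chi$ a faithful real-position predicate: a purely symmetric triangle profile such as $c,c{+}1,\ldots,c{+}i,c{+}i{-}1,\ldots,c,c{+}1,\ldots,c{+}(k{-}i),\ldots,c$ fails, because from a gadget position $t+2i$ (counter $c$) the next return to $c$ happens at offset $2k-2i$, giving a spurious positive of $T(\aloc_{k-i})$. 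An asymmetric design is therefore required, for instance a tall landmark peak of height exceeding $k$ containing a state-specific dip of depth $i$ at a prescribed location: no counter excursion reaching that height fits within the remainder of another block, so no gadget position can reproduce the $\aloc_j$-profile for any $j$. Once this calibration is in place, the unambiguity of $T(\aloc_i)$, the accuracy of $\chi$, and the structural induction are all routine.
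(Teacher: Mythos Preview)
Your approach mirrors the paper's: encode each state by a distinctive counter-value pattern, detect the pattern with a closed one-register formula, and relativize $\until$ (and stretch $\mynext$) to positions where a pattern begins. The paper's concrete pattern is a sequence of $m+2$ small peaks---the first of height~$3$ as landmark, the $i$th of height~$2$, the rest of height~$1$---and the detector $\mathrm{STA}$ conjoins ``current value equals the value six steps ahead'' with ``no value occurs three times among the next seven positions''; an explicit state-by-state case analysis then shows $\mathrm{STA}$ holds exactly at the original states.

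One wrinkle in your write-up: attaching a fresh chain to \emph{each transition} does not preserve determinism when $\aloc_i$ carries both an $\ifzero$ and a $\dec$ transition, since both chains leave $\aloc_i$ with the same first instruction of the profile. The paper sidesteps this by attaching the pattern to the \emph{state}: all transitions out of $\aloc_i$ share a single pattern ending in a state $\aloc_{i,F}$, from which the original $\inc/\dec/\ifzero$ edges depart, so the branching happens only where the instructions already differ.
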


The idea of the proof is simply to identify states with patterns about the changes
of the unique counter that can be expressed in 
$\fLTL{\downarrow,\emptyset}{}$.
\begin{proof}
Let $\aautomaton=\triple{\locs,\aloc_I}{\delta}{F}$ with $\locs=\set{\aloc_1,\ldots,\aloc_m}$ and
$\aformula$ be an $\fLTL{\downarrow,Q}{}$ formula. In order to define $\aautomaton_P$, 
we identify states with patterns about the changes of the unique counter.
Let $\aautomaton_P$ be $\triple{\locs_P,\aloc_I}{\delta_P}{F_P}$ with $\locs_P= \locs \uplus \locs'$ and $\locs'$
is defined below: 
$$
\begin{array}{ll}
\locs'=&\bigset{\aloc_i^1,\aloc_i^2,\aloc_i^3,\aloc_i^4,\aloc_i^5,\aloc_{i,F} \mid i \in \set{1,\ldots,m}}\\
& \cup \bigset{\aloc_{i,j},\aloc'_{i,j} \mid i \in \set{1,\ldots,m} \mbox{ and } j \in  \set{1,\ldots,m+1} \mbox{ and } i \neq j}\\
& \cup \bigset{\aloc_{i,i}^0,\aloc_{i,i},\aloc_{i,i}^1,\aloc_{i,i}^2 \mid i \in \interval{1}{m}}.
\end{array}
$$
Figure~\ref{fig:purif1} presents the set of transitions $\delta_P$ associated with each state $\aloc_i$  
of $\locs$ (providing a pattern). Furthermore, for all $i,j \in \interval{1}{m}$, 
$\aloc_{i,F} \step{\mathtt{a}} \aloc_j \in \delta_P$
iff  $\aloc_i \step{\mathtt{a}} \aloc_j \in \delta$.
The sequence of transitions  associated to each $\aloc_i \in \locs$ is a sequence of $m+2$ picks and among these
picks, the first pick is the only one of height $3$, the $i$-th pick is the only one of height $2$, and 
the height of all the other 
picks is $1$. Observe that this sequence of transitions has a fixed length and it is composed of exactly $9+2(m+1)$ states.

\begin{figure}[htbp]
\begin{center}
\scalebox{0.9}{
\begin{picture}(145,35)(0,0)
%
%
\node[Nw=6.5,Nh=6.5](q)(0,0){$\aloc_{i}$}
\node[Nadjust=wh](qi1)(5,10){$\aloc_i^1$}
\drawedge[dash={0.7}0](q,qi1){}
\node[Nadjust=wh](qi2)(10,20){$\aloc_i^2$}
\drawedge[dash={0.7}0](qi1,qi2){}
\node[Nadjust=wh](qi3)(15,30){$\aloc_i^3$}
\drawedge[dash={0.7}0](qi2,qi3){}
\node[Nadjust=wh](qi4)(20,20){$\aloc_i^4$}
\drawedge(qi3,qi4){}
\node[Nadjust=wh](qi5)(25,10){$\aloc_i^5$}
\drawedge(qi4,qi5){}
\node[Nadjust=wh](qi10)(30,0){$\aloc'_{i,1}$}
\drawedge(qi5,qi10){}
\node[Nadjust=wh](qi11)(35,10){$\aloc_{i,1}$}
\drawedge[dash={0.7}0](qi10,qi11){}
\node[Nadjust=wh](qi20)(40,0){$\aloc'_{i,2}$}
\drawedge(qi11,qi20){}
\node[Nadjust=wh](qi21)(45,10){$\aloc_{i,2}$}
\drawedge[dash={0.7}0](qi20,qi21){}
\node[Nadjust=wh](qi30)(50,0){$\aloc'_{i,3}$}
\drawedge(qi21,qi30){}
\put(55,0){$\ldots \ldots \ldots$}
\node[Nadjust=wh](qii0)(75,0){$\aloc^0_{i,i}$}
\node[Nadjust=wh](qii1)(80,10){$\aloc^1_{i,i}$}
\drawedge[dash={0.7}0](qii0,qii1){}
\node[Nadjust=wh](qii)(85,20){$\aloc_{i,i}$}
\drawedge[dash={0.7}0](qii1,qii){}
\node[Nadjust=wh](qii2)(90,10){$\aloc^2_{i,i}$}
\drawedge(qii,qii2){}
\node[Nadjust=wh](qii3)(95,0){$\aloc'_{i,i+1}$}
\drawedge(qii2,qii3){}
\node[Nadjust=wh](qii4)(100,10){$\aloc_{i,i+1}$}
\drawedge[dash={0.7}0](qii3,qii4){}
\node[Nadjust=wh](qii5)(105,0){}
\drawedge(qii4,qii5){}
\put(107,0){$\ldots \ldots \ldots$}
\node[Nadjust=wh](qin0)(130,0){$\aloc'_{i,m+1}$}
\node[Nadjust=wh](qin1)(135,10){$\aloc_{i,m+1}$}
\drawedge[dash={0.7}0](qin0,qin1){}
\node[Nadjust=wh](qif)(143,0){$\aloc_{i,F}$}
\drawedge(qin1,qif){}

\node[Nw=0,Nh=0](inc1)(110,16){}
\node[Nw=0,Nh=0](inc2)(118,16){}
\drawedge[dash={0.7}0](inc1,inc2){$\inc$}

\node[Nw=0,Nh=0](dec1)(110,23){}
\node[Nw=0,Nh=0](dec2)(118,23){}
\drawedge(dec1,dec2){$\dec$}

\end{picture}
}
\end{center}
\caption{Encoding $\aloc_i$ by a pattern made of $m+2$  picks and of length $9+2(m+1)$}
\label{fig:purif1}
\end{figure}

Finally, the set of accepting states of $\aautomaton_P$ is defined as the set
$\set{\aloc_{i,F} \mid \aloc_i \in F}$.
%
In order to detect the first pick of height $3$ which characterizes the beginning of the sequence of 
transitions associated to each state belonging to $\locs$, we build the two following formulae 
in $\fLTL{\downarrow,\emptyset}{1}$:
\begin{itemize}
\itemsep 0 cm
\item $\aformulater_{\neg 3/7}$ which expresses that ``among the 7 next counter values (including the current counter value), there are no 3 equal values'',
\item $\aformulater_{0 \sim 6}$ which expresses that ``the current counter value is equal to the counter value at the 6th next position''.
\end{itemize}
These two formulae can be written as follows:
$$
\begin{array}{lcl}
\aformulater_{\neg 3/7} & = & \neg \big( \downarrow_1 \big(\bigvee_{i \neq j \in \interval{1}{6}} (\mynext^i \uparrow_1 \wedge \mynext^j \uparrow_1) \big) \\
& & \vee \mynext \downarrow_1 \big(\bigvee_{i \neq j \in \interval{1}{5}} (\mynext^i \uparrow_1 \wedge \mynext^j \uparrow_1) \big) \\
& &  \vee \mynext^2 \downarrow_1 \big(\bigvee_{i \neq j \in \set{1,\ldots,4}} (\mynext^i \uparrow_1 \wedge \mynext^j \uparrow_1) \big)  \\
& &  \vee \mynext^3 \downarrow_1 \big(\bigvee_{i \neq j \in \set{1,2,3}} (\mynext^i \uparrow_1 \wedge \mynext^j \uparrow_1) \big)  \\
& &  \vee \mynext^4 \downarrow_1 \big(\bigvee_{i \neq j \in \set{1,2}} (\mynext^i \uparrow_1 \wedge \mynext^j \uparrow_1) \big) \big) \\
& &  \\
\aformulater_{0 \sim 6} & = & \downarrow_1( \mynext^6 \uparrow_1)
\end{array}
$$
We write {\rm STA} to denote the formula $\aformulater_{\neg 3/7} \wedge \aformulater_{0 \sim 6}$.

Let $\arun$ be a run of $\aautomaton_P$ and $j$ be such that $0 \leq j <  \length{\arun}$.
We show that  (1) $\arun,j \models {\rm STA}$ iff  (2) ($\arun,j \models \aloc$ for some $\aloc \in \locs$
and $j+6 < \length{\arun}$). In the sequel, we assume that  $j+6 < \length{\arun}$
since otherwise it is clear that  $\arun,j \not \models {\rm STA}$.
By construction, it is clear that (2) implies (1). In order to prove that (1) implies (2), 
we show that if $\arun,j \models \aloc$ for some $\aloc \in \locs_P \setminus \locs$ and
 $j+6 < \length{\arun}$, then  $\arun,j \not \models {\rm STA}$. We perform a systematic case analysis
according to the type of $\aloc$ (we group the cases that require similar arguments):  
\begin{enumerate}
\itemsep 0 cm

\item If $\aloc$ is of the form $\aloc^2_i$ with 
      $i \in \interval{2}{m}$, then $\arun,j \not \models \aformulater_{0 \sim 6}$.
      When $\aloc$ is $\aloc_1^2$, $\arun,j \not \models \aformulater_{\neg 3/7}$.

\item If $\aloc$ is of the form $\aloc^3_i$ with 
      $i \in \interval{1}{m}$, then $\arun,j \not \models \aformulater_{0 \sim 6}$.

\item If $\aloc$ is of the form $\aloc^4_i$ with 
      $i \in \interval{1}{m} \setminus \set{2}$, then $\arun,j \not \models \aformulater_{0 \sim 6}$.
      When $\aloc$ is $\aloc_2^4$, $\arun,j \not \models \aformulater_{\neg 3/7}$.

\item  If $\aloc$ is of the form $\aloc_{i,i}$ with 
      $i \in \interval{2}{m-1}$, then $\arun,j \not \models \aformulater_{0 \sim 6}$.
      When $\aloc$ is $\aloc_{m,m}$ and an incrementation is performed after 
      $\aloc_{m,F}$,  we have $\arun,j \not \models \aformulater_{\neg 3/7}$.
      If another action is performed, then we also have $\arun,j \not \models \aformulater_{0 \sim 6}$.


\item If $\aloc$ is of the form either $\aloc^1_i$ or $\aloc^5_i$ with 
       $i \in \interval{1}{m}$, then $\arun,j \not \models \aformulater_{\neg 3/7}$.

\item If $\aloc$ is of the form either $\aloc^0_{i,i}$ or  $\aloc^1_{i,i}$ with 
       $i \in \interval{1}{m}$, then $\arun,j \not \models \aformulater_{\neg 3/7}$.


\item If $\aloc$ is of the form $\aloc^2_{i,i}$ with 
       $i \in \interval{1}{m}$, then $\arun,j \not \models \aformulater_{\neg 3/7}$ (the case $i = m$ requires
      a careful analysis).


\item If $\aloc$ is of the form  $\aloc_{i,k}$
      for some  $i \in \interval{1}{m}$, $k \in \interval{1}{m-1}$ such that
      either $\length{i-k} > 2$ or $k > i$, then $\arun,j \not \models \aformulater_{\neg 3/7}$.

\item If $\aloc$ is of the form $\aloc_{i,i-1}$ with $i \in \interval{2}{m}$, then 
      $\arun,j \not \models \aformulater_{\neg 3/7}$.

\item If $\aloc$ is of the form $\aloc_{i,i-2}$ with $i \in \interval{3}{m}$, then 
      $\arun,j \not \models \aformulater_{\neg 3/7}$.


\item If $\aloc$ is of the form $\aloc_{i,m}$ with $i \in \interval{1}{m-1}$, then 
      $\arun,j \not \models \aformulater_{\neg 3/7}$.


\item If $\aloc$ is of the form  $\aloc_{i,m+1}$ with $i \in \interval{1}{m}$ 
      and an action different from decrementation is performed after $\aloc_{i,F}$, then
      $\arun,j \not \models \aformulater_{0 \sim 6}$.
      When a decrementation is performed after $\aloc_{i,F}$, we get 
       $\arun,j \models \aformulater_{0 \sim 6} \wedge \neg \aformulater_{\neg 3/7}$.
      

\item If $\aloc$ is of the form  $\aloc'_{i,k}$
      for some  $i \in \interval{1}{m}$, $k \in \interval{1}{m-1}$ such that
      either $\length{i-k} > 2$ or $k > i$, then $\arun,j \not \models \aformulater_{\neg 3/7}$.


\item If $\aloc$ is of the form $\aloc'_{i,i-1}$ with $i \in \interval{2}{m}$, then 
      $\arun,j \not \models \aformulater_{\neg 3/7}$.


\item If $\aloc$ is of the form $\aloc'_{i,i-2}$ with $i \in \interval{3}{m}$, then 
      $\arun,j \not \models \aformulater_{\neg 3/7}$.


\item  If $\aloc$ is of the form $\aloc'_{i,m}$ with $i \in \interval{1}{m}$, then 
      $\arun,j \not \models \aformulater_{\neg 3/7}$.


\item If $\aloc$ is of the form  $\aloc'_{i,m+1}$ with $i \in \interval{1}{m}$,
then $\arun,j \not \models \aformulater_{0 \sim 6}$. Indeed, the 6th next position, if any,
is of the form $\aloc^3_{k}$ for some $k \in \interval{1}{m}$. The counter
value at such a position is strictly greater than the one at the position $j$ whatever is the action
performed after $\aloc_{i,F}$.


\item If $\aloc$ is of the form $\aloc_{i,F}$ with $i \in \interval{1}{m}$ and
the action performed after $\aloc_{i,F}$ is not a decrementation, then 
$\arun,j \not \models \aformulater_{0 \sim 6}$.
When a decrementation is performed after $\aloc_{i,F}$, we get 
       $\arun,j \models \aformulater_{0 \sim 6} \wedge \neg \aformulater_{\neg 3/7}$.

\end{enumerate}

For $i \in \interval{1}{m}$, let us define the formula 
 $\aformula_i= \mynext^{6 + 2(i-1)} \downarrow_1 \mynext^2 \neg \uparrow_1$. 
One can check that in the run of $\aautomaton_P$,  ${\rm STA} \wedge \aformula_i$ holds true iff 
the current state is $\aloc_i$ and there are at least 6 following positions.

Let $\aformula$ be a formula in $\fLTL{\downarrow,\locs}{n}$.  We define $\aformula_P$ as the formula 
$\atranslation(\aformula)$ such that the map
$\atranslation$ is homomorphic for Boolean operators and $\downarrow_r$, and its restriction to 
$\uparrow_r$ is identity. 
The rest of the inductive definition  is as follows. 
\begin{itemize}
\itemsep 0 cm
\item $\atranslation(\aloc_i)= \aformula_i$,
\item $\atranslation(\mynext \aformula) = \mynext^{9+2(m+1)+1} \atranslation(\aformula)$,
\item $\atranslation(\aformula \until \aformula')=
       \big ({\rm STA} \Rightarrow \atranslation(\aformula) \big) 
\until \big ({\rm STA} \wedge \atranslation(\aformula') \big )$.
\end{itemize}
Observe that $\aformula$ and $\aformula_P$ have the same amount of registers unless 
$\aformula$ has no register. For each accepting run in $\aautomaton$, there exists an accepting run in 
$\aautomaton_P$ and conversely for each accepting run  in $\aautomaton_P$, there exists an accepting run in 
$\aautomaton$. Furthermore the sequence of  counter values for the configurations of each of these runs which have a 
state in $\locs$ match.
\hfill $\Box$
\end{proof}

\begin{lemma}[{\small Purification for $\fo{}{\sim,<,+1}$}]
\label{lemma-purification-fo}
Given a one-counter automaton $\aautomaton$
              and  an $\fo{\locs}{\sim,<,+1}$ sentence $\aformula$
              with $n$ variables, one can compute in logarithmic space in
$\length{\aautomaton} + \length{\aformula}$ a one-counter automaton 
$\aautomaton_P$ and $\aformula_P$ in $\fo{\emptyset}{\sim,<,+1}$ with at most $n+2$ variables
such that  $\aautomaton \models^{\fin} \aformula$
                 [resp.\ $\aautomaton \models^{\infin} \aformula$] iff 
$\aautomaton_P \models^{\fin} \aformula_P$
                 [resp.\ $\aautomaton_P \models^{\infin} \aformula_P$].
Moreover, $\aautomaton$ is deterministic 
iff $\aautomaton_P$  is deterministic.
\end{lemma}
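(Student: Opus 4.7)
The plan is to reuse the same automaton $\aautomaton_P$ built in the proof of Lemma~\ref{lemma-purification}, which identifies each state $\aloc_i \in \locs$ with a characteristic pattern of counter values (a sequence of $m+2$ picks) starting at a designated ``state position''. The correspondence between accepting runs of $\aautomaton$ and of $\aautomaton_P$, as well as preservation of determinism, is then inherited verbatim: the restriction of any accepting run of $\aautomaton_P$ to its state positions yields, with counter values unchanged, exactly an accepting run of $\aautomaton$. What remains is to define the translation $\atranslation$ on $\fo{\locs}{\sim,<,+1}$ sentences so that $\aformula_P := \atranslation(\aformula)$ evaluates over $\aautomaton_P$ as $\aformula$ evaluates over $\aautomaton$.

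First I would write, as in the freeze-LTL case but in the pure first-order language, a unary formula ${\rm STA}(\avariable)$ expressing that $\avariable$ is a state position; this is a straightforward first-order transcription of the bounded formulae $\aformulater_{\neg 3/7} \wedge \aformulater_{0 \sim 6}$, using only atoms of the forms $\avariable \sim \avariablebis$ and $\avariable = \avariablebis + 1$. Since the patterns have fixed bounded length, all the positions $\avariable, \avariable+1,\ldots,\avariable+6$ needed to state the condition can be referred to by existentially quantifying the same two auxiliary variables $\avariablebis_1,\avariablebis_2$ repeatedly, using the variable-recycling technique of \cite{gabay-expressive-81}. Analogously, for each $\aloc_i \in \locs$ I define a unary formula $\aformula_i(\avariable)$ in $\fo{\emptyset}{\sim,<,+1}$ that locates the specific pick of height $2$ within the pattern, once more with only two auxiliary variables.

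The translation $\atranslation$ is then homomorphic for the Boolean connectives, and on the remaining cases proceeds as follows:
\begin{itemize}
\itemsep 0cm
\item $\atranslation(\aloc_i(\avariable)) = \aformula_i(\avariable)$;
\item $\atranslation(\avariable \sim \avariablebis) = \avariable \sim \avariablebis$ and $\atranslation(\avariable < \avariablebis) = \avariable < \avariablebis$, since counter values and ordering of state positions are preserved;
\item $\atranslation(\avariable = \avariablebis + 1)$ must express that $\avariable$ is the \emph{next} state position strictly after $\avariablebis$, namely
$$
\avariablebis < \avariable \ \wedge\ {\rm STA}(\avariable) \ \wedge\ \neg \exists \avariablebis_1.\bigl(\avariablebis < \avariablebis_1 \wedge \avariablebis_1 < \avariable \wedge {\rm STA}(\avariablebis_1)\bigr);
$$
\item $\atranslation(\exists \avariable.\aformulabis) = \exists \avariable.\bigl({\rm STA}(\avariable) \wedge \atranslation(\aformulabis)\bigr)$, relativizing quantification to state positions.
\end{itemize}
A straightforward induction on $\aformula$ then shows that $\adataword \models_{\avarval} \aformula$ (viewing $\adataword$ as an accepting run of $\aautomaton$) iff $\adataword_P \models_{\avarval_P} \atranslation(\aformula)$, where $\adataword_P$ is the corresponding accepting run of $\aautomaton_P$ and $\avarval_P$ maps each variable to the associated state position. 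Applied to a sentence, this yields the required equivalence of $\fin$ and $\infin$ model-checking instances.

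The only non-routine point, and what I expect to be the main obstacle, is maintaining the variable budget of $n+2$. The original $n$ variables of $\aformula$ are kept verbatim, but each invocation of ${\rm STA}$, of $\aformula_i$, and of the ``next state position'' predicate introduces local auxiliary variables. Since these subformulae are evaluated pointwise given their free variable and each requires only a bounded number of auxiliary variables, by Gabbay-style recycling two fresh global variables $\avariablebis_1, \avariablebis_2$ suffice to realize all of them, giving a total of $n+2$ variables in $\aformula_P$. The logarithmic-space bound on the construction and determinism of $\aautomaton_P$ are inherited directly from Lemma~\ref{lemma-purification}.
\hfill $\Box$
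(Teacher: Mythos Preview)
Your proof is correct and follows essentially the same approach as the paper: reuse the automaton $\aautomaton_P$ from Lemma~\ref{lemma-purification}, transcribe ${\rm STA}$ and the $\aformula_i$ into $\fo{\emptyset}{\sim,<,+1}$, relativise quantification to state positions, and keep the variable count at $n+2$ via Gabbay-style recycling. The one minor divergence is your handling of $\avariable = \avariablebis + 1$: the paper exploits that consecutive state positions in $\aautomaton_P$ lie at the fixed distance $9+2(m+1)$ and therefore translates the atom directly to $\avariable = \avariablebis + (9+2(m+1))$, encoded with two auxiliaries; your ``next state position'' formulation via ${\rm STA}$ is also correct, but note that the inner ${\rm STA}(\avariablebis_1)$ then needs two auxiliaries beyond $\avariablebis_1$, so one of the original variables $\avariable,\avariablebis$ must be shadowed inside it to stay within $n+2$.
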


\begin{proof} 
The proof follows the lines of the proof of Lemma~\ref{lemma-purification} by considering the first-order formulae corresponding to 
the formulae {\rm STA} and $\aformula_i$ and the same automaton construction. In order to make this construction feasible, we need to use formulae of the 
form  $\avariable = \avariablebis  + k$. In fact, the formulae of the form $\avariable = \avariablebis  + 1$ are translated into 
formulae of the form $\avariable = \avariablebis  + 9 + 2(m+1)$ (this case is identical to the case of the formulae of the form $\mynext \aformula$). 
Typically, encoding $\avariable = \avariablebis  + k$ for the constant $k$ requires two auxiliary variables. 
For instance we can encode the formula $\avariable=\avariablebis +4$ as follows:
$$
\exists \ \avariablebis_2 \ \avariable=\avariablebis_2 +1 \wedge (\exists \ \avariablebis_1 \ \avariablebis_2=\avariablebis_1+1 \wedge (\exists \ \avariablebis_2 \ \avariablebis_1=\avariablebis_2+1 \wedge \avariablebis_2=\avariablebis+1) )
$$
Here again, we recycle the variables $\avariablebis_1$ and $\avariablebis_2$.
\hfill $\Box$
\end{proof}

\section{Model checking  deterministic one-counter automata}
\label{section-decidability} 

In this section, we show that $\fMC{\fin}{}$ and  $\fMC{\infin}{}$ restricted to deterministic one-counter automata is \pspace-complete.
\iffossacs
First, we show \pspace-hardness.
\else
\subsection{\pspace \  lower bound}
\label{section-pspace-lower-bound}
We show below a \pspace-hardness result by taking advantage of the alphabet of states by means of a reduction from \QBF \ (``Quantified Boolean Formula'') that is a standard
\pspace-complete problem. 
\fi
\begin{proposition} \label{proposition-pspace-hardness} 
$\fpureMC{\fin}{}$ and 
$\fpureMC{\infin}{}$ restricted to deterministic one-coun\-ter automata are \pspace-hard problems. Furthermore, for $\fpureMC{\fin}{}$  [resp. $\fpureMC{\infin}{}$] this results holds for formulae  using only the temporal operators $\mynext$ and $\sometimes$ [resp. $\sometimes$].
\end{proposition}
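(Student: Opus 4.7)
The plan is to reduce from \QBF{} validity, a canonical \pspace-complete problem. Given an instance $\Phi = Q_1 x_1 \cdots Q_n x_n \phi(\vec x)$ with $\phi$ quantifier-free, I would build, in logarithmic space, a deterministic one-counter automaton $\aautomaton_\Phi$ and a pure sentence $\aformula_\Phi$ in $\fLTL{\downarrow}{}$ (using only $\mynext$ and $\sometimes$ for the finite case, and only $\sometimes$ for the infinite case) of size polynomial in $\length{\Phi}$, such that $\Phi$ is valid iff $\aautomaton_\Phi$ has an accepting run satisfying $\aformula_\Phi$.

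The automaton $\aautomaton_\Phi$ is designed so that its unique accepting run begins with a ``dictionary'' segment containing $2n$ designated landmark positions $p_1^T, p_1^F, \ldots, p_n^T, p_n^F$, each carrying a distinct counter value, and continues into a ``verification'' segment that revisits all the landmark counter values so that subformulas evaluated after the quantifier prefix can still detect them via $\sometimes$; in the infinite case the run then enters a simple loop (visiting an accepting state infinitely often) in a range of counter values disjoint from the landmarks. The formula uses registers $r_i^T, r_i^F$ ($i=1,\ldots,n$) to record the landmark counter values and registers $s_i$ to record the chosen truth value of $x_i$. In the finite case, the preamble $\downarrow_{r_1^T}\mynext\downarrow_{r_1^F}\mynext\cdots\mynext\downarrow_{r_n^F}\mynext \, \psi$ seeds the markers step by step. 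The quantifier prefix is then translated recursively: $\exists x_i\,\theta$ becomes $\bigvee_{c\in\{T,F\}}\sometimes(\uparrow_{r_i^c} \wedge \downarrow_{s_i}\,\theta')$ and $\forall x_i\,\theta$ becomes its conjunctive dual; a positive literal $x_i$ in $\phi$ becomes $\sometimes(\uparrow_{r_i^T} \wedge \uparrow_{s_i})$, and a negative literal $\neg x_i$ becomes the analogous formula with $r_i^F$. Distinctness of the landmark counter values ensures that $s_i$ faithfully encodes a truth value for $x_i$ and that each literal test genuinely probes it; correctness then follows by an easy induction on the quantifier prefix.

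The main obstacle is the infinite case, in which $\mynext$ is forbidden and so the preamble above is no longer available for seeding the registers $r_i^T, r_i^F$. I would overcome this by strengthening the automaton: engineer the dictionary so that each landmark position is identifiable by data-equality properties alone, i.e.\ as the unique position of the run whose counter value matches a configuration of register equalities expressible by nested $\sometimes\downarrow$ freezes anchored at the initial position (whose counter value is a known constant). Concretely, one introduces ``beacon'' counter values bracketing each landmark, so that the landmark is pinned down as the unique position eventually satisfying a pair of data-equality constraints to previously frozen registers; a sequential seeding of $r_1^T, r_1^F, \ldots, r_n^F$ is then performed by a nested $\sometimes\downarrow$ cascade inside a single $\sometimes$-only formula. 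After this seeding, the remainder of the reduction is identical to the finite case, and the entire construction has polynomial size, yielding \pspace-hardness.
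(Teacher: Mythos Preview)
Your reduction is more elaborate than the paper's, and the added complexity creates a real weak point in the infinitary case. The paper's key observation is that \emph{two} data values suffice: take the fixed automaton whose counter oscillates as $(0\,1)^{\omega}$, freeze one extra register $r_{0}$ at the initial position (value $0$), and translate $\forall p_i\,\theta$ as $\always(\downarrow_{i}\,\theta')$ and $\exists p_i\,\theta$ as $\sometimes(\downarrow_{i}\,\theta')$. Since every position carries either $0$ or $1$, $\always\downarrow_{i}$ ranges exactly over the two possible data values for register $i$, and the atom $p_i$ becomes $\uparrow_{i}\Leftrightarrow\uparrow_{r_0}$. No seeding of $2n$ distinct landmarks is needed, so the $\sometimes$-only restriction for the infinite case is immediate. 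For the finite case the paper uses an automaton producing $(0\,1)^{2N}0$ and pads the $\always/\sometimes$ with $\mynext^{k}\top$ guards so that universal quantifiers do not run off the end of the word.

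By contrast, your $2n$-landmark encoding forces you to seed $2n$ pairwise-distinct data values into registers before the quantifier cascade. With $\mynext$ available that is routine, but in the infinitary case you forbid $\mynext$, and the ``beacon'' sketch does not actually explain how a position can be \emph{uniquely} singled out using only $\sometimes$, $\downarrow$, $\uparrow$ and Boolean connectives over a run that must later revisit every landmark value infinitely often for your literal tests $\sometimes(\uparrow_{r_i^T}\wedge\uparrow_{s_i})$ to work. Those two requirements pull against each other: uniqueness of a landmark position asks that its data value not recur, while revisitability demands that it does. One can repair this (for instance, drop uniqueness and merely seed $2n$ pairwise-distinct values via a nested $\sometimes(\neg\uparrow_{r_1}\wedge\downarrow_{r_2}\sometimes(\neg\uparrow_{r_1}\wedge\neg\uparrow_{r_2}\wedge\ldots))$ chain over a run cycling through $\geq 2n$ values), but that is not what you wrote, and even then you must argue carefully that the existential choice of which value lands in which register is immaterial. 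The two-value trick sidesteps all of this.
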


\iffossacs
\begin{proof} 
Consider a \QBF \ instance 
$
\aformula = 
\forall \avarprop_1 \
\exists \avarprop_2 \ \cdots \
\forall \avarprop_{2N-1} \ \exists \ \avarprop_{2N} \ \Psi(\avarprop_1,...,\avarprop_{2N})
$
where  $\avarprop_1$,...,$\avarprop_{2N}$ are propositional variables and 
$\Psi(\avarprop_1, \ldots,\avarprop_{2N})$ is a quantifier-free propositional 
formula built over $\avarprop_1, \ldots, \avarprop_{2N}$.
The  fixed deterministic one-counter automaton
$\aautomaton$ below generates the sequence of counter values $(0 1)^{\omega}$ (we can omit
the $\mathtt{ifzero}$ transition from $\aloc_1$). 
\begin{center}
\begin{picture}(30,8)(0,-2)
\setlength{\unitlength}{0.6mm}
\node[Nmarks=ir](Q0)(0,5){$\aloc_0$}
\node(Q1)(30,5){$\aloc_1$}
\drawedge[curvedepth=3](Q0,Q1){$\mathtt{inc}$}
\drawedge[curvedepth=3](Q1,Q0){$\mathtt{dec}$}
\end{picture}
\end{center}
Let $\aformulabis$ be the formula in $\fLTL{\downarrow,\emptyset}{}$
defined from the family $\aformulabis_1, \ldots,
\aformulabis_{2N+1}$ of formulae with $\aformulabis = \downarrow_{2N+1} \aformulabis_1$: 
$\aformulabis_{2N+1} = \Psi[\avarprop_i \leftarrow (\uparrow_i \Leftrightarrow \uparrow_{2N+1})]$
and 
for $i \in \set{1,...,N}$, $\aformulabis_{2i} = \sometimes (\downarrow_{2i} \aformulabis_{2i+1})$ and
      $\aformulabis_{2i-1} = \always (\downarrow_{2i-1} \aformulabis_{2i})$. 
One can show that  $\aformula$ is satisfiable iff 
$\aautomaton_{\aformula} \models_{\omega} \aformulabis$.
For $\fpureMC{< \omega}{}$, one can  enforce the sequence of counter values from the accepting run
to be $(01)^{2N}0$ and then use $\mynext$ to define the $\aformulabis_{i}$s.
\qed
\end{proof}
\else
\begin{proof}
 Consider a \QBF \ instance $\aformula$:
$
\aformula = 
\forall \avarprop_1 \
\exists \avarprop_2 \ \cdots \
\forall \avarprop_{2N-1} \ \exists \ \avarprop_{2N} \ \Psi(\avarprop_1,...,\avarprop_{2N})
$
where  $\avarprop_1$,...,$\avarprop_{2N}$ are propositional variables and 
$\Psi(\avarprop_1, \ldots,\avarprop_{2N})$ is a quantifier-free propositional 
formula built over $\avarprop_1, \ldots, \avarprop_{2N}$.
The  fixed deterministic one-counter automaton
$\aautomaton$ below generates the sequence of counter values $(0 1)^{\omega}$. 
\begin{center}
\begin{picture}(30,10)(0,0)
\setlength{\unitlength}{0.9mm}
\node[Nmarks=ir](Q0)(0,5){$\aloc_0$}
\node(Q1)(30,5){$\aloc_1$}
\drawedge[curvedepth=3](Q0,Q1){$\mathtt{inc}$}
\drawedge[curvedepth=3](Q1,Q0){$\mathtt{dec}$}
\end{picture}
\end{center}
Let $\aformulabis$ be the formula in $\fLTL{\downarrow,\emptyset}{}$
defined from the family $\aformulabis_1, \ldots,
\aformulabis_{2N+1}$ of formulae with $\aformulabis = \downarrow_{2N+1} \aformulabis_1$. 
\begin{itemize}
\itemsep 0 cm
\item $\aformulabis_{2N+1} = \Psi(\uparrow_1 \Leftrightarrow \uparrow_{2N+1},\ldots,\uparrow_{2N} \Leftrightarrow \uparrow_{2N+1})$,
\item for $i \in \set{1,...,N}$, $\aformulabis_{2i} = \sometimes (\downarrow_{2i} \aformulabis_{2i+1})$ and
      $\aformulabis_{2i-1} = \always (\downarrow_{2i-1} \aformulabis_{2i})$. 
\end{itemize}
One can show that  $\aformula$ is satisfiable 
iff $\aautomaton \models^{\infin} \aformulabis$. 

To do so, we proceed as follows. For $i \in \set{0,2,4,6,\ldots,2N}$, let $\aformula_i$ be
$$
\aformula_i = 
\forall \avarprop_{i+1} \
\exists \avarprop_{i+2} \ \cdots \
\forall \avarprop_{2N-1} \ \exists \ \avarprop_{2N} \ \Psi(\avarprop_1,...,\avarprop_{2N}).
$$
So $\aformula_0$ is precisely $\aformula$. 
Similarly, for 
 $i \in \set{1,3,5,\ldots,2N-1}$, let $\aformula_i$ be
$$
\aformula_i = 
\exists \avarprop_{i+1} \
\forall \avarprop_{i+2} \ \cdots \
\forall \avarprop_{2N-1} \ \exists \ \avarprop_{2N} \ \Psi(\avarprop_1,...,\avarprop_{2N}).
$$
Observe that the free propositional variables in $\aformula_i$ are exactly $\avarprop_1,
\ldots, \avarprop_{i}$ and $\aformula_i$ is obtained from $\aformula$ by removing the $i$ first 
quantifications. 
Given a propositional valuation $\aregval: \set{\avarprop_1, \ldots,
\avarprop_i} \rightarrow \set{\top,\perp}$ for some $i \in \set{1, \ldots,2N}$,
we write $\overline{\aregval}$ to denote a register valuation such that its restriction
to $\set{1,\ldots,i,2N+1}$ satisfies: $\aregval(\avarprop_j) = \top$ iff $\overline{\aregval}(j) = 
0$ for $j \in \set{1,\ldots,i}$ and $\overline{\aregval}(2N+1) = 0$. One can show by induction that 
for $k \geq 0$, $\aregval \models \aformula_{i-1}$ (in \QBF) iff
$\arun_{\aautomaton}^{\infin}, k \models_{\overline{\aregval}}  \aformulabis_{i}$, where $\arun_{\aautomaton}^{\omega}$ denotes
the unique infinite run for $\aautomaton$.
Consequently, if $\aregval \models \aformula$ for some propositional valuation, then
$\arun_{\aautomaton}^{\omega}, 0 \models_{\overline{\aregval}}  \aformulabis$.
Similarly, if $\arun_{\aautomaton}^{\omega}, 0 \models_{\aregval}  \aformulabis$, then
there is a propositional valuation $\aregval'$ such that
$\overline{\aregval'} = \aregval$ and $\aregval' \models \aformula$.

For the finitary problem $\fpureMC{\fin}{}$, the above proof does not work because the 
occurrences of $\always$ related to universal quantification in the QBF formula
might lead to the end of the run, leaving no choice for the next  quantifications. Consequently, one need to use another 
deterministic one-counter automaton with $4N+1$ states such that the sequence of counter values from the accepting run is 
$(01)^{2N}0$ (again we omit useless $\mathtt{ifzero}$ transitions). 
Let us consider  the deterministic counter automaton $\aautomaton'$ below.

\begin{center}
\begin{picture}(90,20)(0,0)
{\setlength{\unitlength}{1.2mm}
\node[Nmarks=i](Q0)(0,5){$\aloc_1$}
\node(Q01)(10,15){$\aloc'_1$}
\drawedge(Q0,Q01){$\inc$}
\node(Q1)(23,5){$\aloc_2$}
\drawedge[ELside=l](Q01,Q1){$\dec$}
\node(Q11)(35,15){$\aloc'_2$}
\drawedge(Q1,Q11){$\inc$}
\node(Q2)(45,5){$\aloc_3$}
\drawedge[ELside=l](Q11,Q2){$\dec$}
\put(49,5){$\ldots \ldots \ldots$}
\node(QN)(67,5){$\aloc_{2N}$}
\node(QN1)(77,15){$\aloc'_{2N}$}
\drawedge(QN,QN1){$\inc$}
\node[Nmarks=r](QF)(87,5){$\aloc_{F}$}
\drawedge[ELside=l](QN1,QF){$\dec$}
}
\end{picture}
\end{center}

We shall build another formula $\aformulabis$ in $\fLTL{\downarrow,\emptyset}{}$ defined from the 
formulae below with $\aformulabis = \downarrow_{2N+1} \aformulabis_1$. 
\begin{itemize}
\itemsep 0 cm
\item $\aformulabis_{2N+1} = \Psi(\uparrow_1 \Leftrightarrow \uparrow_{2N+1},\ldots,\uparrow_{2N} \Leftrightarrow \uparrow_{2N+1})$,
\item for $i \in \set{1,...,N}$:
\begin{itemize}
\itemsep 0 cm
\item[\bf{--}] $\aformulabis_{2i} = \sometimes \big( (\mynext^{4N-4i+2}\ \top) \wedge \downarrow_{2i} \aformulabis_{2i+1} \big)$ and
\item[\bf{--}]   $\aformulabis_{2i-1} = \always \big ((\mynext^{4N-4i+4} \ \top) \Rightarrow \downarrow_{2i-1} \aformulabis_{2i}\big )$. 
\end{itemize}
Herein, $\top$ holds for the truth value that can be encoded with $\downarrow_1 \vee \neg \downarrow_1$ (remember there
are no propositional variables in the pure version of the model-checking problems). 
\end{itemize}
Using a similar proof by induction as the one done for the infinite case, we obtain that  $\aformula$ is satisfiable  iff $\aautomaton' \models^{\fin} \aformulabis$. 
\hfill $\Box$
\end{proof}
\fi

Observe that in the reduction for $\fpureMC{\omega}{}$, we use an unbounded number of registers 
(see Theorem~\ref{theorem-det}) but a fixed deterministic
one-counter automaton.

By Lemmas~\ref{lemma-purification-fo} and~\ref{lemma-ltl-fo}, we obtain the following corollary.

\begin{corollary}
$\fopureMC{\fin}{}$ and 
$\fopureMC{\infin}{}$ restricted to deterministic one-coun\-ter automata are \pspace-hard problems.
\end{corollary}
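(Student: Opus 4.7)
The plan is to reduce from the pure-LTL \pspace-hardness of Proposition~\ref{proposition-pspace-hardness} by translating into pure first-order logic, using Lemma~\ref{lemma-ltl-fo} for the LTL-to-FO step and invoking Lemma~\ref{lemma-purification-fo} when one prefers to start from the non-pure variant.

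First I would take a \QBF \ instance $\aformula$ and apply Proposition~\ref{proposition-pspace-hardness} to obtain a deterministic one-counter automaton $\aautomaton$ and a sentence $\aformulabis \in \fLTL{\downarrow,\emptyset}{}$ such that $\aformula$ is satisfiable iff $\aautomaton \models^{\alpha} \aformulabis$, for $\alpha \in \set{\fin,\infin}$. Next I would run Lemma~\ref{lemma-ltl-fo} on $\aformulabis$, yielding in linear time a formula $\aformulabis'$ with a unique free variable $\avariablebis_0$. The key point is that the inductive translation $\atranslation$ introduces a letter predicate only in its base clause $\atranslation(\aletter,\avariablebis) = \aletter(\avariablebis)$, so when the source alphabet is empty the target alphabet is empty as well; hence $\aformulabis' \in \fo{\emptyset}{\sim,<,+1}$. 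I would then close $\aformulabis'$ into a sentence $\aformulater$ by existentially quantifying $\avariablebis_0$ under the guard ``$\avariablebis_0$ is the first position'', which is expressible purely by $\neg \exists \avariable \ \avariable < \avariablebis_0$ (recycling variables to stay within the bounded-variable count). By the correctness clause of Lemma~\ref{lemma-ltl-fo}, this gives $\aautomaton \models^{\alpha} \aformulabis$ iff $\aautomaton \models^{\alpha} \aformulater$.

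The whole chain from $\aformula$ to $(\aautomaton,\aformulater)$ runs in logarithmic space and preserves determinism of $\aautomaton$, so \pspace-hardness of $\fopureMC{\alpha}{}$ restricted to deterministic one-counter automata follows for both $\alpha \in \set{\fin,\infin}$. Alternatively, and this is the route directly suggested by the statement, one could start with the non-pure LTL hardness given by Proposition~\ref{proposition-pspace-hardness}, translate via Lemma~\ref{lemma-ltl-fo} into $\fo{\locs}{\sim,<,+1}$, and then apply Lemma~\ref{lemma-purification-fo} to erase the state predicates while preserving determinism; both routes yield the announced hardness. I do not anticipate any serious obstacle: the only subtle point is to verify that Lemma~\ref{lemma-ltl-fo} respects the alphabet when instantiated to $\aalphabet = \emptyset$, which is immediate from the definition of $\atranslation$. \hfill $\Box$
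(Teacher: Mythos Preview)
Your proposal is correct and matches the paper's own justification, which is simply the one-line ``By Lemmas~\ref{lemma-purification-fo} and~\ref{lemma-ltl-fo}''. Your first route (start from the pure-LTL hardness of Proposition~\ref{proposition-pspace-hardness} and observe that Lemma~\ref{lemma-ltl-fo} over an empty alphabet yields a pure FO formula directly) is in fact a mild streamlining, since it makes the invocation of Lemma~\ref{lemma-purification-fo} unnecessary; your alternative route is exactly what the paper's citation of both lemmas suggests.
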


\subsection{Properties on runs for deterministic automata}
\label{section-properties}

Any deterministic one-counter automaton $\aautomaton$ 
has at most one infinite run, possibly with an infinite amount of counter values. 
If this run is not accepting,
i.e. no accepting state is repeated infinitely often, then 
for no formula $\aformula$, we have $\aautomaton \models^{\omega} \aformula$. 
We show below that we can decide in polynomial-time whether $\aautomaton$ has
accepting runs either finite or infinite. Moreover, we shall show that the infinite unique
run has some regularity.

Let $\arun_{\aautomaton}^{\omega}$ be the unique infinite run (if it exists) of the deterministic one-counter
automaton $\aautomaton$ represented by the following sequence of configurations
$$
\pair{\aloc_0}{n_0} \ \pair{\aloc_1}{n_1} \ \pair{\aloc_2}{n_2} \ldots
$$

Lemma~\ref{lemma-KKK} below is a key result to show the forthcoming \pspace \ upper bound.
Basically, the unique run of deterministic one-counter automata  has regularities that can be described
in polynomial size. 

\begin{lemma} \label{lemma-KKK}
Let $\aautomaton$ be a deterministic one-counter automaton with an infinite  run. 
There are $K_1, K_2, K_{inc}$ such that $K_1 + K_2 \leq \length{\locs}^3$,
$K_{inc} \leq \length{\locs}$ and 
for every $i \geq K_1$, $\pair{\aloc_{i + K_2}}{n_{i+K_2}} = \pair{\aloc_i}{n_i + K_{inc}}$.
\end{lemma}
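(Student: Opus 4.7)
The plan is to exploit determinism by locating, within the first $\length{\locs}^3$ positions of $\arun_{\aautomaton}^{\omega}$, a pair of configurations that either coincide or are translates of each other by a fixed counter offset. I split on whether the counter value $0$ is reached infinitely often along the run.

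Suppose first that $n_i = 0$ for infinitely many $i$, and list these positions as $z_0 < z_1 < \cdots$. By the pigeonhole principle applied to $\aloc_{z_0}, \ldots, \aloc_{z_{\length{\locs}}}$, two of them coincide, say $\aloc_{z_a} = \aloc_{z_b}$ with $a < b \leq \length{\locs}$. The configurations at $z_a$ and $z_b$ are both $\pair{\aloc_{z_a}}{0}$, so determinism forces $\arun_{\aautomaton}^{\omega}$ to repeat exactly from $z_a$ with period $z_b - z_a$. I set $K_1 = z_a$, $K_2 = z_b - z_a$ and $K_{inc} = 0$.

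Suppose now that $n_i = 0$ for only finitely many $i$, and let $T$ denote the largest such position. For every $i > T$ the counter is strictly positive, so the definition of determinism guarantees that exactly one outgoing transition applies at $\aloc_i$ (the $\mathtt{inc}$ or $\mathtt{dec}$ edge, never $\mathtt{ifzero}$). I therefore define a function $f : \locs \to \locs$ capturing ``positive-counter'' transitions; the state sequence from $T+1$ onward is the orbit of $\aloc_{T+1}$ under $f$, and hence has the classical ``rho'' shape with pre-period $t$ and period $c$ satisfying $t + c \leq \length{\locs}$. I let $K_2 = c$ and take $K_1$ to be the first position in the periodic part. Over one period the counter changes by a fixed integer $\Delta$, and since the counter must remain positive forever we have $0 \leq \Delta \leq c \leq \length{\locs}$; setting $K_{inc} = \Delta$ yields $\pair{\aloc_{i+K_2}}{n_{i+K_2}} = \pair{\aloc_i}{n_i + K_{inc}}$ for all $i \geq K_1$.

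To obtain the bound $K_1 + K_2 \leq \length{\locs}^3$ in both cases, I estimate the length of an individual excursion, i.e., the number of steps between two consecutive zero-positions (or between the last zero-position and the moment the state sequence enters its $f$-periodic regime). If such an excursion is nontrivial, its first step is an increment and the remainder follows the $f$-trajectory from $\aloc_{z_i+1}$; because this trajectory has tail-plus-cycle length at most $\length{\locs}$ and because reaching counter $0$ forces the cycle sum of $\pm 1$-increments to be nonpositive, the number of complete cycle iterations before returning to $0$ is bounded by the counter value on entering the cycle, itself at most $\length{\locs}$. A short calculation then yields an $O(\length{\locs}^2)$ bound per excursion. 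Combined with the observation that in the ``finitely many zeros'' case the states at the zero-positions must be pairwise distinct (otherwise the first case's argument produces a contradiction with the finiteness assumption), giving at most $\length{\locs}$ excursions before $T$, this delivers $K_1 + K_2 \leq \length{\locs}^3$, while $K_{inc} \leq \length{\locs}$ follows directly from the fact that each step changes the counter by at most one.

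The main obstacle is the excursion-length estimate: it requires turning the rho-shape of the $f$-orbit into a bound on the number of times a decreasing cycle must be traversed before the counter falls to $0$. Once this is in hand, the rest of the proof is a direct application of determinism together with two pigeonhole arguments.
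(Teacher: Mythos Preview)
Your approach follows the same two-case split as the paper's proof (infinitely versus finitely many ``zero events''), with the same pigeonhole on states at those events and the same idea of reading off $K_1,K_2,K_{inc}$ from a repeated state in the eventually zero-free suffix. The one substantive difference is how you bound the length of an excursion between consecutive zeros. The paper isolates this as a separate sublemma (Lemma~\ref{lemma-between-zero-tests}) and argues directly that the counter cannot exceed $\length{\locs}-1$ between two zero-tests---otherwise one extracts a repeated state with nondecreasing counter, which by determinism yields an infinite zero-test-free tail, contradicting the existence of the next zero. Bounding the counter by $\length{\locs}-1$ immediately bounds the number of distinct configurations, hence the excursion length, by $\length{\locs}^2$ with constant~$1$.

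Your rho-shape/cycle-sum route also reaches an $O(\length{\locs}^2)$ excursion bound, but be careful: the statement asks for $K_1+K_2 \leq \length{\locs}^3$ exactly, so you need the excursion bound to be at most $\length{\locs}^2$, not merely $O(\length{\locs}^2)$. The crude estimate ``tail $\leq \length{\locs}$, entry value $m \leq \length{\locs}$, cycle $\leq \length{\locs}$'' gives only $\length{\locs}+\length{\locs}^2$ per excursion, which overshoots. You must exploit $t+c \leq \length{\locs}$ and $m \leq 1+t$ jointly (for instance $1+t+mc \leq 1+(t+c)+tc \leq 1+\length{\locs}+\length{\locs}^2/4 \leq \length{\locs}^2$ for $\length{\locs}\geq 2$) to recover the needed constant. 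The paper's argument avoids this bookkeeping entirely.
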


Hence, the run $\arun_{\aautomaton}^{\omega}$ can be  encoded by its first $K_1 + K_2$
configurations. It is worth noting that we have deliberately decided to keep the three constants
$K_1$, $K_2$ and $K_{inc}$ in order to provide a more explicit analysis. 
\iffossacs
\else
\begin{proof} (Lemma~\ref{lemma-KKK})
We write ${\rm ZERO}(\aautomaton)$ to denote the set of positions of
$\arun_{\aautomaton}^{\omega}$ where a zero-test has been successful.
By convention, $0$ belongs to ${\rm ZERO}(\aautomaton)$ since in a run
we require that the first configuration is the initial configuration of
$\aautomaton$ with counter value $0$. Hence, $ {\rm ZERO}(\aautomaton)
\egdef \set{0} \cup \set{i > 0: n_i = n_{i+1} = 0} $.
Let us first establish Lemma~\ref{lemma-between-zero-tests} below.

\begin{lemma} \label{lemma-between-zero-tests}
 Let $i < j$ be in ${\rm ZERO}(\aautomaton)$ for which there is no $i < k < j$
with $k \in {\rm ZERO}(\aautomaton)$. Then, $(j - i) \leq \length{\locs}^2$.
\end{lemma}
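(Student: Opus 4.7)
The plan is to bound $j - i$ by combining two facts. First, the configurations $\pair{\aloc_k}{n_k}$ for $k \in (i, j]$ are pairwise distinct. Second, the counter values $n_k$ visited in this interval all lie in a set of size at most $\length{\locs}$. Together these give $j - i \leq \length{\locs} \cdot \length{\locs} = \length{\locs}^2$, matching the claim.

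For distinctness, suppose for contradiction that $\pair{\aloc_{k_1}}{n_{k_1}} = \pair{\aloc_{k_2}}{n_{k_2}}$ for some $i < k_1 < k_2 \leq j$. Since $\aautomaton$ is deterministic, any two configurations in $\arun_{\aautomaton}^{\omega}$ that coincide have identical futures, so for every $m \geq 0$ the configuration at position $k_1 + m$ equals that at $k_2 + m$, and both positions perform the same transition. Choosing $m = j - k_2$ shows that position $k_1 + (j - k_2)$ carries the configuration $\pair{\aloc_j}{0}$ and performs a zero-test. But $i < k_1 + (j - k_2) < j$, so this position would belong to ${\rm ZERO}(\aautomaton)$, contradicting the hypothesis that no element of ${\rm ZERO}(\aautomaton)$ lies strictly between $i$ and $j$.

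For the counter bound, let $M = \max \set{n_k : i < k \leq j}$. For each $c \in \set{1,\ldots,M}$, let $\tau_c$ be the first position in $(i, j]$ at which the counter equals $c$, and set $p_c = \aloc_{\tau_c}$; put also $\tau_0 = i+1$ and $p_0 = \aloc_{i+1}$. The key claim is that $p_0, p_1, \ldots, p_M$ are pairwise distinct, so that $M + 1 \leq \length{\locs}$. Indeed, suppose $p_{c_1} = p_{c_2}$ with $c_1 < c_2$. By induction on the offset $m$ one shows that for every $m \in \set{0, \ldots, j - \tau_{c_2}}$ the states at $\tau_{c_1} + m$ and $\tau_{c_2} + m$ agree, and the counter at the latter exceeds that at the former by exactly $c_2 - c_1$. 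The induction exploits that since no zero-test can occur in $(i, j)$, whenever the lower counter reaches $0$ the corresponding state must be of increment type (otherwise a zero-test would fire there), while the upper counter is at least $c_2 - c_1 \geq 1$, so both runs perform the same increment; when both counters are positive the action is entirely state-determined. Taking $m = j - \tau_{c_2}$ then forces the counter at the valid position $\tau_{c_1} + (j - \tau_{c_2}) \in (i, j)$ to equal $-(c_2 - c_1) < 0$, an absurdity. Hence the counter values visited by $(i, j]$ all lie in $\set{0, 1, \ldots, M}$ with $M+1 \leq \length{\locs}$, and combined with the distinctness of configurations this yields $j - i \leq \length{\locs}^2$.

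The main obstacle I expect is executing the inductive synchronisation of the two shifted runs cleanly. The delicate subcase is precisely when the lower run touches counter zero while the upper run remains positive; one must observe that the no-interior-zero-test hypothesis together with determinism forces the state at such a moment to admit only an increment transition, which is exactly what keeps the two shifted runs in lockstep. Once that observation is in place, everything else in the induction is routine, and the bound on $M$ via the auxiliary sequence $p_0, \ldots, p_M$ closes the proof.
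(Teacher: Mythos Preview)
Your proof is correct and follows essentially the same route as the paper's: both extract an increasing-counter subsequence (your first-hitting positions $\tau_0,\ldots,\tau_M$; the paper's $i_0,\ldots,i_s$), use a shifted-run synchronisation argument to show that two positions in $(i,j)$ with the same state cannot have non-decreasing counter (hence the states along the subsequence are pairwise distinct and $M<\length{\locs}$), and then combine this counter bound with distinctness of configurations to get $j-i\leq\length{\locs}^2$. One small wrinkle: your choice $\tau_0=i+1$ tacitly assumes $n_{i+1}=0$, which is guaranteed when $i>0$ but can fail when $i=0$ and the initial state increments; taking $\tau_0=i$ in that subcase (where the state at $i$ has only an increment transition, so your synchronisation step still goes through) fixes it immediately---and in fact the paper's own proof has the analogous edge-case looseness with $i_0=i$.
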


The proof essentially establishes that the counter cannot go beyond $\length{\locs}$ 
between two positions with successful zero-tests. 

\begin{proof} (Lemma~\ref{lemma-between-zero-tests})
First observe that there are no $i < k < k' < j$ such that
  $\aloc_k = \aloc_{k'}$ and $n_k \leq n_{k'}$. Indeed, if it is the
  case since there is no successful zero-tests in 
$\pair{\aloc_{i+1}}{n_{i+1}}
  \cdots \pair{\aloc_k}{n_k}  \linebreak[0] \cdots \pair{\aloc_{k'}}{n_{k'}}$
 and
  $\aautomaton$ is deterministic we would obtain from
  $\pair{\aloc_{k'}}{n_{k'}}$ an infinite path with no zero-test, a
  contradiction with the existence of $\pair{\aloc_j}{n_j}$.  Hence,
  if there are $i < k < k' < j$ such that $\aloc_k = \aloc_{k'}$, then
  $n_{k'} < n_{k}$.  Now suppose that there is $i < k < j$ such that
  $n_k \geq \length{Q}$.  We can extract a subsequence
  $\pair{\aloc_{i_0}}{n_{i_0}} \cdots
  \pair{\aloc_{i_{s}}}{n_{i_{s}}}$ from $\pair{\aloc_{i}}{n_{i}}
  \cdots \pair{\aloc_{n_k}}{n_k}$ such that $i_0 = i$, $i_{s} = k$
  and for $0 \leq l < s$, $n_{i_{l+1}} = n_{i_l} + 1$.
  Consequently, there are $l,l'$ such that $\aloc_{i_l} =
  \aloc_{i_{l'}}$ and $n_{i_l} < n_{i_{l'}}$, which leads to a
  contradiction from the above point.  Hence, for $ k \in \set{i,
    \ldots,j}$, $n_k \leq \length{Q} - 1$.  Since $\aautomaton$ is
  deterministic, this implies that 
  $(j - i) \leq \length{\locs} \times \length{\locs}$.

\hfill $\Box$
\end{proof}

Let us come back to the rest of the proof.

First, suppose that ${\rm ZERO}(\aautomaton)$ is infinite. 
  Let $i_0 < i_1 < i_2 < \ldots$ be the infinite sequence composed of
  elements from ${\rm ZERO}(\aautomaton)$ ($i_0 = 0$). There are $l,l'
  \leq \length{\locs}$ such that $\pair{\aloc_{i_l}}{n_{i_l}} =
  \pair{\aloc_{i_{l'}}}{n_{i_{l'}}}$. By
  Lemma~\ref{lemma-between-zero-tests}, $i_{l'} \leq \length{\locs}
  \times \length{\locs}^2$ . Take $K_1 = i_l$ and $K_2 = i_{l'} -
  i_{l}$. 
  
  Second, suppose that ${\rm ZERO}(\aautomaton)$ is finite, say equal
  to $\set{0, i_1, \ldots, i_l}$ for some $l \leq \length{\locs} - 1$
  (if $l \geq \length{\locs}$ we are in the first case).  By
  Lemma~\ref{lemma-between-zero-tests}, $i_{l} \leq (\length{\locs}-1)
  \times \length{\locs}^2$.  For all $i_l \leq k < k'$, if $\aloc_k =
  \aloc_{k'}$, then $n_{k} \leq n_{k'}$ (if it were not the case, there
  would eventually be another zero-test in the path starting with
  $\pair{\aloc_{i_l}}{n_{i_l}}$). Now there are $i_l \leq k < k' \leq
  i_l + \length{\locs}$ such that $\aloc_k = \aloc_{k'}$ and
  consequently $n_{k} \leq n_{k'}$.  Take $K_1 = k$, $K_2 = k' - k$
  and $K_{inc} = n_{k'} - n_{k}$. We have $K_{inc} \leq \length{\locs}$ because $k'-k \leq \length{\locs}$. 
\hfill $\Box$\\
\end{proof}

\fi 
$\arun_{\aautomaton}^{\omega}$ has a simple structure: it is composed
of a polynomial-size prefix 
$$\pair{\aloc_0}{n_0} \cdots \linebreak[0] \pair{\aloc_{K_1-1}}{n_{K_1-1}}
$$
followed by the polynomial-size loop  
$\pair{\aloc_{K_1}}{n_{K_1}} \cdots \linebreak[0] \pair{\aloc_{K_1+ K_2-1}}{n_{K_1+K_2-1}}$
repeated infinitely often. The effect of applying the loop consists in adding $K_{inc}$ to every counter
value. Testing whether  
$\aautomaton$ has an infinite run or
$\arun_{\aautomaton}^{\omega}$ is accepting amounts to check whether
there is an accepting state in the loop, which can be done in cubic time
in $\length{\locs}$. 
In the rest of this section, we assume that $\arun_{\aautomaton}^{\omega}$ is accepting. 
Similarly, testing whether $\aautomaton$ has a finite accepting
run amounts to check whether an accepting state occurs in the prefix or in the loop. 

When $K_{inc} = 0$ and $\aautomaton$ has an infinite run,
 $\arun_{\aautomaton}^{\omega}$ 
is exactly
$$\pair{\aloc_0}{n_0} \cdots \pair{\aloc_{K_1-1}}{n_{K_1-1}}
 (\pair{\aloc_{K_1}}{n_{K_1}} \cdots \pair{\aloc_{K_1+ K_2-1}}{n_{K_1+K2-1}})^{\omega}.
$$
It is then possible to apply a polynomial-space labelling algorithm \`a la CTL 
for model checking $\fLTL{\downarrow,\locs}{}$ formulae on $\aautomaton$.
However, one needs  to take care of register valuations, which explains why
unlike the polynomial-time algorithm for model checking ultimately periodic models
on LTL formulae (see e.g., \cite{Markey&Schnoebelen03}),  model checking restricted to deterministic 
automata with $K_{inc} = 0$  is 
still  \pspace-hard (see the proof of Proposition~\ref{proposition-pspace-hardness}).

\subsection{A \pspace \ symbolic model-checking algorithm}
\label{section-pspace}

In this section, we provide decision procedures for solving 
 $\foMC{\fin}{}$ and 
$\foMC{\infin}{}$ restricted to deterministic one-counter automata.
Let us introduce some notations.
Let  $\arun_{\aautomaton}^{\omega} = \pair{\aloc_0}{n_0} \ 
\pair{\aloc_1}{n_1} \linebreak[0]\ \pair{\aloc_2}{n_2} \ldots$ 
be the unique run of the deterministic one-counter
automaton $\aautomaton$.

We establish
that whenever $K_{inc} > 0$, two positions with identical counter values
are separated by a distance that is bounded by a polynomial in $\length{\locs}$. 

Let us introduce a few constants related to the one-counter automaton $\aautomaton$ when $K_{inc} > 0$.
\begin{itemize}
\itemsep 0 cm
\item Let $\beta_1, \beta_2 \geq 0$ be the smallest natural numbers 
      such that for every $i \in [K_1, K_1 + K_2 -1]$,
     $n_i \in [n_{K_1} - \beta_1, n_{K_1} + \beta_2]$.
\item Let $\gamma$ be the greatest value amongst $\set{n_0, \ldots,n_{K_1-1}}$.  
\item $L = 1 + \gamma + \left\lceil  \frac{\beta_1 + \beta_2}{K_{inc}} \right\rceil$ where
$\lceil \cdot \rceil$ denotes the ceiling function.
\end{itemize}

Intuitively, the constant $LK_2$ is greater than any distance between two positions belonging to the loop  of the unique infinite 
run of $\aautomaton$ which have the same counter value. The next lemma formalizes this idea.

\begin{lemma}\label{lemma-bound-L}
 Suppose $K_{inc} > 0$ and let $i,j$ be in $\Nat$. 
\begin{enumerate}
\itemsep 0 cm 
\item If $i,j \geq K_1$ and $|i-j| \geq LK_2$, then $n_i \neq n_j$.
\item If $i < K_1$ and $j \geq K_1 + LK_2$, then $n_i \neq n_j$.
\end{enumerate}
\end{lemma}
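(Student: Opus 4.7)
The plan rests on the periodic decomposition given by Lemma~\ref{lemma-KKK}. For any position $i \geq K_1$, write $i = K_1 + a K_2 + r$ with $a \geq 0$ and $0 \leq r < K_2$; iterating the identity $\pair{\aloc_{k+K_2}}{n_{k+K_2}} = \pair{\aloc_k}{n_k + K_{inc}}$ starting from $k = K_1 + r$ yields the normal form
\[
n_i \;=\; n_{K_1+r} + a\, K_{inc}, \qquad \text{with } n_{K_1+r} \in [n_{K_1}-\beta_1,\ n_{K_1}+\beta_2].
\]
This is the tool I would use throughout: the difference of counter values at two positions in the loop differs from a multiple of $K_{inc}$ by at most $\beta_1 + \beta_2$.

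For part~(1), suppose WLOG $j \geq i \geq K_1$ with $j - i \geq LK_2$, and write $i = K_1 + aK_2 + r$, $j = K_1 + bK_2 + s$ (with $0 \leq r,s < K_2$). First I would check that $b - a \geq L$: from $(b-a)K_2 + (s-r) \geq LK_2$ and $|s-r| < K_2$ one gets $(b-a)K_2 > (L-1)K_2$, hence $b - a \geq L$. Then
\[
n_j - n_i \;=\; (n_{K_1+s} - n_{K_1+r}) + (b-a)K_{inc} \;\geq\; -(\beta_1+\beta_2) + L\, K_{inc}.
\]
Unpacking $L = 1 + \gamma + \lceil (\beta_1+\beta_2)/K_{inc}\rceil$ and using $K_{inc}\lceil (\beta_1+\beta_2)/K_{inc}\rceil \geq \beta_1+\beta_2$ gives $L K_{inc} \geq (1+\gamma)K_{inc} + (\beta_1+\beta_2)$, whence $n_j - n_i \geq (1+\gamma)K_{inc} \geq 1 > 0$. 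Thus $n_j \neq n_i$.

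For part~(2), let $i < K_1$ (so $n_i \leq \gamma$) and $j \geq K_1 + LK_2$. Writing $j = K_1 + bK_2 + s$ as before, the inequality $bK_2 + s \geq LK_2$ with $0 \leq s < K_2$ forces $b \geq L$. Using $n_{K_1+s} \geq n_{K_1} - \beta_1 \geq -\beta_1$ (and $n_{K_1}\geq 0$),
\[
n_j \;=\; n_{K_1+s} + b\, K_{inc} \;\geq\; -\beta_1 + L\, K_{inc} \;\geq\; -\beta_1 + (1+\gamma)K_{inc} + \beta_1 + \beta_2 \;\geq\; 1 + \gamma,
\]
since $K_{inc} \geq 1$. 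Hence $n_j > \gamma \geq n_i$, so $n_j \neq n_i$.

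The only mildly delicate step is the case split showing $b - a \geq L$ (rather than merely $\geq L - 1$) when $s < r$ in part~(1); everything else is straightforward arithmetic, provided the normal form $n_i = n_{K_1+r} + aK_{inc}$ is stated once and for all and the definition of $L$ is unpacked via the inequality $K_{inc}\lceil x/K_{inc}\rceil \geq x$. I do not foresee any real obstacle beyond keeping track of the constants.
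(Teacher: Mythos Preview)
Your proof is correct and follows essentially the same route as the paper's: both use the normal form $n_i = n_{K_1+r} + a\,K_{inc}$ obtained from Lemma~\ref{lemma-KKK}, bound the difference of the loop-residues by $\beta_1+\beta_2$, derive $b-a \geq L$ (equivalently $a_i - a_j > L-1$) from $|i-j|\geq LK_2$, and then unpack $L$ via $K_{inc}\lceil(\beta_1+\beta_2)/K_{inc}\rceil \geq \beta_1+\beta_2$. The only cosmetic differences are that the paper keeps the nonnegative term $n_{K_1}$ in part~(2) rather than discarding it, and that your closing remark about a ``case split when $s<r$'' is unnecessary---your one-line argument from $|s-r|<K_2$ already gives $b-a \geq L$ uniformly.
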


\begin{proof}
 (1) Assume that  $i,j \geq K_1$ and $(i-j) \geq LK_2$. 
By using the Euclidean division, we introduce the following values:
$r_i=(i-K_1) \bmod(K_2)$, $r_j=(j-K_1) \bmod(K_2)$ and the quotients $a_i$ and $a_j$ such that
 $i-K_1=a_iK_2 + r_i$ and $j-K_1=a_jK_2 +r_j$. Note that
$0 \leq r_i,r_j < K_2$  and since $(i-j) \geq LK_2$, we necessarily have $a_i-a_j > L-1$. Using the definition of the 
constants $\beta_1$ and $\beta_2$, we know that $n_{r_i+K_1},n_{r_j+K_1} \in \set{n_{K_1}-\beta_1,\ldots,n_{K_1}+\beta_2}$. 
Since $i=a_iK_2 + r_i + K_1$ and $j=a_jK_2 +r_j + K_1$, by Lemma~\ref{lemma-KKK}, 
we have
$n_i=n_{r_i+K_1}+a_iK_{inc}$ and $n_j=n_{r_j+K_1}+a_jK_{inc}$.
 We obtain the following inequalities:
$$
\begin{array}{c}
n_{K_1}-\beta_1 + a_iK_{inc} \leq n_i \leq n_{K_1} + \beta_2 + a_iK_{inc}\\
n_{K_1}-\beta_1 + a_jK_{inc} \leq n_j \leq n_{K_1} + \beta_2 + a_jK_{inc}
\end{array}
$$
Consequently,
$$
 -\beta_1 - \beta_2 + (a_i-a_j)K_{inc} \leq n_i - n_j \leq \beta_1 + \beta_2 + (a_i-a_j)K_{inc}
$$
Considering that $(a_i-a_j) > L-1$ and using the definition of $L$, we obtain:
$$
 0 \leq \gamma K_{inc} <n_i-n_j 
$$

Hence $n_i\neq n_j$. The same proof can be done when we initially assume that  $(j-i) \geq LK_2$.\\

(2) Let us assume that $i < K_1$ and $j \geq K_1 + LK_2$. 
    Let $a_j,r_j$ be defined as for the case (1). By using the same method,
    we obtain the following inequality:

$$
n_{K_1}-\beta_1 + a_jK_{inc} \leq n_j \leq n_{K_1} + \beta_2 + a_jK_{inc}
$$

Sine $\beta_2 \geq 0$, we have:
$$
n_{K_1}-\beta_1 -\beta_2 + a_jK_{inc} -n_i \leq n_j-n_i
$$
Moreover, since $j \geq K_1 + LK_2$, we get $a_j \geq L$. Consequently,
$$
n_{K_1}-\beta_1 -\beta_2 + LK_{inc} -n_i \leq n_j-n_i
$$
Using the definition of $L$, we get
$$
n_{K_1} -\beta_1 -\beta_2 + (1  + \gamma)K_{inc} +  \beta_1 + \beta_2 - n_i 
\leq n_{K_1} -\beta_1 -\beta_2 + LK_{inc} -n_i \leq n_j-n_i
$$
Since $\gamma \times K_{inc} \geq n_i$, we get
$$
n_{K_1}  + K_{inc} \leq n_j-n_i
$$
Consequently, $n_j > n_i$.

\hfill $\Box$\\
\end{proof}

Let us introduce the intermediate sets $P^1_{\sim}$ and $P^2_{\sim}$:
$$
\begin{array}{c}
P^1_\sim=\set{\pair{i}{j} \in \set{0,\ldots,K_1 + LK_2 -1}^2 \mid n_i=n_j \ {\rm and} \ i \leq j}\\
P^2_\sim=\set{\pair{i}{j} \in \set{0,\ldots,K_1 + LK_2 -1}^2 \mid n_i=n_j+LK_{inc}  \ {\rm and} \ j < i}
\end{array}
$$
In the sequel, we write $P_\sim$ to denote the set $P^1_\sim \cup P^2_\sim$.
We will now characterize the positions of $\arun_{\aautomaton}^{\omega}$ using the set $P_\sim$ and the constants $L$, $K_1$, $K_2$ and $K_{inc}$ introduced before.

\begin{lemma}\label{lemma-setP} 
Suppose $K_{inc} > 0$ and let $j \geq i$  be in $\Nat$. Then, $n_i = n_j$ iff one the conditions below
is true.

\begin{enumerate}
\itemsep 0 cm
\item $\pair{i}{j} \in P^1_\sim$.
\item $i,j \geq K_1$, $\pair{K_1 + (i-K_1) \bmod (LK_2)}{K_1 + (j-K_1) \bmod (LK_2)} \in P_\sim$ and $(j-i) < LK_2$.
\end{enumerate}
%
\end{lemma}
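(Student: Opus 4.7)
The plan is to prove both directions by reducing arbitrary positions to representatives in the window $\{0,\ldots,K_1+LK_2-1\}$ using Lemma~\ref{lemma-KKK} iterated $L$ times, and then controlling distances with Lemma~\ref{lemma-bound-L}. Throughout, for $i \geq K_1$ write $i = K_1 + a_i \cdot L K_2 + (i'-K_1)$ where $i' = K_1 + (i-K_1)\bmod(LK_2)$ and $a_i = \lfloor (i-K_1)/(LK_2)\rfloor$, so that $K_1 \leq i' < K_1 + LK_2$. The key preliminary observation is that Lemma~\ref{lemma-KKK} gives $n_{i+K_2}=n_i+K_{inc}$ for $i\geq K_1$, hence by a straightforward induction
\[
n_i \;=\; n_{i'} + a_i \cdot L K_{inc} \qquad\text{whenever } i\geq K_1.
\]

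For the forward direction, assume $n_i=n_j$ with $j\geq i$. First I would dispose of the case $j < K_1 + LK_2$: both $i$ and $j$ lie in the window, so $(i,j)\in P^1_\sim$ by definition and condition~1 holds. Otherwise $j \geq K_1+LK_2$; by Lemma~\ref{lemma-bound-L}(2) this forces $i\geq K_1$ (else $n_i\neq n_j$), and by Lemma~\ref{lemma-bound-L}(1) we must have $j-i < LK_2$. Using the displayed identity, $n_i=n_j$ becomes $n_{i'} - n_{j'} = (a_j-a_i)LK_{inc}$. From $0 \leq j-i < LK_2$ together with $j-i = (a_j-a_i)LK_2 + (j'-i')$ and $-(LK_2-1) \leq j'-i' \leq LK_2-1$, it follows that $a_j-a_i \in \{0,1\}$. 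If $a_j=a_i$, then $n_{i'}=n_{j'}$ and $j'\geq i'$, so $(i',j')\in P^1_\sim$; if $a_j=a_i+1$, then $n_{i'}=n_{j'}+LK_{inc}$ and necessarily $j' < i'$ (otherwise $j-i\geq LK_2$), so $(i',j')\in P^2_\sim$. Either way condition~2 holds.

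For the backward direction, condition~1 is immediate. Assume condition~2. Using the same identity, $n_j - n_i = (a_j-a_i)LK_{inc} + (n_{j'}-n_{i'})$. If $(i',j')\in P^1_\sim$, then $n_{j'}=n_{i'}$ and $i'\leq j'$; the constraint $j-i < LK_2$ together with $j\geq i$ forces $a_j=a_i$ (if $a_j\geq a_i+1$, then $j-i\geq LK_2+(j'-i')\geq LK_2$, a contradiction), yielding $n_i=n_j$. If $(i',j')\in P^2_\sim$, then $n_{i'}=n_{j'}+LK_{inc}$ and $j' < i'$; from $j\geq i$ we get $a_j\geq a_i+1$, and from $j-i < LK_2$ combined with $j-i = (a_j-a_i)LK_2 + (j'-i')$ and $j'-i' > -LK_2$ we get $a_j=a_i+1$, which gives $n_j - n_i = LK_{inc} + (n_{j'}-n_{i'}) = 0$.

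The argument is essentially bookkeeping once the displayed periodicity identity and Lemma~\ref{lemma-bound-L} are in hand; the main subtlety I would pay attention to is the sign of $j'-i'$, since it is precisely this sign that discriminates between $P^1_\sim$ and $P^2_\sim$ and couples correctly with the two admissible values of $a_j-a_i$ under the bound $j-i < LK_2$.
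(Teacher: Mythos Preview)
Your proof is correct and follows essentially the same approach as the paper: the periodicity identity $n_i = n_{i'} + a_i L K_{inc}$ derived from Lemma~\ref{lemma-KKK}, the distance constraints from Lemma~\ref{lemma-bound-L}, and the case analysis on $a_j - a_i \in \{0,1\}$ are precisely the ingredients the paper uses. The only cosmetic difference is your outer case split in the forward direction (on whether $j < K_1 + LK_2$) versus the paper's three-way split on the positions of $i$ and $j$ relative to $K_1$; your split is in fact slightly more economical, folding the paper's cases ``$i,j < K_1$'' and ``$i < K_1 \leq j$'' into a single application of condition~(1).
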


\begin{proof}
Let $i,j \in \Nat$ such that $i \leq j$. 
If (1) is satisfied, then by definition of $P^1_\sim$, we get 
 $n_i=n_j$.


If (2) is satisfied, then let $r_i=(i-K_1) \bmod (LK_2)$, $r_j=(j-K_1) \bmod (LK_2)$
and $a_i$,$a_j$ be quotients such that 
$i-K_1= a_iLK_2 + r_i$ and  $j-K_1=a_jLK_2+r_j$. By Lemma~\ref{lemma-KKK}, we have $n_i=n_{r_i+K_1+a_iLK_2}=n_{r_i+K_1}+a_iLK_{inc}$ and $n_j=n_{r_j+K_1+a_j LK_2}=
n_{r_j+K_1}+a_j LK_{inc}$. Since $(j-i) < LK_2$, we have $(a_j-a_i)LK_2+(r_j-r_i) < LK_2$. Furthermore, we have by hypothesis  $\pair{K_1+r_i}{K_1+r_j} \in P_\sim$. We then distinguish two cases. First if $\pair{K_1+r_i}{K_1+r_j} \in P^1_\sim$, we deduce that $r_i \leq r_j$ and consequently $a_i=a_j$. Hence $n_i=n_j$. Second if $\pair{K_1+r_i}{K_1+r_j} \in P^2_\sim$, we deduce that  $r_j < r_i$ and consequently $a_j=a_i+1$. Hence $n_j=n_{r_j+K_1}+(a_i+1)LK_{inc}$ and since $n_{r_j+K_1}+LK_{inc}=n_{r_i+K_1}$, we obtain $n_i=n_j$.

We now suppose that $n_i=n_j$ and we perform the following case analysis.

\begin{itemize}
\itemsep 0 cm
\item Assume that $i <K_1$ and $j <K_1$. 
      By definition of $P^1_\sim$, we have $\pair{i}{j} \in P_\sim^1$ and the condition (1) is therefore
      satisfied.

\item Assume that $i,j \geq K_1$. By Lemma~\ref{lemma-bound-L}, we have $(j-i) < LK_2$ 
      (otherwise we would have $n_i \neq n_j$). 
      Let $r_i=(i-K_1) \bmod(LK_2)$, $r_j=(j-K_1) \bmod(LK_2)$ and
       $a_i$,$a_j$ be quotients such that $i-K_1= a_iLK_2 + r_i$ and  $j-K_1=a_jLK_2+r_j$.
      By Lemma~\ref{lemma-KKK}, we have 
      $n_i=n_{r_i+K_1+a_iLK_2}=n_{r_i+K_1}+a_iLK_{inc}$ and $n_j=n_{r_j+K_1+a_jLK_2}=n_{r_j+K_1}+a_jLK_{inc}$.
       We consider then two cases, according to the satisfaction of $a_i = a_j$.
\begin{itemize}
\itemsep 0 cm 
\item[\bf{--}] Suppose $a_i=a_j$. 
      Consequently, $n_{r_i+K_1}=n_{r_j+K_1}$ and since $i \leq j$, we have $r_i \leq r_j$. Condition (2) is 
      therefore satisfied.
\item[\bf{--}] Suppose $a_i \neq a_j$. Since $(j-i) < LK_2$,  
      necessarily, $a_j=a_i+1$. 
      Hence $n_{r_j+K_1}=n_i-(a_i+1)LK_{inc}$, and since $(a_j-a_i)LK_2+(r_j-r_i) < LK_2$, we also have 
       $r_j < r_i$ from which we can conclude that condition (2) is  again satisfied (we also
       have $n_{r_j+K_1} + L K_{inc} = n_{r_i+K_1}$).
\end{itemize}

\item Assume that $i <K_1$ and $j \geq K_1$. By  Lemma~\ref{lemma-bound-L}, we have $j < K_1 + LK_2$, and 
      consequently $\pair{i}{j} \in P^1_{\sim}$, hence condition (1) is satisfied.
\end{itemize}

All the values for $i,j$ are covered by the above analysis. 

\hfill $\Box$
\end{proof}

We show below how to reduce an instance of the model-checking problem (restricted to deterministic one-counter
automata) to an instance
of the problem mentioned in Theorem~\ref{theorem-without-data} by taking advantage of
Lemma~\ref{lemma-setP}. 
First let us build finite words $s,t$ over some finite alphabet $\aalphabet$.
By Lemma~\ref{lemma-purification-fo}, we can assume that the formula $\aformula$ belongs
to the pure fragment of $\fo{}{\sim,<,+1}$. 
\begin{itemize}
\itemsep 0 cm
\item $\aalphabet = \set{0, \ldots,K_1 + L K_2 -1}$.
\item $s = \set{0} \cdot \set{1} \cdots \cdot \set{K_1 - 1}$.
\item $t = \set{K_1} \cdot \set{K_1+1} \cdots \cdot \set{K_1 + L K_2 -1}$.
\end{itemize}
Given a 
sentence $\aformula$ in $\fo{}{\sim,<,+1}$ let us define a sentence 
$T(\aformula)$ in $\fo{\aalphabet}{<,+1}$ according to the 
definition below:
\begin{itemize}
\itemsep 0 cm
\item $T$ is the identity for atomic formulae of the form
       $\avariable < \avariablebis$ and $\avariable = \avariablebis + 1$.
\item $T$ is homomorphic for Boolean connectives and first-order quantification.
\item $T(\avariable \sim \avariablebis)=\big( \avariable \leq \avariablebis \wedge T_1(\avariable,\avariablebis)\big) \vee \big( \avariablebis \leq \avariable \wedge T_1(\avariablebis,\avariable)\big)$ and $T_1(\avariable,\avariablebis)$ is equal to
      $$
(\avariablebis - \avariable) <L K_2 ~\wedge~
      \big(\avariable < K_1 \Rightarrow \bigvee_{\pair{I}{J} \in P^1_{\sim}} I(\avariable) \wedge J(\avariablebis) \big)~\wedge~
      \big(\avariable \geq K_1 \Rightarrow \bigvee_{\pair{I}{J} \in P_{\sim}} I(\avariable) \wedge J(\avariablebis) \big) 
      $$
\end{itemize}

Observe that the formula of the form $(\avariablebis - \avariable) <  L K_2$ is a shortcut for a formula
in $\fo{\locs}{<,+1}$ of polynomial size in $\length{\aautomaton}$. 
For instance, when $\avariable \geq K_1  \wedge \avariablebis \geq K_1 \wedge \avariablebis > \avariable$ holds, $(\avariablebis - \avariable) <  L K_2$ is equivalent to a formula with at most 3 variables, namely
$$
\neg \bigwedge_{I=K_1}^{K_1 + L K_2-1} \exists \ \avariableter \ \avariable \leq \avariableter < \avariablebis 
                                \wedge
                                I(\avariableter).
$$


\begin{lemma}  \label{lemma-into-ms}
$\aautomaton \models^{\omega} \aformula$ iff
               $s \cdot t^{\omega} \models T(\aformula)$. 
\end{lemma}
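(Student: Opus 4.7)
The plan is to prove by structural induction on $\aformula$ the following stronger claim: for every subformula $\aformulabis$ of $\aformula$ (which by Lemma~\ref{lemma-purification-fo} we may take to contain no atom of the form $\aletter(\avariable)$) and every variable valuation $\avarval$ mapping the free variables of $\aformulabis$ to natural numbers, $\arun_{\aautomaton}^{\omega} \models_{\avarval} \aformulabis$ iff $s \cdot t^{\omega} \models_{\avarval} T(\aformulabis)$. Specializing to the closed formula $\aformula$ yields the lemma, since $\arun_{\aautomaton}^{\omega}$ is the unique candidate infinite accepting run of the deterministic automaton $\aautomaton$. A preliminary observation, immediate from the definitions of $s$ and $t$, is that the letter at position $k$ of $s \cdot t^{\omega}$ equals $k$ itself when $0 \leq k < K_1$, and equals $K_1 + ((k - K_1) \bmod LK_2)$ when $k \geq K_1$; hence reading the letter at $k$ recovers precisely the canonical representative of $k$ in the window $\{0, \ldots, K_1 + LK_2 - 1\}$ under the folding used to define $P^1_\sim$ and $P_\sim$.

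The Boolean cases and the existential quantifier case are immediate from the induction hypothesis since $T$ is homomorphic and both structures have $\Nat$ as their domain of positions; the atoms $\avariable < \avariablebis$ and $\avariable = \avariablebis + 1$ are preserved verbatim by $T$ and depend only on the valuation, which is the same in both structures. The crux is the atom $\avariable \sim \avariablebis$. Letting $i = \avarval(\avariable)$ and $j = \avarval(\avariablebis)$ and using the symmetric top-level disjunction in $T(\avariable \sim \avariablebis)$, I reduce to the case $i \leq j$ and show that $n_i = n_j$ iff $T_1(\avariable, \avariablebis)$ holds in $s \cdot t^{\omega}$ at $\avarval$. The forward direction invokes Lemma~\ref{lemma-setP}: if $\pair{i}{j} \in P^1_\sim$, the letter-mapping observation together with Lemma~\ref{lemma-bound-L}(2) witnesses the disjunct in the $\avariable < K_1$ branch of $T_1$; in the other case $i, j \geq K_1$, the modular projection produces a pair in $P_\sim$ whose letters match those at $i$ and $j$, and the conjunct $(\avariablebis - \avariable) < LK_2$ is directly satisfied. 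Conversely, from matching letters in $P^1_\sim$ or $P_\sim$ one reads off positions with the correct counter relation between their canonical representatives, and Lemma~\ref{lemma-KKK} combined with the bound $(\avariablebis - \avariable) < LK_2$ lifts this back to $n_i = n_j$.

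The main obstacle is the bookkeeping for the $\sim$ atom: each of the two branches of $T_1$ must be shown to mirror the corresponding case of Lemma~\ref{lemma-setP}, the abbreviation $(\avariablebis - \avariable) < LK_2$ has to be expanded into a genuine $\fo{\aalphabet}{<, +1}$ formula of polynomial size over the bounded alphabet $\aalphabet$ (using three recycled variables as illustrated in the surrounding discussion), and one must verify that positions lying in later loop iterations whose letters collide with canonical positions in the first window are ruled out precisely by this bound together with Lemma~\ref{lemma-KKK}.
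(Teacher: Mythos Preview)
Your proposal is correct and follows essentially the same route as the paper: structural induction reducing everything to the atom $\avariable \sim \avariablebis$, the same observation that the letter at position $k$ of $s\cdot t^{\omega}$ is the canonical representative of $k$ in $\{0,\ldots,K_1+LK_2-1\}$, and the case split on whether the smaller position lies in the prefix or the loop, with Lemma~\ref{lemma-setP} carrying the correspondence. The only cosmetic difference is that in the converse direction the paper appeals to Lemma~\ref{lemma-setP} itself rather than to Lemma~\ref{lemma-KKK}, but since Lemma~\ref{lemma-setP} is precisely the packaged biconditional built from Lemma~\ref{lemma-KKK} and Lemma~\ref{lemma-bound-L}, your argument and the paper's are interchangeable.
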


\begin{proof} The proof is by structural induction. We show that for each subformula
$\aformulabis$ of $\aformula$ and for each variable valuation $\avarval$,
$\aautomaton \models^{\omega}_{\avarval} \aformulabis$ iff
               $s \cdot t^{\omega} \models_{\avarval} T(\aformulabis)$.
Since the formula $\aformula$ belongs
to the pure fragment of $\fo{}{\sim,<,+1}$
the only case that needs to be checked is for atomic formulae of the form 
$\avariable \sim \avariablebis$. Before
giving the rest of the proof,  
we remark that since $\sigma$ is an infinite word $s \cdot t^\omega$ built over the alphabet 
$\Sigma=\set{0,\ldots,K_1 + LK_2 -1}$, for all  $i \geq K_1$, we have 
$\sigma(i)=K_1+ (i-K_1) \bmod (LK_2)$. Let $\avarval$ be a variable valuation such that 
$\avarval(\avariable)$ and $\avarval(\avariablebis)$ are defined (if $\avarval(\avariable)$ or 
$\avarval(\avariablebis)$ is not defined, then it is easy to show that $\aautomaton \not\models^{\omega}_{\avarval} \avariable 
\sim \avariablebis$ and that  $s \cdot t^{\omega} \not\models_\avarval T(\avariable \sim \avariablebis)$). 

First we suppose that $\aautomaton \models_{\avarval}^\omega \avariable \sim \avariablebis$, this means that the unique infinite accepting 
run $\arun_\aautomaton^\omega$ of $\aautomaton$ satisfies $\arun_\aautomaton^\omega \models_\avarval \avariable \sim \avariablebis$. 
Hence, we have $n_{\avarval(\avariable)}=n_{\avarval(\avariablebis)}$. We show that $s \cdot t^\omega \models_\avarval T(\avariable \sim \avariablebis)$. We suppose $\avarval(\avariable) \leq \avarval(\avariablebis)$ (the proof is similar 
for the case $\avarval(\avariablebis) \leq \avarval(\avariable)$). We proceed by a case analysis 
using Lemma~\ref{lemma-setP} and the definition for $T(\avariable \sim \avariablebis)$:

\begin{itemize}
\itemsep 0cm
\item If $\avarval(\avariable) < K_1$, then necessarily 
$(\avarval(\avariablebis)-\avarval(\avariable)) < LK_2$, hence $\sigma(\avarval(\avariable))=\avarval(\avariable)$ and $\sigma(\avarval(\avariablebis))=\avarval(\avariablebis)$, furthermore by Lemma~\ref{lemma-setP} $\pair{\avarval(\avariable)}{\avarval(\avariablebis)} \in P^1_{\sim}$, so we have $\sigma \models_\avarval T(\avariable \sim \avariablebis)$.
\item If $\avarval(\avariable) \geq K_1$, again 
we have $(\avarval(\avariablebis)-\avarval(\avariable)) < LK_2$ and also $\sigma(\avarval(\avariable))=K_1 + (i-\avarval(\avariable)) \bmod(LK_2)$ and 
$\sigma(\avarval(\avariablebis))=K_1 + (i-\avarval(\avariablebis)) \bmod(LK_2)$. Using 
Lemma~\ref{lemma-setP}, we have 
$\pair{\sigma(\avarval(\avariable))}{\sigma(\avarval(\avariablebis))} \in P_\sim$, which implies
$\sigma \models_\avarval T(\avariable \sim \avariablebis)$. 
\end{itemize}

Now, let us suppose that  $s \cdot t^\omega \models_\avarval T(\avariable \sim \avariablebis)$. 
Again, we perform a case analysis and we suppose that 
$\avarval(\avariable) \leq \avarval(\avariablebis)$ (the proof for the case $\avarval(\avariablebis) 
\leq \avarval(\avariable)$ is the same):

\begin{itemize}
\itemsep 0 cm
\item If $\avarval(\avariable) < K_1$ then $\avarval(\avariablebis) < K_1 + LK_2$. Hence 
$\sigma(\avarval(\avariable))=\avarval(\avariable)$ and $\sigma(\avarval(\avariablebis))=\avarval(\avariablebis)$. 
Since $\pair{\avarval(\avariable)}{\avarval(\avariablebis)} \in P^1_{\sim}$, we have 
$n_{\avarval(\avariable)}=n_{\avarval(\avariablebis)}$.
\item If $\avarval(\avariable) \geq K_1 $ then 
$(\avarval(\avariablebis)-\avarval(\avariable)) < LK_2$ and $\pair{\sigma(\avarval(\avariable))}{\sigma(\avarval(\avariablebis))} \in 
P_\sim$. Since $\sigma(\avarval(\avariable))=K_1 + (i-\avarval(\avariable)) \bmod(LK_2)$ and 
$\sigma(\avarval(\avariablebis))=K_1 + (i-\avarval(\avariablebis)) \bmod(LK_2)$, we obtain using Lemma \ref{lemma-setP} that $n_{\avarval(\avariable)}=n_{\avarval(\avariablebis)}$.
\end{itemize}
\hfill $\Box$
\end{proof}

This  allows us to characterize  the complexity of model checking.

\begin{theorem} \label{theorem-det} 
\ifptime
$\foMC{\omega}{}$ restricted to deterministic one-counter automata is  \pspace-com\-ple\-te
and its restriction to $n \geq 1$ variables is in \ptime.
\else
$\foMC{\infin}{}$ restricted to deterministic one-counter automata is  \pspace-com\-ple\-te.
\fi 
\end{theorem}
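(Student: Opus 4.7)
The plan is to combine the lower bound that has already been established for the pure fragment with an upper bound obtained via a reduction to model checking $\fo{}{<,+1}$ over ultimately periodic words, where we can invoke Theorem~\ref{theorem-without-data}. The lower bound for $\foMC{\infin}{}$ restricted to deterministic one-counter automata follows immediately from Proposition~\ref{proposition-pspace-hardness} combined with the translation in Lemma~\ref{lemma-ltl-fo}: the \pspace-hard instances of $\fpureMC{\infin}{}[\sometimes]$ are translated into first-order sentences of polynomial size that use a fixed number of variables.

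For the upper bound I would first handle degenerate cases cheaply: compute whether $\aautomaton$ admits any infinite run at all and, if so, whether that unique run visits an accepting state infinitely often; by Lemma~\ref{lemma-KKK} and the remarks following it, this amounts to inspecting the polynomial-size loop and can be decided in polynomial time. If the unique run is not accepting, we return ``no''. Otherwise I would invoke Lemma~\ref{lemma-purification-fo} to reduce to the pure fragment of $\fo{}{\sim,<,+1}$ (this only affects the bookkeeping: $\aautomaton_P$ and $\aformula_P$ have polynomial size and $\aautomaton_P$ remains deterministic). Then I would compute the constants $K_1$, $K_2$, $K_{inc}$, $\beta_1$, $\beta_2$, $\gamma$ of Section~\ref{section-properties}; since the counter advances by at most one per step, all of these are bounded by a polynomial in $\length{\locs}$, and hence so is $L$.

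With these constants in hand, the procedure builds the finite words $s$ and $t$ over the alphabet $\aalphabet = \set{0,\ldots,K_1 + LK_2 - 1}$ as specified before Lemma~\ref{lemma-into-ms}; both $\length{s}$ and $\length{t}$ are polynomial in $\length{\aautomaton}$, and so is $\length{\aalphabet}$. It then computes the translation $T(\aformula)$ into $\fo{\aalphabet}{<,+1}$. Each atomic formula $\avariable \sim \avariablebis$ expands into a disjunction indexed by pairs in $P_{\sim}$, which has polynomial cardinality; the auxiliary formulae $\avariable < K_1$ and $(\avariablebis - \avariable) < LK_2$ are expressible in $\fo{\aalphabet}{<,+1}$ of polynomial size (e.g.\ by enumerating values of the alphabet and recycling three variables, as in the proof of Lemma~\ref{lemma-ltl-fo}). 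Hence $T(\aformula)$ has size polynomial in $\length{\aautomaton} + \length{\aformula}$. By Lemma~\ref{lemma-into-ms}, $\aautomaton \models^{\infin} \aformula$ iff $s \cdot t^{\omega} \models T(\aformula)$, and by Theorem~\ref{theorem-without-data} the latter is decidable in space $\mathcal{O}((\length{s}+\length{t}) \cdot \length{T(\aformula)}^2)$, hence in polynomial space overall.

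The only delicate point is the case $K_{inc}=0$, in which the constant $L$ as defined is not meaningful because of the division by $K_{inc}$; but in that case $\arun_{\aautomaton}^{\omega}$ is literally of the form $s' \cdot (t')^{\omega}$ with $s'$ and $t'$ of polynomial length, and the same reduction to Theorem~\ref{theorem-without-data} applies directly with $s', t'$ in place of $s, t$ and with $T$ collapsing equality of counter values to a disjunction over pairs in a suitably adapted $P_{\sim}$. I expect the main obstacle is not any single step but rather verifying that the syntactic blow-up of the translation $T(\aformula)$, including the shortcuts $\avariable < K_1$ and $(\avariablebis - \avariable) < LK_2$, genuinely remains polynomial while using only a bounded number of variables; once this is checked carefully, everything else is routine composition of the results already proved.
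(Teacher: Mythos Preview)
Your proposal is correct and follows essentially the same route as the paper: purify via Lemma~\ref{lemma-purification-fo}, handle the degenerate and $K_{inc}=0$ cases separately, build the polynomial-size words $s,t$ and the translation $T(\aformula)$, invoke Lemma~\ref{lemma-into-ms} together with Theorem~\ref{theorem-without-data} for the upper bound, and derive hardness from Proposition~\ref{proposition-pspace-hardness} through Lemma~\ref{lemma-ltl-fo}. The paper's own proof is organized identically, differing only in that it is less explicit about the polynomial bounds on the constants and on the size of $T(\aformula)$ than you are.
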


\begin{proof} 
Let $\aautomaton$ be a one-counter automaton and $\aformula$ be a pure formula in
$\fo{}{\sim,<,+1}$. If either $\aautomaton$ has no infinite run or its infinite run is not
accepting, then this can be checked in polynomial-time in $\length{\aautomaton}$. 
In that case $\aautomaton \models^{\omega} \aformula$ does not hold.
Moreover, observe that if $\aautomaton$ has no infinite run, then the length of the maximal
finite run is in $\mathcal{O}(\length{\locs}^3)$  by using arguments
from Lemma~\ref{lemma-KKK}. 

In the case $\aautomaton$ has an infinite accepting run and $K_{inc} > 0$,  
as shown previously the prefixes 
$s$, $t$ as well as the formula $T(\aformula)$ can be computed
in in polynomial time in $\length{\aautomaton} + \length{\aformula}$. Moreover, 
by Theorem~\ref{theorem-without-data}~\cite{Markey&Schnoebelen03},
$s \cdot t^{\omega} \models T(\aformula)$ can be checked in polynomial space in
$\length{s} + \length{t} + \length{T(\aformula)}$. 
In the case $K_{inc} = 0$, the prefixes $s$ and $t$ are defined as follows with
$\aalphabet = \set{0, \ldots,K_1 + K_2 -1}$: 
 $s = \set{0} \cdot \set{1} \cdots \cdot \set{K_1 - 1}$ and
$t = \set{K_1} \cdot \set{K_1+1} \cdots \cdot \set{K_1 + K_2 -1}$.
The map $T(\cdot)$ is defined as previouly except that
$T(\avariable \sim \avariablebis) = \bigvee_{\pair{I}{J} \in P^3_{\sim}} 
I(\avariable) \wedge J(\avariablebis)$
with $P^3_\sim=\set{\pair{i}{j} \in \set{0,\ldots,K_1 + K_2 -1}^2 \mid n_i=n_j}$. 

Hence, $\fopureMC{\infin}{}$ is in polynomial space. Using the Purification Lemma \ref{lemma-purification-fo}, we deduce that  $\foMC{\infin}{}$ is also in polynomial space.
The \pspace-hardness is a consequence of the \pspace-hardness of 
$\fMC{\omega}{}$ (since there is an obvious logspace translation from $\fLTL{\locs}{}$ into 
$\fo{\locs}{\sim,<,+1}$).
\ifptime
When the number of variables $n \geq 1$ is fixed, 
$T(\aformula)$ has at most $n + 2$ variables and since first-order model-checking with a bounded number
of variables is in \ptime, we get the \ptime \ upper bound for 
$\foMC{\omega}{n}$.
\fi
\hfill $\Box$\\
\end{proof}

\begin{theorem} \label{theorem-det-finite} 
\ifptime
$\foMC{\fin}{}$ restricted to deterministic one-counter automata is  \pspace-com\-ple\-te
and its restriction to $n \geq 1$ variables is in \ptime.
\else
$\foMC{\fin}{}$ restricted to deterministic one-counter automata is  \pspace-com\-ple\-te.
\fi 
\end{theorem}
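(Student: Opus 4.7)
The plan is to reduce the finitary problem to the infinitary one already handled by Theorem~\ref{theorem-det}, using the same combinatorial analysis of the unique run of a deterministic one-counter automaton together with a relativization of quantifiers to positions $\leq z$, where $z$ is a fresh variable denoting a guessed endpoint. As a preprocessing step, by Lemma~\ref{lemma-purification-fo}, I may assume $\aformula$ is pure.

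A degenerate case should be disposed of separately: if $\aautomaton$ has no infinite run, then the arguments underlying Lemma~\ref{lemma-KKK} yield that its unique maximal finite run has length $O(\length{\locs}^3)$. One enumerates the polynomially many prefixes ending in an accepting state and, for each, performs ordinary $\fo{}{\sim,<,+1}$-model checking on the resulting polynomial-size data word, which is in \pspace.

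In the main case, $\aautomaton$ has an infinite run $\arun_{\aautomaton}^{\omega}$, and I reuse verbatim the ultimately periodic word $\sigma = s \cdot t^{\omega}$ built for Theorem~\ref{theorem-det} (with period $LK_2$ and translation $T$ via $P_\sim$ if $K_{inc}>0$, and period $K_2$ via $P^3_\sim$ if $K_{inc}=0$). The crucial observation is that the data word associated with the length-$(N{+}1)$ prefix of $\arun_{\aautomaton}^{\omega}$ coincides with the length-$(N{+}1)$ prefix of the data word associated with $\arun_{\aautomaton}^{\omega}$ itself; in particular, the equivalence $\sim$ restricted to positions $\leq N$ does not depend on whether we read the run as finite or infinite, so Lemma~\ref{lemma-into-ms} still correctly translates $\sim$-atoms when all positions considered are $\leq N$. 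I then introduce a fresh variable $z$ and a relativized translation $T'(\aformula, z)$ which is homomorphic on Boolean connectives, coincides with $T$ on atomic formulae, and replaces each $\exists \avariable\, \aformulabis$ by $\exists \avariable.\, (\avariable \leq z) \wedge T'(\aformulabis,z)$. I also define $F_{\text{cond}}(z)$, a polynomial-size disjunction of unary label predicates, asserting that position $z$ corresponds to an accepting state of the run; this is definable in $\fo{\aalphabet}{<,+1}$ because the label $\sigma(z) \in \aalphabet$ uniquely determines $\aloc_z$ (for positions $\geq K_1$ one uses that $K_2$ divides $LK_2$). The algorithm then decides $\sigma \models \exists z.\, F_{\text{cond}}(z) \wedge T'(\aformula,z)$ by Theorem~\ref{theorem-without-data}, which runs in polynomial space.

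Correctness is by a straightforward induction on $\aformula$ establishing that for every variable valuation $\avarval$ with $\avarval(z) = N$ and $\avarval(\avariable) \leq N$ for every free variable of $\aformula$, $\sigma \models_{\avarval} T'(\aformula,z)$ iff the length-$(N{+}1)$ prefix of $\arun_{\aautomaton}^{\omega}$ satisfies $\aformula$ under $\avarval$; atomic cases follow from Lemma~\ref{lemma-into-ms} and the quantifier step is immediate from the relativization. The \pspace-lower bound is inherited from Proposition~\ref{proposition-pspace-hardness} via the obvious logspace translation from $\fLTL{\downarrow,\locs}{}$ into $\fo{\locs}{\sim,<,+1}$. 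The only subtle point is checking that the translation $T(\avariable \sim \avariablebis)$ of Theorem~\ref{theorem-det}, originally designed so that $\sigma$-positions encode run-positions modulo $LK_2$, remains sound on truncated ranges $[0,z]$; this holds because $\sim$ on the run is unaffected by truncation and $T(\avariable \sim \avariablebis)$ inspects only the labels $\sigma(\avariable), \sigma(\avariablebis)$ and their order and distance, all of which are preserved by the finite restriction.
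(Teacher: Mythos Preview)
Your proposal is correct and follows essentially the same approach as the paper: relativize the translation $T$ by a fresh endpoint variable (the paper's $\avariable_{end}$, your $z$), add a disjunctive predicate picking out positions whose state is accepting (the paper's $P_F$, your $F_{\text{cond}}$), and invoke Theorem~\ref{theorem-without-data}. The only cosmetic difference is your handling of the no-infinite-run case by direct enumeration of prefixes, whereas the paper still routes it through the $s \cdot t^{\omega}$ framework with a dummy $t = \{\bot\}$; both are clearly in \pspace.
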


\begin{proof} 
Let $\aautomaton$ be a one-counter automaton and $\aformula$ be a pure formula in
$\fo{}{\sim,<,+1}$. If  $\aautomaton$ has an infinite run, then
the finite words $s$ and $t$ are computed as  in the infinitary case.
We then need another intermediate set $P_F$ which will characterize the positions of the unique run labelled with an accepting state:
$$
\begin{array}{c}
P_F=\set{i \in \set{0,\ldots,K_1 + LK_2 -1} \mid \aloc_i \in F}
\end{array}
$$
The pure formula $\aformula$ is then translated into 
$$
\exists \ \avariable_{end} \ 
          (\bigvee_{I \in P_F} I(\avariable_{end})) \wedge T'(\aformula),
$$
where $T'(\aformula)$ is defined as $T(\aformula)$ for the infinitary case except
that the clause for first-order quantification becomes
$T'(\exists \ \avariable \ \aformulabis) =
\exists \ \avariable \ \avariable \leq \avariable_{end} \wedge T'(\aformulabis)$ (relativization). 
\ifptime
As in the proof of Theorem~\ref{theorem-det}, we get the \pspace \ upper bound for 
$\foMC{< \omega}{}$ and the \ptime \ upper bound for each subproblem $\foMC{< \omega}{n}$.
~~\hfill $\Box$\\
\else
As in the proof of Theorem~\ref{theorem-det}, we get the \pspace \ upper bound for 
$\foMC{\fin}{}$.
In the case $\aautomaton$ has no infinite run, then the lengh $K$ of the maximal 
finite run is in
$\mathcal{O}(\length{\locs}^3)$ and it can therefore be computed in polynomial-time. 
The prefixes $s$ and $t$ are defined as follows with
$\aalphabet = \set{0, \ldots,K - 1, \perp}$: 
 $s = \set{0} \cdot \set{1} \cdots \cdot \set{K - 1}$ and
$t = \set{\perp}$.
The map $T(\cdot)$ is defined as previouly except that
$T(\avariable \sim \avariablebis) = \bigvee_{\pair{I}{J} \in P^4_{\sim}} 
I(\avariable) \wedge J(\avariablebis)$
with $P^4_\sim=\set{\pair{i}{j} \in \set{0,\ldots,K -1}^2 \mid n_i=n_j}$. 
The pure formula $\aformula$ is  translated into 
$
\exists \ \avariable_{end} \ 
          (\bigvee_{I \in P_F'} I(\avariable_{end})) \wedge 
\neg \perp(\avariable_{end}) \wedge T'(\aformula),
$
with $P_F'=\set{i \in \set{0,\ldots,K-1} \mid \aloc_i \in F}$.
The formula $T'(\aformula)$ is defined as $T(\aformula)$ for the infinitary case except
that the clause for first-order quantification becomes
$T'(\exists \ \avariable \ \aformulabis) =
\exists \ \avariable \ \avariable \leq \avariable_{end} \wedge  T'(\aformulabis)$.
~~\hfill $\Box$\\
\fi 
\end{proof}

This improves the complexity bounds from~\cite{Demri&Lazic&Sangnier08a}. Using the translation from
$\fLTL{\downarrow}{}$  into  $\fo{}{\sim,<,+1}$ from Lemma~\ref{lemma-ltl-fo},
 we deduce the 
following corollary.

\begin{corollary} 
\ifptime
$\fMC{\infin}{}$ and $\fMC{\fin}{}$ are \pspace-complete and their 
restriction to $n \geq 1$ registers are in \ptime. 
\else
$\fMC{\fin}{}$ and $\fMC{\infin}{}$ are \pspace-complete.
\fi
\end{corollary}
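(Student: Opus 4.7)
The plan is to obtain this corollary as an essentially immediate consequence of the preceding results, combining the first-order bounds of Theorems~\ref{theorem-det} and~\ref{theorem-det-finite} with the standard LTL-to-FO translation of Lemma~\ref{lemma-ltl-fo} and the hardness already proved for the pure fragment in Proposition~\ref{proposition-pspace-hardness}.

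For the upper bound, I would take an instance $(\aautomaton, \aformula)$ with $\aautomaton$ a deterministic one-counter automaton and $\aformula \in \fLTL{\downarrow,\locs}{n}$. By Lemma~\ref{lemma-ltl-fo} there is a sentence $\aformula' \in \fo{\locs}{\sim,<,+1}$, computable in linear time in $\length{\aformula}$, such that for every accepting run $\arun$ of $\aautomaton$ and every $i \geq 0$ we have $\arun, i \models \aformula$ iff $\arun \models_{\avarval} \aformula'$ for the variable valuation assigning $i$ to the distinguished free variable $\avariablebis_0$. Taking $i = 0$, this reduces $\aautomaton \models^{\fin} \aformula$ (respectively $\aautomaton \models^{\infin} \aformula$) to checking $\aautomaton \models^{\fin} \aformula''$ (respectively $\aautomaton \models^{\infin} \aformula''$) where $\aformula''$ is obtained from $\aformula'$ by binding $\avariablebis_0$ to the first position, a linear-size modification. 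Theorems~\ref{theorem-det-finite} and~\ref{theorem-det} then give the \pspace \ upper bound for both the finitary and the infinitary case.

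For the lower bound, Proposition~\ref{proposition-pspace-hardness} already establishes \pspace-hardness of $\fpureMC{\fin}{}$ and $\fpureMC{\infin}{}$ over deterministic one-counter automata, via reduction from \QBF. Since $\fpureMC{\alpha}{}$ is a syntactic restriction of $\fMC{\alpha}{}$, this hardness transfers verbatim to $\fMC{\fin}{}$ and $\fMC{\infin}{}$ restricted to deterministic one-counter automata.

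There is no real obstacle here: the whole argument is a one-step assembly of previously proved components. The only mild care point is ensuring that the translation in Lemma~\ref{lemma-ltl-fo} preserves the parameter $n$ appropriately (so that bounded-register LTL maps to FO with a bounded number of variables, which would matter for the optional \ptime \ refinement gated by the \texttt{\textbackslash ifptime} switch), and that the first-order instance produced has polynomial size in the original LTL input, so that composition with the \pspace \ FO procedure still runs in polynomial space.
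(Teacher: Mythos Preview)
Your proposal is correct and follows exactly the paper's approach: the paper's justification is the single sentence ``Using the translation from $\fLTL{\downarrow}{}$ into $\fo{}{\sim,<,+1}$ from Lemma~\ref{lemma-ltl-fo}, we deduce the following corollary,'' which implicitly combines Lemma~\ref{lemma-ltl-fo} with Theorems~\ref{theorem-det} and~\ref{theorem-det-finite} for the upper bound, while the lower bound is inherited from Proposition~\ref{proposition-pspace-hardness}. You have simply made these steps explicit, including the minor but correct observation that the free variable $\avariablebis_0$ must be bound to the initial position to obtain a sentence.
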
 

\section{Model checking nondeterministic one-counter automata}
\label{section-undecidability}

In this section, we show that 
several model-checking problems over nondeterministic one-counter automata
are undecidable by reducing decision problems for Minsky machines by following
a principle introduced in~\cite{David04}. 
Undecidability is preserved even in presence of a unique register. 
This is quite surprising  since 
$\fin$-SAT-LTL$^{\downarrow}$ restricted to one register 
and satisfiability for $\fon{}{2}(\sim,<,+1)$ 
are decidable~\cite{Bojanczyketal06a,Demri&Lazic09}.

In order to illustrate the significance of the following results, it is worth
recalling that the halting problem for
Minsky machines with incrementing errors is reducible to finitary 
satisfiability for LTL with one register~\cite{Demri&Lazic09}.  
We show below that, if we have existential model checking of one-counter automata instead of satisfiability, 
then we can use one-counter automata to refine the reduction in~\cite{Demri&Lazic09} 
so that runs with incrementing errors are excluded.  More precisely, in the reduction in~\cite{Demri&Lazic09}, 
we were not able to exclude incrementing errors because the logic is too weak to express that, 
for every decrement, 
the datum labelling it was seen before (remember that we have no past operators).  Now, the one-counter automata
are used to ensure that such faulty decrements cannot occur.

\begin{theorem} \label{theorem-undec-finitary}
$\fMC{\fin}{1}$ restricted to formulae using only the temporal operators $\mynext$ and $\sometimes$  is $\Sigma^0_1$-complete.
\end{theorem}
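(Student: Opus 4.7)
The upper bound, membership in $\Sigma^0_1$, is immediate: the existence of a finite accepting run of $\aautomaton$ satisfying $\aformula$ is witnessed by the run itself, which is a finite object encodable in finitely many bits, and for a fixed finite run $\arun$ and a freeze-LTL sentence with only $\mynext$ and $\sometimes$, the relation $\arun, 0 \models \aformula$ is decidable by a straightforward bottom-up labelling over the finite word. Hence the problem is semi-decidable.

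For $\Sigma^0_1$-hardness, my plan is to reduce the halting problem for two-counter Minsky machines (with zero-tests) to $\fMC{\fin}{1}[\mynext,\sometimes]$. Given a Minsky machine $M$, I would construct in logarithmic space a one-counter automaton $\aautomaton_M$ and a one-register sentence $\aformula_M\in\fLTL{\downarrow,\locs}{1}$ using only $\mynext$ and $\sometimes$, such that $M$ halts iff $\aautomaton_M\models^{\fin}\aformula_M$. The reduction would follow the general scheme from~\cite{Demri&Lazic09}: an accepting run of $\aautomaton_M$ should be organised into a sequence of blocks, one per configuration of a candidate computation of $M$, with each block containing as many distinguished "token" positions as the current value of $C_1$ (resp.\ $C_2$), each token carrying a data value (the counter value of $\aautomaton_M$) that plays the role of a unique identifier. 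The sentence $\aformula_M$ would use $\downarrow_1$, $\mynext$, and $\sometimes$ to enforce that (i) the block sequence respects the control flow of $M$, (ii) every token persisting across consecutive blocks keeps its datum, (iii) an increment introduces a token whose datum does not occur earlier in the block, and (iv) a decrement removes exactly one previously present token; all of these are expressible because checking "this datum reappears later" reduces to $\downarrow_1\sometimes\uparrow_1$, and the block boundaries can be navigated by iterated $\mynext$.

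The main obstacle, and exactly what distinguishes this reduction from the satisfiability-based one of~\cite{Demri&Lazic09}, is ruling out incrementing errors. With one register and no past operator, the formula alone cannot express "for every decremented datum, the datum was present earlier", so the pure satisfiability reduction only captures lossy/incrementing Minsky machines rather than faithful ones. The new idea is to discharge this missing guarantee on the operational side: $\aautomaton_M$ itself is engineered so that its own counter, together with its $\mathtt{ifzero}$ instruction, structurally forbids spurious decrements. Concretely, I would have $\aautomaton_M$'s counter track, throughout the simulation of each Minsky transition, the running number of token positions still to be emitted or matched in the current block, and gate every transition that corresponds to a decrement of a Minsky counter by a zero-test or bounded decrement that can only fire if the token truly exists. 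In this way the automaton's transition relation refuses any run in which a decrement would refer to a phantom token, so that every finite accepting run of $\aautomaton_M$ satisfying $\aformula_M$ necessarily encodes a faithful, error-free halting computation of $M$.

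Finally, correctness of the reduction would be verified in the two expected directions: a halting computation of $M$ yields a finite run of $\aautomaton_M$ that reaches an accepting state and satisfies $\aformula_M$ by construction; conversely, any finite accepting run of $\aautomaton_M$ with $\arun,0\models\aformula_M$ must, by the conjunction of the structural constraints enforced by $\aautomaton_M$ and the data constraints enforced by $\aformula_M$, project onto a legitimate halting computation of $M$. Combined with the $\Sigma^0_1$ upper bound, this yields $\Sigma^0_1$-completeness of $\fMC{\fin}{1}[\mynext,\sometimes]$.
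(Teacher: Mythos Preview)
Your upper bound and your high-level hardness strategy are right, and you correctly isolate the obstacle---one register with only future operators cannot express ``every decrement's datum was seen earlier''---together with the remedy of letting the automaton absorb what the formula cannot. But your concrete mechanism diverges from the paper's and, as described, does not work. You keep the block-per-configuration encoding of~\cite{Demri&Lazic09} and propose to have the automaton's counter ``gate every transition that corresponds to a decrement of a Minsky counter by a zero-test \ldots\ that can only fire if the token truly exists.'' The difficulty is that $\aautomaton_M$ has one counter while $M$ has two: the automaton can maintain $c_1+c_2$, but it cannot by itself test $c_1>0$, so it cannot operationally gate a $C_1$-decrement against $c_1$. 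A different route---have the automaton force the token count of block $i{+}1$ to equal that of block $i$ plus or minus one, have the formula force forward matching of tokens, and deduce backward matching by pigeonhole---might be salvageable, but that is not the mechanism you describe, and making it coexist with per-counter zero-tests still needs an argument.

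The paper abandons blocks altogether. Each Minsky transition is encoded as a \emph{single peak} of $\aautomatonbis$'s counter; the peak value is the datum attached to that transition. The formula enforces, per Minsky counter $i$, that the increment-peak data are consecutive starting from~$2$, the decrement-peak data likewise, and every decrement datum is bounded by the largest prior increment datum. These ``successor'' and ``bounded-by'' constraints are expressible with $\mynext,\sometimes$ only because the automaton supplies the auxiliary states $i^{last}_{\atransition},i^{\neg last}_{\atransition},d^{last}_{\atransition},d^{\neg last}_{\atransition}$ visited on the way up to each peak (rules (iv)--(vi)); together they already yield $|Dec_i|\le|Inc_i|$, excluding faulty decrements. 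For the zero-test the automaton climbs strictly above every prior increment datum and then sweeps down to~$0$ through state $z^{down}_{\atransition}$; anchored at each earlier increment position and looking only forward (rule (viii)), the formula checks that a matching decrement occurred before the sweep, whence $|Inc_i|=|Dec_i|$ at that point. So the automaton contributes (a) state structure that makes ``datum${}+1$'' constraints expressible and (b) a downward sweep that makes universal quantification over prior data expressible---not a direct operational guard on Minsky decrements as in your sketch.
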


\begin{proof} The $\Sigma_1^0$ upper bound is by an easy verification since the existence of
a finite run (encoded in $\Nat$) verifying an $\fLTL{\downarrow,\locs}{1}$ formula
(encoded in first-order
arithmetic) can be encoded by a $\Sigma_1^0$ formula.
So, let us  reduce the halting problem for two-counter automata to 
$\fMC{\fin}{1}$ restricted to $\set{\mynext,\sometimes}$.
Let 
$ \aautomaton = \triple{\locs,\aloc_I}{\delta}{F}$ 
be a two-counter automaton: the set of instructions $L$ is 
$\set{\mathtt{inc, dec, ifzero}} \times \set{1,2}$. 
Without any loss of generality, we can assume that all the instructions
from $\aloc_I$ are incrementations. 
We  build a one-counter automaton $\aautomatonbis
 = \triple{\locs',\aloc_I'}{\delta'}{F'}$ and a sentence 
$\aformula$
in $\fLTL{\downarrow,\locs'}{1}$ such that
$\aautomaton$ reaches an accepting state iff
$\aautomatonbis \models^{\fin} \aformula$.

For each run in $\aautomaton$ of the form
$$
 \left(
      \begin{array}{c}
      \aloc_I \\
      c_1^0 = 0 \\
      c_2^0 = 0 \\
      \end{array}
    \right)
   \step{\mathtt{inst^0}}
 \left(
      \begin{array}{c}
      \aloc^1 \\
      c^1_1 \\
      c^1_2 \\
      \end{array}
    \right) 
 \step{\mathtt{inst^1}}
  \ldots
\left(
      \begin{array}{c}
      \aloc^N \\
      c^N_1 \\
      c^N_2 \\
      \end{array}
    \right) 
$$
where the $\mathtt{inst^i}$'s are instructions, we associate a  run in $\aautomatonbis$ of the form below:
$$
 \left(
      \begin{array}{c}
      \aloc_I \\
      0 
      \end{array}
    \right)
      \step{\star}
      \left(
      \begin{array}{c}
      \triple{\aloc_I}{\mathtt{inst^0}}{\aloc^1} \\
      n^1
      \end{array}
    \right)
        \step{\star}
       \left(
      \begin{array}{c}
      \triple{\aloc^1}{\mathtt{inst^1}}{\aloc^2} \\
      n^2
      \end{array}
    \right)
       \ldots
      \left(
      \begin{array}{c}
      \triple{\aloc^{N-1}}{\mathtt{inst^{N-1}}}{\aloc^N} \\
      n^N
      \end{array}
    \right)  
$$
where $\step{\star}$ hides steps for updating the counter according to the constraints described below. 
The set of states $\locs'$ will contain the set of transitions $\delta$ from $\aautomaton$. 

We first define the one-counter automaton $\aautomatonbis= \triple{\locs',\aloc_I'}{\delta'}{F'}$. 
In order to ease the presentation, 
the construction of $\aautomatonbis$ is mainly provided graphically. 
\begin{itemize}
\itemsep 0 cm
\item $\locs'$ is the following set of states:
$$
\begin{array}{ll}
\locs'= & \delta \uplus \set{\aloc_I} \uplus \set{i_0}\\
& \uplus \set{i_\atransition^{last},i_\atransition^{ \neg last} \mid \atransition=\quadruple{\aloc}{\inc}{c}{\aloc'} \in \delta}\\
& \uplus \set{d_\atransition^{last},d^{\neg last}_\atransition \mid \atransition=\quadruple{\aloc}{\dec}{c}{\aloc'} \in \delta} \\
& \uplus \set{z^{down}_\atransition \mid  \atransition=\quadruple{\aloc}{\ifzero}{c}{\aloc'} \in \delta }\\
& \uplus \set{z_\aloc \mid \aloc \in \locs} \uplus \locs_{aux} 
\end{array}  
$$
where $\locs_{aux}$ is a set of auxiliary states that we do not specify 
(but which can be identified as the states with no label in Figures~\ref{figure-inc},~\ref{figure-dec} 
and~\ref{figure-ifzero}),
\item $F'$ is the set of states  $\set{z_\aloc \mid \aloc \in F}$.
\item The transition 
      relation $\delta'$ is the smallest transition 
      relation satisfying the conditions below: 
      \begin{itemize}
      \itemsep 0 cm
        \item[\bf{--}] The transitions in Figure~\ref{fig-inc-init} belong to $\delta'$.
      \item[\bf{--}] For each incrementation transition $\atransition = 
            \triple{\aloc_I}{\mathtt{inc},c}{\aloc}$,
            the transitions in Figure~\ref{figure-inc} belong to $\delta'$.
      \item[\bf{--}] For each decrementation transition $\atransition = \triple{\aloc_I}{\dec,c}{\aloc}$, 
            the transitions in Figure~\ref{figure-dec} belong to $\delta'$.
      \item[\bf{--}] For each zero-test transition  $\atransition = \triple{\aloc_I}{\ifzero,c}{\aloc}$, the 
            transitions in Figure~\ref{figure-ifzero} belong to $\delta'$.
\end{itemize}
\end{itemize}

\begin{figure}[htbp]
\begin{center}
\begin{picture}(20,20)(0,-5)
\node(qI)(0,0){$\aloc_I$}
\node(i1)(10,10){$i_0$}
\drawedge(qI,i1){$\inc$}
\node(zq0)(20,0){$z_{\aloc_I}$}
\drawedge(i1,zq0){$\dec$}
\end{picture}
\end{center}
\caption{Initial transitions in $\delta'$}
\label{fig-inc-init}
\end{figure}

\begin{figure}[htbp]
\begin{center}
\begin{picture}(65,45)(0,-12)
\node(e0)(0,10){$z_\aloc$}
\node[Nadjust=wh](blast)(20,20){$i^{last}_{\atransition}$}
\drawedge[curvedepth=6](e0,blast){$\inc$}
\node[Nadjust=wh](bnotlast)(20,0){$i^{\neg last}_{\atransition}$}
\drawloop[loopangle=270](bnotlast){$\inc$}
\drawedge[curvedepth=-6,ELside=r](e0,bnotlast){$\inc$}
\drawedge(bnotlast,blast){$\inc$}
\node(e)(40,20){$\atransition$}
\drawedge(blast,e){$\inc$}
\node(aux)(40,0){}
\drawedge(e,aux){$\dec$}
\drawloop[loopangle=270](aux){$\dec$}
\node(z)(65,0){$z_{\aloc'}$}
\drawedge(aux,z){$\ifzero$}
\end{picture}
\end{center}
\caption{Gadget  in $\aautomatonbis$ for encoding an incrementation from $\aautomaton$}
\label{figure-inc}
\end{figure}

\begin{figure}[htbp]
\begin{center}
\begin{picture}(65,45)(0,-12)
\node(e0)(0,10){$z_\aloc$}
\node[Nadjust=wh](blast)(20,20){$d^{last}_{\atransition}$}
\drawedge[curvedepth=6](e0,blast){$\inc$}
\node[Nadjust=wh](bnotlast)(20,0){$d^{\neg last}_{\atransition}$}
\drawloop[loopangle=270](bnotlast){$\inc$}
\drawedge[curvedepth=-6,ELside=r](e0,bnotlast){$\inc$}
\drawedge(bnotlast,blast){$\inc$}
\node(e)(40,20){$\atransition$}
\drawedge(blast,e){$\inc$}
\node(aux)(40,0){}
\drawedge(e,aux){$\dec$}
\drawloop[loopangle=270](aux){$\dec$}
\node(z)(65,0){$z_{\aloc'}$}
\drawedge(aux,z){$\ifzero$}
\end{picture}
\end{center}
\caption{Gadget  in $\aautomatonbis$ for encoding a decrementation from $\aautomaton$}
\label{figure-dec}
\end{figure}

\begin{figure}[htbp]
\begin{center}
\begin{picture}(120,40)(0,5)
\node(e0)(0,10){~$z_\aloc$~}
\node(aux2)(40,10){}
\drawedge(e0,aux2){$\inc$}
\node(aux3)(20,25){}
\drawloop[loopangle=90](aux3){$\inc$}
\drawedge[curvedepth=6](e0,aux3){$\inc$}
\drawedge[curvedepth=6](aux3,aux2){$\inc$}
\node(peak)(60,10){~$\atransition$~}
\drawedge(aux2,peak){$\inc$}
\node[Nadjust=wh](tozero)(90,10){$z^{down}_{\atransition}$}
\drawloop[loopangle=90](tozero){$\dec$}
\drawedge(peak,tozero){$\dec$}
\node(e)(120,10){~$z_{\aloc'}$~}
\drawedge(tozero,e){$\ifzero$}

\end{picture}
\end{center}
\caption{Gadget  in $\aautomatonbis$ for encoding a zero-test from $\aautomaton$}
\label{figure-ifzero}
\end{figure}

In runs of $\aautomatonbis$, we are only interested in 
configurations whose state belongs to $\delta$. 
The structure of $\aautomatonbis$ ensures
that the sequence of transitions in $\aautomaton$ is valid assuming
that we ignore the intermediate (auxiliary or busy) configurations 

Before defining  the formula $\aformula$, let us introduce a few intermediate formulae
that allow us to check whether the current configuration has a state belonging
to a specific set. For each counter $i \in \set{1,2}$, we define the formulae below:
\begin{itemize}
\itemsep 0 cm
\item $I_i$ is the disjunction of $i_0$ with all the transitions
      $\atransition$
       that increment the counter
      $i$ in $\aautomaton$, hence $I_i=i_0 \vee \bigvee_{\set{\atransition \in \delta \mid 
      \atransition=\quadruple{\aloc}{\inc}{i}{\aloc'}}}\atransition$.
\item $D_i$ is the disjunction of $i_0$ with all the transitions
      $\atransition$
       that decrement the counter
      $i$ in $\aautomaton$, hence $D_i=i_0 \vee \bigvee_{\set{\atransition \in 
      \delta \mid \atransition=\quadruple{\aloc}{\dec}{i}{\aloc'}}}\atransition$.
\item $I_i^{last}$ is the disjunction of all states of the form
      $i_{\atransition}^{last}$ where $\atransition$ is a transition that increments the counter $i$, hence 
      $I_i^{last}=\bigvee_{\set{\atransition \in \delta \mid 
      \atransition=\quadruple{\aloc}{\inc}{i}{\aloc'}}}i_{\atransition}^{last}$ .
\item $I_i^{\neg last}$ is the disjunction of all states of the form
      $i_{\atransition}^{\neg last}$ where $\atransition$ is a transition that increments the counter
      $i$, hence $I_i^{\neg last}=\bigvee_{\set{\atransition \in \delta \mid \atransition=\quadruple{\aloc}{\inc}{i}{\aloc'}}}i_{\atransition}^{\neg last}$.
\item $D_i^{last}$ is the disjunction of all states of the form
      $d_{\atransition}^{last}$ where $\atransition$ is a transition that decrements the counter
      $i$, hence $D_i^{last}=\bigvee_{\set{\atransition \in \delta \mid 
      \atransition=\quadruple{\aloc}{\dec}{i}{\aloc'}}} d_{\atransition}^{last}$.
\item $D_i^{\neg last}$ is the disjunction of all states of the form
      $d_{\atransition}^{\neg last}$ where $\atransition$ is a transition that decrements the 
       counter
      $i$, hence $D_i^{\neg last}=\bigvee_{\set{\atransition \in \delta \mid \atransition=\quadruple{\aloc}{\dec}{i}{\aloc'}}}
       d_{\atransition}^{\neg last}$.
\item $Z_i$ is the disjunction of all the transitions
      $\atransition$
       that test to zero the counter
      $i$ in $\aautomaton$, hence $Z_i=\bigvee_{\set{\atransition \in \delta \mid \atransition=\quadruple{\aloc}{\ifzero}{i}{\aloc'}}}\atransition$.
\item $Z_i^{down}$ is the disjunction of  the states of the form $z_{\atransition}^{down}$ where 
      $\atransition$ is a zero-test on the counter
      $i$, hence $Z_\atransition^{down}=\bigvee_{\set{\atransition \in \delta \mid \atransition=\quadruple{\aloc}{\ifzero}{i}{\aloc'}}} z_\atransition^{down}$.
\end{itemize}

In order to define $\aformula$, we take advantage of the structure of $\aautomatonbis$ so that to 
match runs of $\aautomatonbis$ with runs of $\aautomaton$. 
A crucial  idea consists in associating to each action on
one of the two counters, a natural number so that an incrementation gets a new value.
Moreover, we require that the natural number associated to
an incrementation is obtained by increasing by one the natural number associated
to the previous incrementation. 
We satisfy a similar property for the natural numbers associated to decrementations
except that these values should not exceed the value associated to the previous
incrementation.  In this way, we guarantee that 
there are no more decrementations than incrementations. 
In order to simulate the zero-test,  we reach a value above all the  values that 
have been used so far. Then we check that for all the smaller values that 
are  associated to an incrementation,  it is  also associated
to a decrementation (for the same counter).

In the following formulae, we use $\always^{+}$ and $\sometimes^{+}$ to
 represent the formulae $\mynext \always$ and $\mynext \sometimes$, respectively. 
We also omit the subscript ``$1$'' in $\downarrow_1$ and $\uparrow_1$  
because we assume that we always use the same register. For each counter $i \in \set{1,2}$,
 we define the following formulae:
\begin{enumerate}
\itemsep 0cm
\item[(i)] After each configuration satisfying $I_i$, there is no strict future configuration satisfying $I_i$ with the same 
data value:
$$
\always \big( I_i \Rightarrow \downarrow \always^+ (I_i \Rightarrow \neg \uparrow )\big)
$$
\item[(ii)] After each configuration satisfying $D_i$, there is no strict future configuration satisfying $D_i$ with the same  
data value:
$$
\always \big( D_i \Rightarrow \downarrow \always^+ (D_i \Rightarrow \neg \uparrow )\big)
$$
\item[(iii)] After each configuration satisfying $D_i$, there is no strict future configuration satisfying $I_i$ with the same
data value:
$$
\always \big( D_i \Rightarrow \downarrow \always^+ (I_i \Rightarrow \neg \uparrow )\big)
$$
\item[(iv)] When a new data value is needed for an incrementation of the counter $i$, the chosen value is exactly the next value after the greatest value used so far for an incrementation of the counter $i$:
$$
\begin{array}{l}
\always \big(I_i \Rightarrow (\downarrow \sometimes(I^{\neg last}_i \wedge \uparrow) \Rightarrow \downarrow \sometimes(I^{last}_i \wedge \uparrow)) \big)\\
 \wedge \always \big( (I^{last}_i \vee I^{\neg last}_i)  \Rightarrow \downarrow \always^+ (I_i \Rightarrow \neg \uparrow) \big)
\end{array}
$$
\item[(v)] When a new data value is needed for a decrementation of the counter $i$, the chosen value is exactly the next value after the greatest value used so far for a decrementation of the counter $i$:
$$
\begin{array}{l}
\always \big(D_i \Rightarrow (\downarrow \sometimes(D^{\neg last}_i \wedge \uparrow) \Rightarrow \downarrow \sometimes(D^{last}_i \wedge \uparrow)) \big)\\
 \wedge \always \big( (D^{last}_i \vee D^{\neg last}_i)  \Rightarrow \downarrow \always^+ (D_i \Rightarrow \neg \uparrow) \big)
\end{array}
$$
\item[(vi)] The data value associated to a decrementation of the counter $i$ is never strictly greater than the greatest previous value used in incrementations of the counter $i$:
$$
\begin{array}{l}
\always \big(I_i \Rightarrow (\downarrow \sometimes(D^{\neg last}_i \wedge \uparrow) \Rightarrow \downarrow \sometimes(I^{last}_i \wedge \uparrow)) \big)\\
\wedge \always \big(I_i \Rightarrow (\downarrow \sometimes(D^{last}_i \wedge \uparrow) \Rightarrow \downarrow \sometimes(I^{last}_i \wedge \uparrow)) \big)\\
 \wedge \always \big( D^{\neg last}_i  \Rightarrow \downarrow \always^+ (I^{last}_i \Rightarrow \neg \uparrow) \big)
\end{array}
$$
\item[(vii)]  For each configuration satisfying $Z_i$, the associated data value is always strictly greater than the greatest previous value used in incrementations of the counter $i$~:
$$
\always \big( I_i \Rightarrow \downarrow \always (Z_i \Rightarrow \neg \uparrow) \big)
$$
\item[(viii)] When the automaton $\aautomatonbis$ is in the decrementing slope to encode a zero-test in 
$\aautomaton$, 
which means when the formula $Z^{down}_i$ is satisfied, and when a data value already used for an incrementation 
is met, then the same data value is used previously for a decrementation in $\aautomatonbis$:
$$
\begin{array}{l}
\neg \sometimes \big( I_i \wedge \downarrow \sometimes (Z^{down}_i \wedge \uparrow) \wedge \neg \downarrow  \sometimes (\uparrow \wedge D_i)\big)
\wedge \neg \sometimes \big ( Z^{down}_i \wedge \downarrow \sometimes (D_i \wedge \uparrow) \big)
\end{array}
$$

\end{enumerate}

Let us recall the book-keeping of the values.
\begin{itemize}
\itemsep 0 cm
\item A new  value used for an incrementation is always one plus the 
      greatest value used so far for an incrementation (see
      (iv)). The first counter value for an incrementation is 2.
\item A new  value used for decrementation is always 1 + the 
      greatest value used so far for a decrementation (see
      (v)), and is always smaller or equal to the greatest value used so for a incrementation (see (vi)). 
      The first counter value for a decrementation is 2.
\item Zero-tests consist in:
\begin{enumerate}
\itemsep 0 cm
\item[(1)] going to  a  value strictly greater than any  value used so far for incrementations (encoded in $\aautomatonbis$ and see (vii)),
      \item[(2)] then decrementing the counter to zero (encoded in $\aautomatonbis$) 
       and whenever a  value is met that is used for an incrementation, check
      that a corresponding decrementation has occured before (see (viii)). 
\end{enumerate}
\end{itemize}

In order to ease the comprehension, we explain why the rule (vi) ensures that the value associated to a 
decrementation of the counter $i$ is never strictly greater than the value used for the last incrementation of the 
same counter $i$. 
First, we assume that the rules (i)--(vi) are satisfied and {\em ad absurdum}
we suppose that the  value used for a 
decrementation is strictly greater than the  value used for the last incrementation of the counter $i$. 
If this value  is greater of exactly one unit,  then we are in the case of the second line of the 
formula given by the rule (vi).  Hence, there must exist an incrementation with the same value as the one
for the
decrementation, and this incrementation necessarily happens between the first considered incrementation and the 
decrementation, according to the rules (i)--(iii). This leads to a contradiction because the first considered 
incrementation is not the last one. Secondly, suppose that the value associated to the  decrementation is 
greater of $k$ units with $k>1$. We are  in the case of the first line of the formula given by the rule (vi),
 and consequently there exists an incrementation after the first considered incrementation which has an associated 
 value greater of one unit. The last line of the formula of the rule (vi) ensures that this incrementation 
occurs necessarily before the  decrementation, which leads again to a contradiction, because the first 
considered incrementation cannot be the last one.\\

Figure~\ref{figure-1ca} gives an example of the beginning of a run of $\aautomatonbis$ which
 respects the rules (i)--(viii)  and 
that encodes the following sequence of instructions  
$(\inc,1),(\inc,1),(\dec,1),(\dec,1),(\ifzero,1)$. 
In the decreasing part after the position labeled by $Z_1$, 
each  value used in a previous  
incrementation can be matched with a value associated to a decrementation.
\begin{figure}[htbp]
\begin{center}
\begin{picture}(120,100)(-8,-8)
\drawline(0,0)(0,90)
\drawline(0,0)(120,0)
\node[Nw=2,Nh=2](1)(0,0){}
\node[Nadjust=wh](2)(3,20){$I_1$}
\drawedge[linewidth=0.5,AHnb=0](1,2){}
\node[Nw=2,Nh=2](3)(6,0){}
\drawedge[linewidth=0.5,AHnb=0](2,3){}
\node[Nw=2,Nh=2](4)(9,20){}
\drawedge[linewidth=0.5,AHnb=0](3,4){}
\node[Nadjust=wh](5)(12,40){$I_1$}
\drawedge[linewidth=0.5,AHnb=0](4,5){}
\node[Nw=2,Nh=2](6)(15,20){}
\drawedge[linewidth=0.5,AHnb=0](5,6){}
\node[Nw=2,Nh=2](7)(18,0){}
\drawedge[linewidth=0.5,AHnb=0](6,7){}
\node[Nw=2,Nh=2](8)(21,0){}
\drawedge[linewidth=0.5,AHnb=0](7,8){}
\node[Nw=2,Nh=2](9)(24,20){}
\drawedge[linewidth=0.5,AHnb=0](8,9){}
\node[Nw=2,Nh=2](10)(27,40){}
\drawedge[linewidth=0.5,AHnb=0](9,10){}
\node[Nadjust=wh](11)(30,60){$I_1$}
\drawedge[linewidth=0.5,AHnb=0](10,11){}
\node[Nw=2,Nh=2](12)(33,40){}
\drawedge[linewidth=0.5,AHnb=0](11,12){}
\node[Nw=2,Nh=2](13)(36,20){}
\drawedge[linewidth=0.5,AHnb=0](12,13){}
\node[Nw=2,Nh=2](14)(39,0){}
\drawedge[linewidth=0.5,AHnb=0](13,14){}
\node[Nw=2,Nh=2](15)(42,0){}
\drawedge[linewidth=0.5,AHnb=0](14,15){}
\node[Nw=2,Nh=2](16)(45,20){}
\drawedge[linewidth=0.5,AHnb=0](15,16){}
\node[Nadjust=wh](17)(48,40){$D_1$}
\drawedge[linewidth=0.5,AHnb=0](16,17){}
\node[Nw=2,Nh=2](18)(51,20){}
\drawedge[linewidth=0.5,AHnb=0](17,18){}
\node[Nw=2,Nh=2](19)(54,0){}
\drawedge[linewidth=0.5,AHnb=0](18,19){}
\node[Nw=2,Nh=2](20)(57,0){}
\drawedge[linewidth=0.5,AHnb=0](19,20){}
\node[Nw=2,Nh=2](21)(60,20){}
\drawedge[linewidth=0.5,AHnb=0](20,21){}
\node[Nw=2,Nh=2](22)(63,40){}
\drawedge[linewidth=0.5,AHnb=0](21,22){}
\node[Nadjust=wh](23)(66,60){$D_1$}
\drawedge[linewidth=0.5,AHnb=0](22,23){}
\node[Nw=2,Nh=2](24)(69,40){}
\drawedge[linewidth=0.5,AHnb=0](23,24){}
\node[Nw=2,Nh=2](25)(72,20){}
\drawedge[linewidth=0.5,AHnb=0](24,25){}
\node[Nw=2,Nh=2](26)(75,0){}
\drawedge[linewidth=0.5,AHnb=0](25,26){}
\node[Nw=2,Nh=2](27)(78,0){}
\drawedge[linewidth=0.5,AHnb=0](26,27){}
\node[Nw=2,Nh=2](28)(81,20){}
\drawedge[linewidth=0.5,AHnb=0](27,28){}
\node[Nw=2,Nh=2](29)(84,40){}
\drawedge[linewidth=0.5,AHnb=0](28,29){}
\node[Nw=2,Nh=2](30)(87,60){}
\drawedge[linewidth=0.5,AHnb=0](29,30){}
\node[Nadjust=wh](31)(90,80){$Z_1$}
\drawedge[linewidth=0.5,AHnb=0](30,31){}
\node[Nw=2,Nh=2](32)(93,60){}
\drawedge[linewidth=0.5,AHnb=0](31,32){}
\node[Nw=2,Nh=2](33)(96,40){}
\drawedge[linewidth=0.5,AHnb=0](32,33){}
\node[Nw=2,Nh=2](34)(99,20){}
\drawedge[linewidth=0.5,AHnb=0](33,34){}
\node[Nw=2,Nh=2](35)(102,0){}
\drawedge[linewidth=0.5,AHnb=0](34,35){}
\put(8,-6){$(\inc,1)$}
\put(24,-6){$(\inc,1)$}
\put(42,-6){$(\dec,1)$}
\put(60,-6){$(\dec,1)$}
\put(82,-6){$(\ifzero,1)$}
\put(-18,90){Counter}
\put(-18,86){Value}
\end{picture}
\end{center}
\caption{Run for  $\aautomatonbis$ satisfying the rules (i)--(viii)}
\label{figure-1ca}
\end{figure}
The formula $\aformula$ is defined as the conjunction of (i)--(viii) plus (ix) that specifies that a state
in $F'$ is reached. 
Now consider any run of $\aautomatonbis$ which satisfies (i)--(viii). 
For any counter $c \in \set{1,2}$, 
we can define its value as the number of $I_{\atransition}$ letters
with $\atransition$ of the form
$\triple{\aloc}{\mathtt{inc},c}{\aloc'}$ for which a later letter
$\triple{\aloc_1}{\mathtt{dec},c}{\aloc'_1}$ with the same
value of the counter $\aautomatonbis$ has not yet occurred. We will now prove that $\aautomatonbis \models^{\fin} \aformula$ if and only if the automaton $\aautomaton$ has an accepting run.

Let  $\arun=\pair{\alocbis_0}{0} \step{a_0} \pair{\alocbis_1}{n_1} \step{a_1} \pair{\alocbis_2}{n_2} 
\ldots \pair{\aloc_m}{n_m}$ be a finite run of $\aautomatonbis$ satisfying 
the rules (i)--(viii) and such that $\alocbis_0=\aloc_I$ and  
$\alocbis_m=\aloc$ for some $\aloc \in \locs$.
 We consider the sequence of indices $i_1, \ldots,i_k \in \interval{0}{m}$ such that for all $j \in \interval{1}{m}$, $\alocbis_{i_j} \in \delta$ and such that there is no 
$i \in \interval{1}{m}$ with $\alocbis_i \in \delta$ and $i \not\in \set{i_1,\ldots,i_k}$. 
We will show that the sequence $p_{i_1}p_{i_2} \ldots p_{i_k}$ induces 
a run of $\aautomaton$. This means that there exist $k$ configurations 
$\aconf_1,\aconf_2, \ldots \aconf_{k} \in \locs \times \Nat^2$ such that 
$\triple{\aloc_I}{0}{0} \step{p_{i_1}} \aconf_1 \step{p_{i_2}} \aconf_2 \ldots \step{p_{i_k}} 
\aconf_k$ is a run of $\aautomaton$. 

The proof is by induction on $k$.  
If $k=1$, then by construction of the automaton $\aautomatonbis$, 
there exist $i \in \set{1,2}$ and $\aloc' \in \locs$ such that 
$p_{i_1}=\quadruple{q_0}{\inc}{i}{q'}$.
This is simply due to the fact that we have assumed that any instruction
starting in $\aloc_I$ is an incrementation.
Since it is always possible to perform an incrementation,
there is a configuration $\aconf_1 \in \locs \times \Nat^2$ such that
 $\triple{q_I}{0}{0} \step{p_{i_1}} \aconf_1$. 

We suppose that the property is true for $k$ and we show that it also holds for $k+1$.
 
First, let us  write down the properties verified by the sequence $$\pair{\alocbis_{i_0}}{n_{i_0}},\ldots,\pair{\alocbis_{i_k}}{n_{i_k}}$$ 
made of configurations of $\aautomatonbis$. For each counter $i \in \set{1,2}$, we write $Inc_i$ to denote
the set $\set{j \in \interval{1}{k} \mid p_{i_j} 
\mbox{ is of the form } \quadruple{\aloc}{\inc}{i}{q'}}$ and $Dec_i$  to denote
the set $\set{j \in \interval{1}{k} \mid \alocbis_{i_j} 
\mbox{ is of the form } \quadruple{\aloc}{\dec}{i}{q'}}$. Let $i$ be one of the counters in $\set{1,2}$. 
The rule (i) ensures that  for every $j \in Inc_i$, $n_{i_j} > 1$, 
and  for all $j,\ell \in  Inc_i$, $n_{i_j} \neq n_{i_\ell}$. This is because $i_0$ is a disjunct of $I_i$, 
the counter value in the state $i_0$ is
 always $1$ and for all $j \in Inc_i$, $p_{i_j}$ satisfies $I_i$. Furthermore the rule (iv) implies that for all $j,\ell \in  Inc_i$ 
such that $j < \ell$, if there is no $j' \in Inc_i$ such that $j < j' < \ell$, then necessarily $n_{i_\ell}=n_{i_j} +1$.
Moreover, if $j$ is the smallest index of $Inc_i$ then $n_{i_j}=2$. 
In fact, if $j$ is the smallest index of $Inc_i$, then $n_{i_j}$ is  greater or equal 
to $2$ (because the integer value in  $i_0$ is always $1$).  If  $n_{i_j}$ is strictly greater than $2$, then the run of $\aautomatonbis$ 
should reach a state that satisfies $I^{last}_i$ or $I^{\neg last}_i$ with a  value equal to $2$, but since $j$ is the smallest 
index of $Inc_i$, the rule (iv) would not be satisfied. To show the other property about the indices in $Inc_i$, 
this can be done by induction on 
 the indices of $Inc_i$ by using again the rule (iv).
Similarly, it can be proved that the set $Dec_i$ verifies the same properties. 
Hence, $\set{n_{i_j} \mid j \in Inc_i}=\interval{2}{|Inc_i|+1}$ and $\set{n_{i_j} \mid j \in Dec_i}=\interval{2}{|Dec_i|+1}$. 
Finally, the rule (vi) guarantees that for every $j \in Dec_i$, there is $\ell \in Inc_i$ such that $i_\ell \leq i_j$ and $n_{i_j} \leq n_{i_\ell}$. 
By combining these different properties, we deduce that $\length{Dec_i} \leq \length{Inc_i}$.

We suppose that $p_{i_k}=\quadruple{\aloc}{a'}{i'}{\aloc'}$. By construction of $\aautomatonbis$,
we have 
$p_{i_{k+1}}=\quadruple{\aloc'}{a}{i}{\aloc''}$. If $a$ is equal to $\inc$, then the property is satisfied because an incrementation 
can always be performed (unlike decrementations and zero-tests). 
Now, suppose that $a=\dec$. The transition $p_{i_{k+1}}=\quadruple{\aloc'}{a}{i}{\aloc''}$ is not firable only if $|Dec_i|=|Inc_i|$ 
(the number of incrementations is equal to the number of decrementations).
This situation cannot occur since $\arun$ satisfies the rules (i)---(viii), and therefore $n_{i_{k+1}}=n_{i_H}+1$ where 
$H$ is the greatest index of $|Dec_i|$ and there exists $h \in |Inc_i|$ such that $i_h \leq i_{k+1}$ and $n_{i_{k+1}} \leq n_{i_h}$. Hence,
 if $|Dec_i|=|Inc_i|$, according to the previous properties, we would have that there exists $j \in Dec_i$ such that $n_{i_h}=n_{i_j}$ and 
consequently $n_{i_H}+1 \leq n_{i_j}$ which leads to a contradiction (by definition of $H$). Now, suppose that $a=\ifzero$. The 
transition $p_{i_{k+1}}$ is not firable only if $|Inc_i|>|Dec_i|$ (there are more incrementations than decrementations).
This situation cannot occur since $\arun$ satisfies the rules (i)--(viii) and according to the rule (vii) and to the properties 
verified by $Inc_i$, for all $j \in Inc_i$, $n_{i_j} < n_{i_{k+1}}$. 
After the $i_{k+1}$th configuration, the $n_{i_{k+1}}$ next configurations contain a state that satisfies $Z^{down}_i$. If $\length{Inc_i} > 
\length{Dec_i}$ , then
this means that there is an index $h \in Inc_i$ such that for all $j \in Dec_i$, $n_{i_j} < n_{i_h}$ and there exists also  
$l \in \interval{i_{k+1}}{i_{k+1}+n_{i_{k+1}}}$ such that $\alocbis_l$ satisfies $Z^{down}_i$ and $n_l=n_h$, which is in 
contradiction with the rule (viii).

We conclude that if $\arun$ is a finite run of $\aautomatonbis$ satisfying the rules (i)--(viii) and visiting a 
state $z_{\aloc}$ in $F'$ then there is a corresponding run in the two-counter automaton $\aautomaton$ starting from the initial 
configuration $\triple{\aloc_I}{0}{0}$ and visiting the accepting state $\aloc$.

Now, we consider a run of $\aautomaton$ of the form 
$\triple{\aloc_I}{0}{0} \step{\atransition_0} \aconf_1 \step{\atransition_1} ... \step{\atransition_{h-1}} \aconf_h$. 
We show how to build a run of the one-counter automaton $\aautomatonbis$, $\pair{\alocbis_0}{0}\step{} \pair{\alocbis_1}{n_1} \step{} 
\ldots \step{} \pair{\alocbis_m}{n_m}$ with $\alocbis_0=\aloc_I$ and $\alocbis_m=z_{\aloc}$ for some $\aloc \in \locs$. 
We introduce similar notations as in the converse case.
For such a run, we consider the sequence of indices 
$i_1, \ldots,i_k \in \interval{0}{m}$ such that for all $j \in \interval{1}{m}$, $\alocbis_{i_j} \in \delta$ and such that there is no
 $i \in \interval{1}{m}$ verifying $\alocbis_i \in \delta$ and $i \not\in \set{i_1,\ldots,i_k}$. For each counter $i \in \set{1,2}$, 
we write $Inc_i$ to denote the set $\set{j \in \interval{0}{k} \mid \alocbis_{i_j} \mbox{ is of the form } \quadruple{\aloc}{\inc}{i}{\aloc'}}$
and  $Dec_i$ to denote the set $\set{j \in \interval{0}{k} \mid \alocbis_{i_j} \mbox{ is of the form } \quadruple{\aloc}{\dec}{i}{\aloc'}}$.
Finally, we define the set 
$Zero_i=\set{j \in \interval{0}{k} \mid \alocbis_{i_j} \mbox{ is of the form } \quadruple{\aloc}{\ifzero}{i}{\aloc'}}$. 
We  build a run $\arun$ of $\aautomatonbis$ such that the following properties are verified~:
\begin{enumerate}
\itemsep 0 cm
\item[(a)] $k=h$ and for all $j \in \interval{1}{k}$, $\alocbis_{i_j}= \atransition_{j-1}$,
\item[(b)] if $j$ is the smallest index of $Inc_i$, then $n_{i_j}=2$,
\item[(c)] if $j$ is the smallest index of $Dec_i$, then $n_{i_j}=2$,
\item[(d)] for all $j,\ell \in  Inc_i$ such that $j < \ell$, if there is no $j' \in Inc_i$ such that $j < j' < \ell$, then $n_{i_\ell}=n_{i_j} +1$,
\item[(e)] for all $j,\ell \in  Dec_i$ such that $j < \ell$, if there is no $j' \in Inc_i$ such that $j < j' < \ell$,then $n_{i_\ell}=n_{i_j} +1$,
\item[(f)] for all $j \in Dec_i$, there exists $\ell \in Inc_i$ such that $i_\ell < i_j$ and $n_{i_j} \leq n_{i_\ell}$,
\item[(g)] for all $j \in Zero_i$, and for all $\ell \in Inc_i$ such that $i_\ell < i_j$, we have 
           $n_{i_\ell}<n_{i_j}$ and there is $m \in Inc_i$ such that $i_m < i_j$ and $n_{i_j}=n_{i_m}+1$.
\end{enumerate}

By construction of $\aautomatonbis$, it is possible to build a run $\arun$ of $\aautomatonbis$ that satisfies the 
properties (a)--(g).
 
Now, we suppose that $\arun$ is a run of $\aautomatonbis$ verifying these properties and 
it remains to check that
 $\arun$ satisfies the rules (i)--(viii). First, we consider the rules (i)--(ii). These two rules are satisfied because all the 
elements of $Inc_i$ and of $Dec_i$ are built with distinct values for incrementations and
decrementations. The rule (iii) is satisfied because of the properties (e) and (f). 
The rule (iv) is satisfied, 
because if the run is in a position $i_j$ with $j \in Inc_i$ and if there exists a position $\ell$ in the future which satisfies $I^{\neg last}_i$, then 
there exists a position $i_{j'}$ such that $ \ell < i_{j'}$ with $j' \in Inc_i$ and $n_{i_{j'}} > n_{i_j} + 1$ (by construction of $\aautomatonbis$ and
by (d)). Moreover, the definition of $\aautomatonbis$ implies there exists a position $h$ such that $i_j < h < \ell$, $h$ satisfies 
$I^{last}_i$, $n_h=n_{i_j}$, $q_{h+1}$ satisfies $I_i$  and $n_{h+1}=n_{i_j}+1$ . 
Similar arguments are used to establish that the rule (v) is satisfied by using  (c) and (e). 
The rule (vi) is satisfied because of the property (f).  Finally the 
rules (vii)--(viii) 
are satisfied by using  (g) and the properties about the sets $Inc_i$ and $Dec_i$. Hence 
if there is a run of $\aautomaton$ leaving from $\triple{q_I}{0}{0}$ and visiting a state $\aloc$ in $F$, we can build a finite run
 $\arun$ of $\aautomatonbis$ such that $\arun \models \aformula$.

Furthermore the formula $\aformula$ uses only the temporal operators $\mynext$ and $\sometimes$ 
(the operator $\always$ can be easily obtained from $\sometimes$).
\hfill $\Box$

\end{proof}


\begin{theorem} $\fMC{\omega}{1}$ restricted to $\set{\mynext,\sometimes}$ is $\Sigma^1_1$-complete.
\end{theorem}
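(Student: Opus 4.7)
The $\Sigma^1_1$ upper bound is straightforward: an accepting infinite run of a one-counter automaton can be coded as an element of $\Nat^\Nat$, and for any fixed $\fLTL{\downarrow,\locs}{1}$ sentence the satisfaction relation on such a coded run (together with the B\"uchi recurrence condition on accepting states) is an arithmetical predicate, so the existential quantification over infinite runs yields a $\Sigma^1_1$ formula.

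For the $\Sigma^1_1$ lower bound, the plan is to reduce the \emph{recurrence problem} for nondeterministic two-counter Minsky machines (``does $\aautomaton$ have an infinite computation visiting some state of $F$ infinitely often?''), which is well known to be $\Sigma^1_1$-complete, to $\fMC{\omega}{1}[\mynext,\sometimes]$. I would reuse verbatim the encoding from the proof of Theorem~\ref{theorem-undec-finitary}: starting from a two-counter Minsky machine $\aautomaton=\triple{\locs,\aloc_I}{\delta}{F}$ with only incrementations out of $\aloc_I$, build the same one-counter automaton $\aautomatonbis=\triple{\locs',\aloc_I}{\delta'}{F'}$, with $F' = \set{z_{\aloc} \mid \aloc \in F}$ and the same gadgets for incrementation, decrementation and zero-test, and take as formula the conjunction of the rules (i)--(viii) used there. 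All eight rules are already global statements of the form $\always(\cdots)$ built from $\mynext$, $\sometimes$ and one register, so they are meaningful and still enforce faithful simulation on infinite data words; in particular rule (vi) ensures there are never more decrementations than incrementations on each counter, and rules (vii)--(viii) ensure that each successful zero-test is genuine.

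The only change needed with respect to the finitary reduction is to replace the ``reach $F'$'' clause (ix) by a B\"uchi-style recurrence clause
\[
\aformula_{\rm rec} \ = \ \always \sometimes \bigvee_{\aloc \in F} z_{\aloc},
\]
and to observe that $\aautomatonbis$ is equipped with the B\"uchi acceptance condition already used throughout the paper, so $\aautomatonbis \models^{\infin} \aformula \wedge \aformula_{\rm rec}$ precisely witnesses an infinite run of $\aautomatonbis$ that is accepting and whose ``$\delta$-labelled'' positions form a legal infinite computation of $\aautomaton$ visiting $F$ infinitely often. The two directions of correctness are then established exactly as in the finitary proof, applied to every finite prefix and passed to the limit: given an infinite accepting computation of $\aautomaton$, one constructs an infinite run of $\aautomatonbis$ satisfying the properties (a)--(g) from the previous proof (now over $\Nat$), and conversely any infinite run of $\aautomatonbis$ satisfying $\aformula$ induces an infinite legal computation of $\aautomaton$, which by $\aformula_{\rm rec}$ visits $F$ infinitely often.

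The main technical point to check is that all eight rules behave correctly when the underlying data word is infinite: rules (i)--(iii) and (vi)--(viii) are already existential/universal over all future positions, so nothing changes; for rules (iv)--(v) one must verify that on an infinite run the ``last incrementation / last decrementation so far'' interpretation still propagates inductively, which it does because each rule only constrains the \emph{next} fresh value relative to the previous ones. The counter of $\aautomatonbis$ is allowed to grow without bound along an infinite run, which exactly matches the possibly unbounded growth of the two Minsky counters. Transferring the result to $\fotwoMC{\infin}{}[\sim,<]$ then follows, as in the previous theorem, from Lemma~\ref{lemma-ltl-fo} by inspecting the number of variables needed in the translation of the $\fLTL{\downarrow}{1}[\mynext,\sometimes]$ sentence above.
\hfill $\Box$
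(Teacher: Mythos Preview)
Your proposal is correct and follows essentially the same route as the paper: reduce the $\Sigma^1_1$-hard recurrence problem for nondeterministic Minsky machines by reusing the one-counter automaton $\aautomatonbis$ and the constraints (i)--(viii) from Theorem~\ref{theorem-undec-finitary}, and obtain the $\Sigma^1_1$ upper bound by coding an infinite run as a function in $\Nat^{\Nat}$ with satisfaction expressed arithmetically. One minor remark: your extra conjunct $\aformula_{\rm rec} = \always\sometimes\bigvee_{\aloc\in F} z_{\aloc}$ is redundant, since the B\"uchi acceptance condition on $\aautomatonbis$ with $F' = \set{z_{\aloc}\mid \aloc\in F}$ already forces any infinite \emph{accepting} run to visit $F'$ infinitely often; the paper simply drops clause (ix) rather than replacing it.
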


The proof is similar to the proof of Theorem~\ref{theorem-undec-finitary}
except that instead of reducing the halting problem for Minsky machines, we 
reduce the recurrence problem for nondeterministic Minsky machines
that is known to be $\Sigma^1_1$-hard~\cite{Alur&Henzinger89}. The $\Sigma_1^1$ upper bound is by an easy verification
since an accepting run can be viewed as a function $\amap:\Nat \rightarrow \Nat$ and then checking
that it satisfies an $\fLTL{\downarrow,\locs}{1}$ formula can be expressed in first-order arithmetic. 
Another consequence of the Purification Lemma is the result below.

\begin{theorem} $\fpureMC{\fin}{1}$ restricted to $\set{\mynext,\sometimes}$ 
 is $\Sigma^0_1$-complete.
 $\fpureMC{\infin}{1}$ restricted to $\set{\mynext,\sometimes}$  is $\Sigma^1_1$-complete. 
\end{theorem}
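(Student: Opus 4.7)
The plan is to obtain both claims as direct corollaries of the previous two theorems (the $\Sigma^0_1$-completeness of $\fMC{\fin}{1}[\mynext,\sometimes]$ and the $\Sigma^1_1$-completeness of $\fMC{\infin}{1}[\mynext,\sometimes]$) combined with the Purification Lemma~\ref{lemma-purification}. The upper bounds require nothing new: every pure formula is a formula, so $\fpureMC{\fin}{1}[\mynext,\sometimes]$ inherits the $\Sigma^0_1$ upper bound already established by encoding a finite accepting run and the satisfaction relation in first-order arithmetic, and analogously $\fpureMC{\infin}{1}[\mynext,\sometimes]$ inherits the $\Sigma^1_1$ upper bound by encoding an infinite run as a function $\amap:\Nat\to\Nat$ with acceptance expressible in second-order arithmetic.

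For the two lower bounds, I would take the instances $(\aautomatonbis,\aformula)$ produced by the reductions of Theorem~\ref{theorem-undec-finitary} and its infinitary analogue, and feed them through Lemma~\ref{lemma-purification} to obtain logspace-equivalent instances $(\aautomatonbis_P,\aformula_P)$ of the pure model-checking problems with a single register. The step that needs genuine checking is that the translation $\atranslation$ in the proof of Lemma~\ref{lemma-purification} preserves the fragment $\set{\mynext,\sometimes}$. Reading off the clauses: $\atranslation(\aloc_i)=\aformula_i$ is a $\mynext$-formula, $\atranslation(\mynext\aformula)=\mynext^{9+2(m+1)+1}\atranslation(\aformula)$ uses only $\mynext$, and on the defined operator $\sometimes\aformula \egdef \top\until\aformula$ we get
\[
\atranslation(\sometimes\aformula) \;=\; (\mathrm{STA}\Rightarrow\top)\until(\mathrm{STA}\wedge\atranslation(\aformula)) \;=\; \sometimes(\mathrm{STA}\wedge\atranslation(\aformula)),
\]
which stays in $\set{\mynext,\sometimes}$. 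The auxiliary sentence $\mathrm{STA}=\aformulater_{\neg 3/7}\wedge\aformulater_{0\sim 6}$ is itself a Boolean combination of $\downarrow_1$-prefixed $\mynext$-formulae, so no $\until$ is introduced by the purification. Hence $\aformula_P$ lies in $\fLTL{\downarrow,\emptyset}{1}[\mynext,\sometimes]$.

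Combining the two ingredients then gives $\Sigma^0_1$-hardness of $\fpureMC{\fin}{1}[\mynext,\sometimes]$ from Theorem~\ref{theorem-undec-finitary} and $\Sigma^1_1$-hardness of $\fpureMC{\infin}{1}[\mynext,\sometimes]$ from its infinitary analogue, completing both completeness statements. The only real obstacle is the bookkeeping verification above that $\atranslation$ does not leak $\until$ into the translated formula; once this is checked, no new counter-automaton construction or encoding of Minsky machines is required, since Lemma~\ref{lemma-purification} already handles the disappearance of the state alphabet while keeping the automaton nondeterministic and the register count equal to $\max(n,1)=1$.
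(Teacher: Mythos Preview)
Your proposal is correct and follows exactly the paper's approach: the paper merely states that the theorem is ``another consequence of the Purification Lemma'' applied to the two preceding theorems, without further detail. Your additional verification that the translation $\atranslation$ from the proof of Lemma~\ref{lemma-purification} preserves the $\set{\mynext,\sometimes}$ fragment (in particular your computation $\atranslation(\sometimes\aformula)\equiv\sometimes(\mathrm{STA}\wedge\atranslation(\aformula))$ and the observation that $\mathrm{STA}$ and the $\aformula_i$ use only $\mynext$) fills in a detail that the paper leaves implicit, since the lemma as stated does not assert fragment preservation.
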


This refines results stated in~\cite{Demri&Lazic&Sangnier08a}.

Using Theorem 3.2(a) in~\cite{Demri&Lazic09}, we can obtain the following corollary by a direct analysis 
of the formulae involved in the proof of Theorem~\ref{theorem-undec-finitary} (every temporal operator is
prefixed by a freeze operator or can occur equivalently in such a form).

\begin{corollary} $\fotwoMC{\fin}{}$ [resp.\ $\fotwoMC{\infin}{}$] without the predicate $+1$
is $\Sigma^0_1$-complete [resp.\ $\Sigma^1_1$-complete] and
$\fopureMC{\fin}{4}$ [resp.\ $\fopureMC{\infin}{4}$] 
is $\Sigma^0_1$-complete [resp.\ $\Sigma^1_1$-complete].
\end{corollary}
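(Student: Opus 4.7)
The plan is to lift the undecidability reductions of Theorem~\ref{theorem-undec-finitary} (and its infinitary version) to first-order logic over data words by exploiting the particular shape of the $\fLTL{\downarrow,\locs}{1}$ sentence $\aformula$ built there. The $\Sigma_1^0$ and $\Sigma_1^1$ upper bounds are routine: an accepting finite run is arithmetically encodable and the satisfaction of a first-order sentence over it is decidable, giving the $\Sigma_1^0$ bound, while an accepting infinite run can be existentially second-order quantified as a function $\Nat \to \Nat$, giving $\Sigma_1^1$.

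For the lower bounds on the two-variable fragment, I would begin by inspecting the eight conjuncts (i)--(viii) together with the acceptance clause (ix) that make up $\aformula$ in the proof of Theorem~\ref{theorem-undec-finitary}. Two properties should be verified by direct examination. First, $\aformula$ uses only the temporal operators $\sometimes$, $\always$, $\sometimes^{+}$, $\always^{+}$, with no bare $\mynext$ outside those combinations; this will be responsible for the absence of $+1$ in the translation. Second, every temporal operator of $\aformula$ either already lies in the immediate scope of a freeze $\downarrow_1$, or can be placed there by prefixing with a dummy $\downarrow_1$ whose binding is harmless because the immediate subformula re-freezes register $1$ before any occurrence of $\uparrow_1$ is consulted.

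The main technical lever is then Theorem~3.2(a) of~\cite{Demri&Lazic09}, which asserts that any $\fLTL{\downarrow}{1}$ sentence in which every temporal operator is immediately guarded by $\downarrow_1$ is equivalent to a two-variable sentence in $\fo{}{\sim,<,+1}$. The underlying idea is to merge the ``current position'' variable with the register variable: a subformula $\downarrow_1 \sometimes^{+} \aformulabis$ at position $x$ becomes $\exists y \ (x < y \wedge T(\aformulabis))$, with $\uparrow_1$ inside $\aformulabis$ rendered as $y \sim x$, and the two variables $x, y$ recycled across nested freezes; the cases of $\sometimes$, $\always$, $\always^{+}$ are entirely symmetric. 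Since $\aformula$ contains no bare $\mynext$, the resulting sentence uses only $\sim$ and $<$. This establishes $\Sigma_1^0$-hardness of $\fotwoMC{\fin}{}$ and $\Sigma_1^1$-hardness of $\fotwoMC{\infin}{}$, both without the predicate $+1$.

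For the four-variable pure fragment, I would apply the first-order Purification Lemma~\ref{lemma-purification-fo} to the two-variable sentence obtained above; since purification increases the variable count by at most two, the result is a pure first-order sentence with at most four variables over the purified one-counter automaton, yielding the announced hardness for $\fopureMC{\fin}{4}$ and $\fopureMC{\infin}{4}$. The main obstacle I foresee is the formula-level bookkeeping: one must carefully check that the rewriting which guards every temporal operator with a dummy $\downarrow_1$ truly preserves semantics on every conjunct (i)--(viii), and that the two-variable recycling scheme of Theorem~3.2(a) of~\cite{Demri&Lazic09} indeed dispenses with $+1$ whenever $\mynext$ appears in the source only through $\sometimes^{+}$ and $\always^{+}$.
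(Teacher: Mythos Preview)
Your proposal is correct and follows essentially the same route as the paper: the paper justifies the corollary by the single remark that Theorem~3.2(a) of~\cite{Demri&Lazic09} applies because every temporal operator in the sentence $\aformula$ of Theorem~\ref{theorem-undec-finitary} is (or can harmlessly be placed) immediately under a freeze, and observes that the absence of $+1$ comes from $\mynext$ occurring only inside $\sometimes^{+}$ and $\always^{+}$; your derivation of the four-variable pure bound via Lemma~\ref{lemma-purification-fo} applied to the resulting two-variable sentence is exactly the intended arithmetic ($2+2=4$).
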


The absence of the predicate $+1$ in the above corollary is due to the fact that in the proof
of Theorem~\ref{theorem-undec-finitary}, $\mynext$ occurs only to encode $\sometimes^+$ and $\always^+$. 
The above-mentioned undecidability is true even if we restrict ourselves 
to one-counter automata for which there are no transitions
with identical instructions leaving from the same state. 
A one-counter automaton $\aautomaton$ is \emph{weakly deterministic} whenever 
for every state $\aloc$, if $\triple{\aloc}{l}{\aloc'}, \triple{\aloc}{l'}{\aloc''}
 \in \delta$, we have $l = l'$ implies $\aloc' = \aloc''$. 
The transition systems induced by these  automata
are not necessarily deterministic.

\begin{theorem} \label{theorem-undecidability-weakly}
$\fpureMC{\fin}{1}$ [resp.\ $\fpureMC{\infin}{1}$] restricted to weakly
deterministic one-counter automata is $\Sigma^0_1$-complete [resp.\ $\Sigma^1_1$-complete]. 
\end{theorem}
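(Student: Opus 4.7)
The plan is to adapt the nondeterministic constructions in the proofs of Theorem~\ref{theorem-undec-finitary} and its infinitary counterpart so that the one-counter automaton $\aautomatonbis$ becomes weakly deterministic, and then to invoke the Purification Lemma (Lemma~\ref{lemma-purification}) to eliminate propositional variables. The $\Sigma^0_1$ and $\Sigma^1_1$ upper bounds are already established in the unrestricted case, so only hardness needs to be exhibited under the new restriction, by reduction from the halting and recurrence problems for Minsky machines respectively.

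The nondeterminism in $\aautomatonbis$ appears in two forms. First, inside each gadget of Figures~\ref{figure-inc}--\ref{figure-ifzero}, an $\inc$-self-loop coexists with an $\inc$-exit transition (and similarly with $\dec$), which realises the nondeterministic choice of the pick height that encodes the datum. Second, several gadgets share the same entry state $z_\aloc$, one per outgoing Minsky transition of $\aloc$. Both forms are to be resolved by inserting small weakly deterministic choice widgets built from the three instructions $\set{\inc,\dec,\ifzero}$. A pumping self-loop is replaced by a three-state zig-zag with states $L,M,N$: $L$ has only an $\inc$-transition to $M$, $M$ has an $\inc$-transition to $N$ (continue pumping) and a $\dec$-transition to an exit state $P$ (stop pumping), and $N$ has only a $\dec$-transition back to $L$; starting at $L$ with counter $c$, a trace ending at $P$ yields counter value $c+j$ for any chosen $j\geq 0$. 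A $k$-way branching at $z_\aloc$ is replaced by a binary tree of depth $\lceil\log_2 k\rceil$ whose internal nodes use pairs of distinct instructions, with $\ifzero$ available because $z_\aloc$ is always reached with counter $0$. The counter values at peak positions (states belonging to $\delta$) are preserved by these modifications, so the subformulae $I_i$, $D_i$, $Z_i$ and their variants in $\aformula$ remain meaningful; the fresh auxiliary states are simply omitted from the disjunctions defining these atoms. The soundness and completeness argument of Theorem~\ref{theorem-undec-finitary} then carries over verbatim, but now with a weakly deterministic $\aautomatonbis$.

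Finally, the Purification Lemma transforms the state-labelled formula into one over the empty alphabet on an equivalent automaton $\aautomatonbis_P$. Inspecting Figure~\ref{fig:purif1}, each state of $\aautomatonbis$ is encoded by a deterministic chain of $\inc$/$\dec$ transitions whose only branching point is the final state $\aloc_{i,F}$, which faithfully inherits the outgoing transitions of the original state; consequently weak determinism is preserved by purification. The main obstacle is designing the pumping widget so that it simultaneously admits any number of iterations, inserts no spurious peak, and leaves the counter in the correct state for the subsequent pump-down phase; the three-state zig-zag above is one clean realisation, but a symmetric version must also be given for the decrementing self-loops in Figures~\ref{figure-inc}--\ref{figure-ifzero}, and one must verify that neither widget creates data values that coincide with genuine peak values in a way that the original formula $\aformula$ would misread.
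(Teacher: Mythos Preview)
Your approach is workable but takes a different route from the paper, and it leaves more loose ends than you acknowledge. The paper does not touch the specific automaton $\aautomatonbis$ from Theorem~\ref{theorem-undec-finitary} at all. Instead it proves a \emph{generic} reduction: given any one-counter automaton $\aautomaton$ and any formula $\aformula$, it builds a weakly deterministic $\aautomaton'$ (by the kind of local rewriting in Figure~\ref{fig:weak-det}) together with a transformed formula $\aformula'$ obtained by \emph{relativising} the temporal operators to positions whose state lies in the original set $\locs$. Since the hard instances for $\fMC{\fin}{1}$ and $\fMC{\infin}{1}$ are already known, this immediately yields hardness for the weakly deterministic restriction, and then the Purification Lemma (which, as you correctly note, preserves weak determinism) finishes the job.

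The advantage of the paper's route is that relativisation makes the formula blind to the fresh auxiliary states, so nothing about the particular structure of $\aformula$ needs to be re-examined. Your route, by contrast, keeps $\aformula$ essentially unchanged and must therefore ensure that the labelled states $i_{\atransition}^{\neg last}$, $i_{\atransition}^{last}$, $d_{\atransition}^{\neg last}$, $d_{\atransition}^{last}$ are still visited with \emph{exactly} the same sequences of counter values as before, since rules (iv)--(vi) depend on those values in a delicate way. Your three-state zig-zag, as described, does not achieve this: if $L$ plays the role of $i_{\atransition}^{\neg last}$, then the exit state $P$ is reached with the same value as the last visit to $L$, whereas in the original gadget $i_{\atransition}^{last}$ sits one above the last $i_{\atransition}^{\neg last}$. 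So the claim that the correctness argument ``carries over verbatim'' is too optimistic; you would need either to tweak the widget so the labelled values match exactly, or to re-prove the induction with shifted offsets. None of this is fatal, but it is real work that the paper's relativisation trick sidesteps entirely. Note also that the decrementing self-loops (the unlabelled state with a $\dec$-loop and an $\ifzero$-exit) are already weakly deterministic, so no symmetric widget is needed there.
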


\iffossacs
The proof uses the Purification Lemma and provides reductions from the model-checking
problems to their restrictions to weakly deterministic automata. 
\else
\begin{proof}
In the proof of the Purification Lemma, weak determinisn of the one-counter automata
is preserved. It is sufficient to show that 
given a one-counter automaton $\aautomaton$
              and  a sentence $\aformula$ in 
              $\fLTL{\downarrow,\locs}{}$, one can compute 
a weakly deterministic automaton 
$\aautomaton'$ and $\aformula'$ in $\fLTL{\downarrow,\locs'}{}$ ($\locs \subseteq \locs'$) 
such that  $\aautomaton \models^{\fin} \aformula$
                 [resp. $\aautomaton \models^{\infin} \aformula$ ] iff 
$\aautomaton' \models^{\fin} \aformula'$
                 [resp. $\aautomaton' \models^{\infin} \aformula'$].

Figure~\ref{fig:weak-det} illustrates with examples how transitions from a state
with identical instructions can be transformed so that to 
obtain a weakly deterministic automaton.
In Figure~\ref{fig:weak-det}, 
we have omitted the transitions labelled by a zero-test or a decrementation when 
they are never fired. 
This can be easily generalized to all the transitions of $\aautomaton$. 
The formula $\aformula'$ is defined as $\atranslation(\aformula)$ with 
the map $\atranslation$ that is homomorphic for Boolean operators and $\downarrow_r$,
and its restriction to atomic formulae is identity. 
It remains to define the map for the temporal operators, which corresponds to 
perform a relativization:
\begin{itemize}
\itemsep 0 cm
\item $\atranslation(\aformula_1 \until \aformula_2)=
      \big((\bigvee_{\aloc \in \locs} \aloc) 
           \Rightarrow \atranslation(\aformula_1)\big) \until 
      \big(\bigvee_{\aloc \in \locs} \aloc \wedge \atranslation(\aformula_2)\big)$,
\item $\atranslation(\mynext \aformulabis )= \mynext 
         \big((\neg \bigvee_{\aloc \in \locs} \aloc) \ \until \ 
         (\bigvee_{\aloc \in \locs} \aloc \wedge \atranslation(\aformulabis))\big)$.
\end{itemize}

\begin{figure}[htbp]
\begin{picture}(145,140)(-10,0)
%
%
\node(Q)(0,20){$\aloc$}
\node(Q1)(25,5){$\aloc_1$}
\node(Q2)(25,20){$\aloc_2$}
\node(Q3)(25,35){$\aloc_3$}
\drawedge(Q,Q1){\footnotesize $\mathtt{inc}$}
\drawedge(Q,Q2){\footnotesize $\mathtt{inc}$}
\drawedge(Q,Q3){\footnotesize $\mathtt{inc}$}
%

\drawline[linewidth=0.9,AHLength=4,AHlength=0,AHnb=1](35,20)(60,20)

\node(WQ)(65,5){$\aloc$}
\node(A)(80,5){$\aloc_1^1$}
\drawedge(WQ,A){$\mathtt{inc}$}
\node(B)(95,5){$\aloc_1^2$}
\drawedge(A,B){$\mathtt{dec}$}
\node(C)(110,5){$\aloc_1$}
\drawedge(B,C){$\mathtt{inc}$}
\node(D)(80,20){$\aloc_2^1$}
\drawedge(A,D){$\mathtt{inc}$}
\node(E)(95,20){$\aloc_2$}
\drawedge(D,E){$\mathtt{dec}$}
\node(F)(80,35){$\aloc_3^1$}
\drawedge(D,F){$\mathtt{inc}$}
\node(G)(95,35){$\aloc_3^2$}
\drawedge(F,G){$\mathtt{dec}$}
\node(H)(110,35){$\aloc_3$}
\drawedge(G,H){$\mathtt{dec}$}
\node(DQ)(0,70){$\aloc$}
\node(DQ1)(25,55){$\aloc_1$}
\node(DQ2)(25,70){$\aloc_2$}
\node(DQ3)(25,85){$\aloc_3$}
\drawline[linewidth=0.9,AHLength=4,AHlength=0,AHnb=1](35,70)(60,70)
\drawedge(DQ,DQ1){$\mathtt{dec}$}
\drawedge(DQ,DQ2){$\mathtt{dec}$}
\drawedge(DQ,DQ3){$\mathtt{dec}$}
\node(DWQ)(65,55){$\aloc$}
\node(DA)(80,55){$\aloc^1_1$}
\drawedge(DWQ,DA){$\mathtt{dec}$}
\node(DB2)(95,55){$\aloc^2_1$}
\drawedge(DA,DB2){$\mathtt{inc}$}
\node(DB3)(110,55){$\aloc_1$}
\drawedge(DB2,DB3){$\mathtt{dec}$}
\node(DD)(95,70){$\aloc_2^2$}
\drawedge(DB2,DD){$\mathtt{inc}$}
\node(DE)(110,70){$\aloc_2^3$}
\drawedge(DD,DE){$\mathtt{dec}$}
\node(DB)(125,70){$\aloc_2$}
\drawedge(DE,DB){$\mathtt{dec}$}
\node(DF)(95,85){$\aloc_3^2$}
\drawedge(DD,DF){$\mathtt{inc}$}
\node(DG)(110,85){$\aloc_3^3$}
\drawedge(DF,DG){$\mathtt{dec}$}
\node(DH)(125,85){$\aloc_3^4$}
\drawedge(DG,DH){$\mathtt{dec}$}
\node(DI)(140,85){$\aloc_3$}
\drawedge(DH,DI){$\mathtt{dec}$}
\node(ZQ)(0,120){$\aloc$}
\node(ZQ1)(25,105){$\aloc_1$}
\node(ZQ2)(25,120){$\aloc_2$}
\node(ZQ3)(25,135){$\aloc_3$}
\drawline[linewidth=0.9,AHLength=4,AHlength=0,AHnb=1](35,120)(60,120)
\drawedge[ELside=r](ZQ,ZQ1){$\mathtt{ifzero}$}
\drawedge(ZQ,ZQ2){$\mathtt{ifzero}$}
\drawedge(ZQ,ZQ3){ $\mathtt{ifzero}$}
\node(ZWQ)(58,105){$\aloc$}
\node(ZA)(80,105){$\aloc^1_1$}
\drawedge(ZWQ,ZA){$\mathtt{ifzero}$}
\node(ZB2)(102,105){$\aloc_1$}
\drawedge(ZA,ZB2){$\mathtt{ifzero}$}

\node(ZD)(80,120){$\aloc_2^1$}
\drawedge(ZA,ZD){ $\mathtt{inc}$}
\node(ZE)(95,120){$\aloc_2^2$}
\drawedge(ZD,ZE){$\mathtt{dec}$}
\node(ZB)(117,120){$\aloc_2$}
\drawedge(ZE,ZB){$\mathtt{ifzero}$}
\node(ZF)(80,135){$\aloc_3^1$}
\drawedge(ZD,ZF){$\mathtt{inc}$}
\node(ZG)(95,135){$\aloc_3^2$}
\drawedge(ZF,ZG){$\mathtt{dec}$}
\node(ZH)(110,135){$\aloc_3^3$}
\drawedge(ZG,ZH){$\mathtt{dec}$}
\node(ZI)(132,135){$\aloc_3$}
\drawedge(ZH,ZI){$\mathtt{ifzero}$}
\end{picture}
\caption{Weak determinization of one-counter automata}
\label{fig:weak-det}
\end{figure}

It can be easily proved that $\aautomaton'$ and $\aformula'$ satisfy the desired properties.
\qed
\end{proof}
\fi

 \section{Conclusion}
\label{section-conclusion}

In the paper, we have studied  complexity issues
related to the model-checking problem for LTL with registers
over one-counter automata. Our results are quite different from those for satisfiability. 
We have shown that model checking  $\fLTL{\downarrow}{}$ restricted to the operators
$\set{\mynext,\sometimes}$ and $\fon{}{2}(\sim,<,+1)$ over one-counter automata
is undecidable, which contrasts
with the decidability of many verification problems for one-counter 
automata~\cite{Jancaretal04,Serre06,Demri&Gascon07} and with the  results
in~\cite{Bojanczyketal06a,Demri&Lazic09}. 
For instance, we have shown that 
model checking nondeterministic one-counter automata over $\fLTL{\downarrow}{}$ restricted
      to a unique register and without alphabet 
[resp. $\fon{}{2}(\sim,<,+1)$] is already $\Sigma_1^1$-complete in the infinitary case.
On the decidability side, 
the \pspace \ upper bound for model checking $\fLTL{\downarrow}{}$ and  $\fo{}{\sim,<,+1}$ 
over deterministic one-counter automata in the infinitary and finitary cases 
is established by using in an essential
way~\cite{Markey&Schnoebelen03} (and simplifying the proofs from~\cite{Demri&Lazic&Sangnier08a}).
In particular, we have established that the runs of deterministic one-counter automata admit descriptions
that require polynomial size only. 
Hence, our results essentially deal with LTL with registers but they can be also understood as a contribution
to  refine the decidability border for problems on one-counter automata. 


\iffossacs
Viewing runs as data words is an idea that can be pushed further.
For instance, the decidability status
of model checking  $\fLTL{\downarrow}{}$ over the class of 
reversal-bounded counter automata~\cite{Dang&Ibarra&SanPietro01}
 remains  open. Hence, 
our results pave the way
for  model checking   $\fLTL{\downarrow}{}$
over other classes of operational models that are known to admit powerful techniques
for solving verification tasks.
\else
Viewing runs as data words is an idea that can be pushed further.
Indeed, our results pave the way
for  model checking   memoryful (linear-time) logics (possibly extended to  multicounters)
over other classes of operational models that are known to admit powerful techniques
for solving verification tasks.
For instance, the reachability relation is known
to be Presburger-definable for reversal-bounded counter automata~\cite{Ibarra78}.
Nevertheless, model checking  $\fLTL{\downarrow}{}$ over this class of counter machines has been recently
shown undecidable~\cite{Demri&Sangnier10}; other subclasses of counter machines for which the reachability
problem is decidable have been considered in this recent work. 
\fi

{\em Acknowledgement:} We would like to thank Philippe 
Schnoebelen for 
suggesting simplifications in  the proofs of 
Lemma~\ref{lemma-purification} and Proposition~\ref{proposition-pspace-hardness}
and Luc Segoufin for fruitful discussions that lead us to improve significantly the results
from~\cite{Demri&Lazic&Sangnier08a}.

\bibliographystyle{elsarticle-num}
\bibliography{biblio}

\end{document}